\documentclass[prd,aps,amsfonts,eqsecnum,superscriptaddress,nofootinbib,notitlepage,longbibliography,final]{revtex4-1}

\usepackage[english]{babel}
\usepackage[utf8]{inputenc}

\usepackage[a4paper, left=2cm, right=2cm, top=2cm, bottom=2cm]{geometry}

\usepackage{amsmath,physics,amsfonts}
\usepackage{graphicx,subcaption}
\usepackage[colorlinks=true, allcolors=blue]{hyperref}
\usepackage{float}
\usepackage{hhline}
\usepackage{mathtools}
\mathtoolsset{showonlyrefs}
\usepackage{soul}
\usepackage{tcolorbox}

\usepackage{pgfplots}
\pgfplotsset{compat=newest}

\usepgfplotslibrary{groupplots}

\definecolor{zero}{RGB}{180, 180, 180}
\definecolor{one}{RGB}{206, 104, 104}
\definecolor{two}{RGB}{231, 195, 146}
\definecolor{three}{RGB}{231, 217, 146}
\definecolor{four}{RGB}{209, 223, 141}
\definecolor{five}{RGB}{116, 185, 116}
\definecolor{six}{RGB}{126, 147, 165}
\definecolor{seven}{RGB}{146, 136, 176}
\definecolor{eight}{RGB}{166, 124, 166}

\captionsetup{format=plain, font=small, labelfont=bf, justification=Justified, singlelinecheck=on, width=.9\linewidth}

\usepackage{amsthm}

\newtheorem{theorem}{Theorem}[section]
\newtheorem{corollary}{Corollary}[theorem]
\newtheorem{lemma}[theorem]{Lemma}
\newtheorem{prop}[theorem]{Proposition}
\newtheorem{assumption}[theorem]{Assumption}

\theoremstyle{definition}
\newtheorem{definition}[theorem]{Definition}
\newtheorem{remark}[theorem]{Remark}

\DeclareMathOperator{\id}{\mathbf{1}}
\newcommand{\e}{\mathrm{e}}

\renewcommand\labelenumi{(\roman{enumi})}
\renewcommand\theenumi\labelenumi

\newcommand{\oscnorm}[1]{{\left\vert\kern-0.25ex\left\vert\kern-0.25ex\left\vert #1 
    \right\vert\kern-0.25ex\right\vert\kern-0.25ex\right\vert}}


\begin{document}

\title{Rapid Mixing of Quantum Gibbs Samplers for Weakly-Interacting Quantum Systems}
\author{Štěpán Šmíd}
\email{s (dot) smid23 (at) imperial (dot) ac (dot) uk}
\affiliation{%
Department of Computing, Imperial College London, United Kingdom
}%
\author{Richard Meister}
\affiliation{%
Department of Computing, Imperial College London, United Kingdom
}%
\author{Mario Berta}
\affiliation{%
Institute for Quantum Information, RWTH Aachen University, Germany
}%
\affiliation{%
Department of Computing, Imperial College London, United Kingdom
}%
\author{Roberto Bondesan}
\affiliation{%
Department of Computing, Imperial College London, United Kingdom
}%

\date{\today}


\begin{abstract}
    Dissipative quantum algorithms for state preparation in many-body systems are increasingly recognised as promising candidates for achieving large quantum advantages in application-relevant tasks. Recent advances in algorithmic, detailed-balance Lindbladians enable the efficient simulation of open-system dynamics converging towards desired target states. However, the overall complexity of such schemes is governed by system-size dependent mixing times. In this work, we analyse algorithmic Lindbladians for Gibbs state preparation and prove that they exhibit rapid mixing, i.e., convergence in time poly-logarithmic in the system size. We first establish this for non-interacting spin systems, free fermions, and free bosons, and then show that these rapid mixing results are stable under perturbations, covering weakly interacting qudits and perturbed non-hopping fermions. Further, we adapt the techniques from separable qudits to the fermionic setting and prove rapid mixing of the strongly-interacting regime of the Fermi-Hubbard model, for which we also explicitly evaluate the guaranteed parameter regimes.
    Our results constitute the first efficient mixing bounds for non-commuting qudit models and bosonic systems at arbitrary temperatures. Compared to prior spectral-gap-based results for fermions, we achieve exponentially faster mixing, further featuring explicit constants on the maximal allowed interaction strength. This not only improves the overall polynomial runtime for quantum Gibbs state preparation, but also enhances robustness against noise. Our analysis relies on oscillator norm techniques from mathematical physics, where we introduce tailored variants adapted to specific Lindbladians\,---\,an innovation that we expect to significantly broaden the scope of these methods.
\end{abstract}

\maketitle


\section{Overview}

\paragraph*{Quantum Gibbs states.} The theoretical study of thermalisation of quantum many-body systems weakly coupled to a heat bath at a fixed temperature dates back to the work of Davies \cite{davies1974markovian} in the seventies.\footnote{We refer to \cite{Mozgunov2020completelypositive,PhysRevB.102.115109} for modern discussions originating in quantum information.} Since then, the quantum Markov semigroups ${e^{t\mathcal{L}}}_{t\geq 0}$ generated by Lindbladians $\mathcal{L}$ of Davies type have been extensively studied in the mathematical physics literature. It has been shown for several commuting quantum Hamiltonians of interest that these Lindbladians exhibit \textit{fast} thermalisation, occurring within polynomial time in the system size, see, e.g., \cite{Temme_2013,temme2015thermalize,Kastoryano:2016aa} and references therein. The most direct approach to bounding the \textit{mixing time} is to analyse the Lindbladian spectrum and establish a constant lower bound on its gap. There are also techniques which can\,---\,in some cases\,---\,avoid the need of estimating the spectral gap of the full generator by treating the coherent and dissipative part separately \cite{Fang_2025}. Nonetheless, a polynomial growth of the mixing time can actually be a severe overestimate, and some Lindbladians can even thermalise \textit{rapidly}, that is, in poly-logarithmic time. Such bounds are not obtainable from the spectrum of the Lindbladian, but rather require a finer understanding of the structure of the dynamics. A technique using the \textit{modified logarithmic Sobolev inequality} has been developed to resolve this \cite{Kastoryano_2013} and used to prove rapid mixing of some types of systems, such as commuting Hamiltonians \cite{capel2020modified, kochanowski2024rapid,bardet2023rapid}. Fast converging open system dynamics then naturally lends itself for the task of algorithmic Gibbs state preparation. However, in spite of the progress on understanding thermalisation times, all these rapid mixing results were obtained with so-called Davies generators that face a major drawback when it comes to applications: the non-local nature of the transitions they induce prevents the existence of efficient simulation protocols.\\


\paragraph*{Algorithmic quantum Gibbs samplers.} Recent seminal works on quantum algorithms for Gibbs state preparation have managed to address this drawback, showing that it is possible to bring together algorithmic efficiency with an exact notion of \textit{quantum detailed balance} with the Gibbs state of interest (albeit a different notion than that obeyed by Davies generators) \cite{chen2023efficient, ding2024efficient, gilyen2024glauber}. These algorithmic Lindbladians mimic the thermalisation of Davies generators, but also retain simulability by enforcing locality of the jumps they induce. Implementing the dynamics they generate hence represents the (fully) quantum counterpart of classical \textit{Markov chain Monte Carlo} algorithms. Understanding the end-to-end complexity of the Gibbs state preparation again requires bounding the mixing times, which is most readily available through spectral gap estimates. This was previously achieved in the regime of high temperatures \cite{rouze2024efficient}, for weakly-interacting fermionic systems \cite{smid2025polynomial,Tong_2025Fast}, random Hamiltonians \cite{ramkumar2024mixingtimequantumgibbs, basso2024random}, and other problem specific cases \cite{ding2024toriccode,chen2024constanttemp,rajakumar2024contantlocal}. These rigorous bounds on mixing times are important, since they rigorously show the efficiency of quantum algorithms for quantum many-body state preparation, which is one of the most promising territories for an end-to-end quantum advantage \cite{lin2025dissipativepreparationmanybodyquantum}. As was the case for Davies generators, it is also possible to show rapid mixing in some cases of these novel algorithmic Lindbladians. This was again achieved for Gibbs state preparation in the regime of high temperatures \cite{rouze2024optimalquantumalgorithmgibbs}. Rapid mixing was then also proven (for some restricted cases) of ground state preparation of weakly-interacting spin and fermionic Hamiltonians \cite{zhan2025rapidquantumgroundstate}.\\


\paragraph*{Main results on mixing times.} We build upon these rapid mixing results, together with the fast mixing results for weakly-interacting fermionic systems \cite{smid2025polynomial,Tong_2025Fast}, and prove rapid mixing of quantum Gibbs samplers at any temperature for large classes of weakly-interacting quantum systems with different particle statistics. Specifically, we consider qudit spin systems, fermionic systems, and bosonic systems, and commence by proving that their respective non-interacting versions mix rapidly, which is (informally) summarised in the following proposition.\footnote{For bosonic Hamiltonians, the result on rapid mixing is restricted to the initial state being the vacuum state $\rho = \ket{\mathbf{0}}\bra{\mathbf{0}}$.}

\begin{tcolorbox}[colback=yellow!15,colframe=yellow!15, boxrule=0pt, left=2pt, right=2pt, top=2pt, bottom=2pt]
\begin{prop}[Rapid mixing of non-interacting systems, informal]
    For non-interacting quantum systems, including separable qudit spin Hamiltonians
    \begin{align}
        \text{$H_s = \sum_{i=1}^n h_i$ with each $h_i$ supported strictly at the qudit $i$,}
    \end{align} free fermionic Hamiltonians
    \begin{align}
        \text{$H_f = \sum_{i,j=1}^n c_i^\dagger M_{ij} c_j$ with $c_i$'s obeying $\{c_i,c_j^\dagger\} = \delta_{ij}$,}
    \end{align}
    and free bosonic Hamiltonians
    \begin{align}
        \text{$H_b = \sum_{i,j=1}^n a_i^\dagger h_{ij} a_j$ with $a_i$'s obeying $[a_i,a_j^\dagger] = \delta_{ij}$,}
    \end{align}
    the algorithmic Lindbladian $\mathcal{L}$ for Gibbs state preparation mixes rapidly in logarithmic time, \begin{equation}
        t^{(\textup{type})}_\textup{mix}(\epsilon) \leq c_1^{(\textup{type})} \cdot \log \left( c_2^{(\textup{type})} \cdot \frac{n}{\epsilon} \right)\,,
    \end{equation} with $c_1$ and $c_2$ being size-independent constants specified later, where $c_1 \sim \frac{1}{\operatorname{gap}(\mathcal{L})}$, and $\epsilon$ specifying the accuracy in trace distance.
\end{prop}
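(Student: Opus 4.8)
The plan is to reduce the many‑body mixing-time statement to a single-mode (single-qudit / single-fermionic-mode / single-bosonic-mode) analysis, exploiting that all three non-interacting Hamiltonians are sums of mutually commuting local terms, and that the algorithmic Lindbladian for such a Hamiltonian tensorises (or approximately tensorises) over the eigenmodes of $H$. Concretely, I would first diagonalise each Hamiltonian: $H_s$ is already a sum of single-site terms $h_i$; $H_f = \sum_{i,j} c_i^\dagger M_{ij} c_j$ is brought to $\sum_k \lambda_k d_k^\dagger d_k$ by a single-particle unitary (Bogoliubov/basis change on the $c$'s), and similarly $H_b = \sum_k \mu_k b_k^\dagger b_k$. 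Because the jump operators of the algorithmic Gibbs sampler are built from Heisenberg-evolved local operators filtered against the Bohr frequencies of $H$, in the non-interacting case the energy differences are supported on the single-mode spectra, so the full Lindbladian decomposes as $\mathcal{L} = \bigoplus_k \mathcal{L}_k$ (a sum of commuting terms acting on disjoint tensor factors), each $\mathcal{L}_k$ being the Davies-like generator of one qudit / one fermionic mode / one bosonic mode with its own Gibbs state $\propto e^{-\beta \lambda_k n_k}$.

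**The key steps, in order.** (i) Establish the tensorisation/decomposition $e^{t\mathcal{L}} = \bigotimes_k e^{t\mathcal{L}_k}$ exactly in the diagonal basis, and note that the global Gibbs state factorises accordingly. (ii) Prove a size-independent spectral gap for each single-mode generator $\mathcal{L}_k$: for a finite-dimensional qudit this is a direct computation of a $d\times d$ (or $d^2\times d^2$) detailed-balance generator whose gap is bounded below by a constant depending only on $d$, $\beta$, $\|h_i\|$ and the filter parameters; for a single fermionic mode the relevant space is $2$-dimensional and the same holds; for a single bosonic mode one must show the gap does not close as the (infinite-dimensional) occupation space is taken into account — here I would use that the vacuum-sector dynamics, or more generally the restriction to low-occupation sectors, has a gap bounded below uniformly, invoking the footnote's restriction to $\rho = \ket{\mathbf 0}\bra{\mathbf 0}$. (iii) Upgrade the per-mode gap to a rapid-mixing statement: a uniform gap plus boundedness gives $\|e^{t\mathcal{L}_k}(\rho_k) - \sigma_k\|_1 \le C\, e^{-\mathrm{gap}\cdot t}$, but a naive tensor-product of $n$ such bounds only yields $t_{\mathrm{mix}} \sim \mathrm{gap}^{-1}\log(n/\epsilon)$ via a union-bound/subadditivity of trace distance over the $n$ factors — which is exactly the claimed logarithmic scaling. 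Alternatively, and more in the spirit of the abstract, I would run the oscillator-norm machinery: show that $\mathcal{L}$ is a sum of uniformly-bounded, finite-range (here: strictly local) terms, so that the Heisenberg-picture evolution $e^{t\mathcal{L}^\dagger}$ contracts in a weighted oscillator norm $\oscnorm{\cdot}$ with a system-size-independent rate, and then convert the oscillator-norm decay of observables into trace-distance decay of states, picking up only the stated $\log n$ from the number of local terms needed to resolve an arbitrary observable.

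**The main obstacle.** The genuinely delicate point is the bosonic case: the single-mode generator acts on an infinite-dimensional Fock space, the jump operators $a_k, a_k^\dagger$ are unbounded, and one must show both that the dynamics is well-defined and that the mixing rate stays bounded below — this is presumably why the result is restricted to the vacuum initial state, where one can control the trajectory inside a low-occupation subspace (or use a Lyapunov/oscillator-norm argument with a number-operator weight to confine the dynamics). For the qudit case the subtlety is milder but nonzero: the algorithmic Lindbladian for $H_s$ does not literally factor unless one checks that the secular/Bohr-frequency filtering does not couple distinct sites at finite filter width; showing the cross-terms are either exactly zero (by support) or controllably small is the place where "strictly supported at qudit $i$" is used. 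Once tensorisation (exact or approximate) and the uniform single-mode gap are in hand, the logarithmic $\log(n/\epsilon)$ dependence and the identification $c_1 \sim 1/\mathrm{gap}(\mathcal{L})$ follow by the standard subadditivity-of-trace-distance argument, and the explicit constants $c_1^{(\mathrm{type})}, c_2^{(\mathrm{type})}$ are read off from the single-mode computation.
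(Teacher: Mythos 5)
Your plan for the qudit and fermionic cases is essentially the paper's: exact tensorisation of the Lindbladian over single-site terms (qudits) or over Bogoliubov modes (fermions, using invariance of the Lindbladian under unitary mixing of the jump operators), a per-mode gap, and a Heisenberg-picture telescoping to pick up only a factor of $n$ for arbitrary, possibly entangled, initial states. Two caveats. First, your route (a) ``subadditivity of trace distance over the $n$ factors'' only works verbatim for product initial states; for general $\rho$ you need the stabilised (observable-side) version, i.e.\ precisely the oscillator-norm telescoping $O-C(O)=\delta_1(O)+\sum_i C_{[i-1]}\delta_i(O)$, so route (b) is not optional. Second, even in the non-interacting case the plain quasi-derivation $\delta_i$ does not satisfy a clean eigenvalue relation with $\mathcal{L}_i^0$ at finite temperature: the paper has to introduce \emph{partial} quasi-derivations $\delta_i^\alpha(O)=\delta_i(F_i^\alpha)\langle F_i^\alpha,O\rangle_{\sigma_i^0}$ built from the KMS-orthonormal eigenbasis of $\mathcal{L}_i^0$ (for qudits), and a \emph{projected} fermionic oscillator norm with $P_i,Q_i$ (because the diagonal and off-diagonal components decay at rates $2q^2\cosh$ vs.\ $q^2\cosh$). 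These devices are where the constants $c_2^{(\mathrm{type})}$ (e.g.\ $4d^4\e^{2\beta\Delta E_\star}$) come from; your sketch waves at ``run the oscillator-norm machinery'' without supplying them, but the idea is compatible. Your worry about Bohr-frequency filtering coupling distinct sites is a non-issue: with $h_i$ and $A_i$ strictly local, $\e^{iHt}A_i\e^{-iHt}=\e^{ih_it}A_i\e^{-ih_it}$, so the decomposition is exact regardless of the filter width.

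The genuine gap is the bosonic case. Your proposed strategies\,---\,a uniform gap on ``low-occupation sectors'' or a number-operator-weighted oscillator norm\,---\,do not work as stated: the jump operators $x_i,p_i$ do not preserve any occupation sector, so there is no invariant low-occupation subspace to restrict to, and the standard gap-to-trace-distance conversion involves $\|\sigma_\beta^{-1/2}\|$, which is infinite on Fock space. The paper's actual argument is different in kind: starting from the vacuum (or any Gaussian state), the evolution under the quadratic Lindbladian preserves Gaussianity, so the state is fully described by its first moments $\mathbf{m}(t)$ and covariance matrix $\Gamma(t)$, which satisfy the linear equations $\dot{\mathbf{m}}=A\mathbf{m}$ and $\dot\Gamma=A\Gamma+\Gamma A^T+D$ and can be solved in closed form; convergence of $\Gamma(t)$ to $\Gamma_{\sigma_\beta}=\tfrac12\coth(h\beta/2)\oplus\tfrac12\coth(h\beta/2)$ at rate set by the rapidities $q(\epsilon_i)^2\sinh(\beta\epsilon_i/2)$ is then converted into a trace-norm bound via an explicit Gaussian-state perturbation estimate. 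Without this reduction to phase-space moments (or an equivalent replacement), your bosonic argument does not go through, and this is exactly why the proposition is restricted to the vacuum (more generally, convex combinations of Gaussian) initial states rather than being a statement about the mixing time over all $\rho$.
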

\end{tcolorbox}

Next, for spin and fermionic systems, we consider the effect of perturbing the Hamiltonians by a (quasi-)local interaction term $\lambda \cdot V$ with a coupling strength $\lambda$. In the case of spin systems, we prove that the Lindbladian remains rapidly mixing as long as $\lambda$ is small enough, while in the fermionic case we obtain this result only when the original Hamiltonian did not include hopping between fermions. This is summarised in the following theorem.

\begin{tcolorbox}[colback=yellow!15,colframe=yellow!15, boxrule=0pt, left=2pt, right=2pt, top=2pt, bottom=2pt]
\begin{theorem}[Rapid mixing of perturbed systems, informal]
    For separable qudits under (quasi-local) perturbation, given by
    \begin{align}
        H_s(\lambda) = \sum_{i=1}^n h_i + \lambda \cdot V,
    \end{align}
    and for non-hopping free fermions under (quasi-local) perturbation, given by
    \begin{align}
        H_f(\lambda) = \sum_{i=1}^n \epsilon_i c_i^\dagger c_i + \lambda \cdot V,
    \end{align}
    there exist corresponding maximal interaction strengths $\lambda_\textup{max}$ such that for $|\lambda| \leq \lambda_\textup{max}$, the Lindbladian $\mathcal{L}$ remains rapidly mixing. This holds in any fixed dimension, for any geometry, and at any constant temperature. Further, the constant $\lambda_\textup{max}$ can be explicitly evaluated for any given system. 
\end{theorem}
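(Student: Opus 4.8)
The plan is to reduce the perturbed-mixing statement to a stability result for the oscillator (interaction-picture) norm applied to the algorithmic Lindbladian, bootstrapping from the non-interacting rapid mixing already established in the Proposition. The starting point is that for the unperturbed systems we have a Lindbladian $\mathcal{L}_0$ with a constant spectral gap and, more importantly, a contraction estimate in a suitable oscillator norm $\oscnorm{\cdot}$: there is a constant $\mu>0$ and a finite "norm-equivalence" constant so that $\oscnorm{e^{t\mathcal{L}_0}(\rho-\sigma)} \le C\, e^{-\mu t}\,\oscnorm{\rho-\sigma}$, and this norm in turn dominates (up to $\mathrm{poly}(n)$ factors) the trace distance. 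I would write the perturbed generator as $\mathcal{L}_\lambda = \mathcal{L}_0 + \lambda\,\mathcal{K}$, where $\mathcal{K}$ collects all the $\lambda$-dependent pieces — crucially, the jump operators of the algorithmic Gibbs sampler depend on $H(\lambda)$ through the operator Fourier transform / weight function, so $\mathcal{K}$ is itself a sum of (quasi-)local superoperators with controlled oscillator norm, uniformly in $n$. The content of $\lambda_{\textup{max}}$ is precisely the threshold below which the perturbation series for $e^{t\mathcal{L}_\lambda}$, resummed via Duhamel, still converges to a contraction.

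**Key steps, in order.** (i) Establish the unperturbed oscillator-norm contraction for each of the two models: for separable qudits this factorises over sites and follows from the single-qudit analysis plus tensorisation of the oscillator norm; for non-hopping free fermions the Hamiltonian is already diagonal in the number operators $c_i^\dagger c_i$, so again the dynamics decouples site-by-site and the same argument applies — this is exactly why hopping must be excluded, since hopping would couple modes and destroy the product structure needed for the clean single-site bound. (ii) Control the perturbation superoperator: show $\mathcal{K}$ can be decomposed as $\sum_X \mathcal{K}_X$ with each $\mathcal{K}_X$ supported near the support of $V$'s terms, and bound $\oscnorm{\mathcal{K}_X}$ by a constant times $\|V_X\|$ times a temperature-dependent factor coming from the jump-operator weight functions; the (quasi-)locality and the decay of the operator Fourier transform of the coupling function are what make this sum finite per site. (iii) Run the Duhamel/Dyson expansion: $e^{t\mathcal{L}_\lambda} = e^{t\mathcal{L}_0} + \lambda\int_0^t e^{(t-s)\mathcal{L}_0}\mathcal{K}\,e^{s\mathcal{L}_\lambda}\,ds$, iterate, and bound the $k$-th term in oscillator norm by $(C|\lambda|\,\|\mathcal{K}\|_{\mathrm{osc}})^k$ times an integral of $e^{-\mu(\cdot)}$ factors; summing the geometric series gives, for $|\lambda| < \lambda_{\textup{max}} := \mu/(C\,\|\mathcal{K}\|_{\mathrm{osc}})$ (this is the explicit constant), a perturbed contraction $\oscnorm{e^{t\mathcal{L}_\lambda}(\rho-\sigma_\lambda)} \le C'\, e^{-\mu' t}\,\oscnorm{\rho-\sigma_\lambda}$ with $\mu' = \mu - C|\lambda|\,\|\mathcal{K}\|_{\mathrm{osc}} > 0$, where $\sigma_\lambda$ is the (unique, by detailed balance) fixed point, the perturbed Gibbs state. (iv) Convert back to trace distance: since $\oscnorm{\cdot}$ dominates $\tfrac12\|\cdot\|_1$ up to $\mathrm{poly}(n)$, and the initial deviation in oscillator norm is at most $\mathrm{poly}(n)$, we get $t_{\textup{mix}}(\epsilon) \le \mu'^{-1}\log(\mathrm{poly}(n)/\epsilon)$, i.e. rapid mixing. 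The claims about "any dimension, any geometry, any constant temperature" follow because none of the constants $C,\mu,\|\mathcal{K}\|_{\mathrm{osc}}$ in this chain carry geometric or dimensional dependence beyond the fixed lattice-dimension-dependent combinatorial factor absorbed into $C$.

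**Main obstacle.** The hard part is step (ii) together with making step (i) genuinely uniform in $n$: the algorithmic Lindbladian's jump operators are not local operators but are built from the operator Fourier transform of the Hamiltonian evolution, so a perturbation $\lambda V$ of $H$ propagates into the jumps in a nontrivial, globally-supported way unless one exploits quasi-locality of Heisenberg dynamics (Lieb--Robinson-type bounds) to approximate the perturbed jumps by quasi-local ones with exponentially small tails. Getting an oscillator-norm bound on $\mathcal{K}$ that is both finite and small enough to yield a usable $\lambda_{\textup{max}}$ requires choosing the oscillator norm's weights tuned to the decay of the sampler's filter function and carefully tracking how the bath/weight functions' smoothness controls the locality of $\mathcal{K}$; this is precisely the "tailored variants of oscillator norms adapted to specific Lindbladians" advertised in the abstract, and it is where the bulk of the technical work — and the explicit evaluation of $\lambda_{\textup{max}}$ — resides.
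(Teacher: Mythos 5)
Your overall skeleton is the paper's: an oscillator-norm contraction for the unperturbed generator, a locality/Lieb--Robinson analysis showing that the $\lambda$-dependence of the operator-Fourier-transformed jumps and coherent term is quasi-local with $\|\mathcal{L}_i-\mathcal{L}_i^0\|_{\infty\to\infty}=O(|\lambda|)$, an integral (Duhamel/Gr\"onwall) inequality, and $\lambda_{\textup{max}}$ of the form ``gap over perturbation-response constant''. Your reason for excluding hopping (the diagonalising Bogoliubov transformation is non-local and breaks compatibility with the local perturbation) is also the paper's. However, there are two concrete gaps. First, your step (i) asserts that the unperturbed contraction ``factorises over sites'' by tensorisation, but the standard quasi-derivation $\delta_i(O)=O-C_i(O)$ does \emph{not} close under a single-site detailed-balanced generator at finite temperature: $\delta_i\mathcal{L}_i^0\neq \lambda\,\delta_i$ because $\mathcal{L}_i^{0}$ is unital but not trace-annihilating in the Heisenberg picture (that only happens at $\beta=0$, the depolarising case treated in the prior work). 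The Gr\"onwall argument therefore cannot be run on $\sum_i\|\delta_i(O(t))\|$ directly. The paper's actual key move is to \emph{refine the seminorm itself}: for qudits one decomposes $\delta_i=\sum_\alpha\delta_i^\alpha$ along the KMS-orthonormal eigenbasis $F_i^\alpha$ of $\mathcal{L}_i^0$ so that $\delta_i^\alpha\mathcal{L}_i^0=\lambda_i^\alpha\delta_i^\alpha$, and for fermions one splits into diagonal/off-diagonal sectors $P_i,Q_i$ which decay at different rates. You place the ``tailored oscillator norm'' difficulty entirely in the locality analysis of $\mathcal{K}$, but it actually lives in making the unperturbed step close at all.

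Second, your Dyson resummation with radius $\mu/(C\|\mathcal{K}\|_{\mathrm{osc}})$ hides the question of what $\|\mathcal{K}\|_{\mathrm{osc}}$ is. As a naive superoperator norm on the oscillator-seminormed space it is $O(n)$, since every local term of the Lindbladian changes under the perturbation; to get an $n$-independent constant one must prove the finer statement that $\|\delta_i^\alpha(\mathcal{L}_i-\mathcal{L}_i^0)O\|$ and $\|[\delta_i^\alpha,\mathcal{L}_j]O\|$ are each bounded by $\sum_{\ell,\alpha'}(\text{coeff})\,\|\delta_\ell^{\alpha'}O\|$ with an interaction matrix whose \emph{column sums} are bounded by a constant $\eta$ (the paper's Lemma \ref{lemma:eta}, combining the Lieb--Robinson truncation with a telescoping of quasi-derivations over balls $B_j(r)$). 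Relatedly, the paper does not compare $e^{t\mathcal{L}_\lambda}$ to $e^{t\mathcal{L}_0}$ globally; it interpolates site by site, evolving the complement of site $i$ with the \emph{full perturbed} $\sum_{j\neq i}\mathcal{L}_j$ (a contraction in operator norm), so that the norm-equivalence constant $d^2 d_\star$ between the refined seminorm and the trace distance is paid only once at the end rather than at every order of the expansion. Without these two ingredients your plan does not yield an $n$-independent $\lambda_{\textup{max}}$.
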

\end{tcolorbox}

Finally, we also adapt the techniques used for weakly-interacting qudit systems to the fermionic case, and prove rapid mixing for the strongly-interacting regime of the Fermi-Hubbard model. This can be summarised in the following proposition:\enlargethispage{1cm}

\begin{tcolorbox}[colback=yellow!15,colframe=yellow!15, boxrule=0pt, left=2pt, right=2pt, top=0pt, bottom=0pt]
\begin{prop}[Rapid mixing of strongly-interacting Fermi-Hubbard model]
    For the Fermi-Hubbard model given by \begin{equation}
    H_\textup{FH} = -t\sum_{\langle i,j\rangle, \sigma} \left( c^\dagger _{i,\sigma} c_{j,\sigma} + c^\dagger_{j,\sigma}c_{i,\sigma} \right) + U\sum_{i=1}^n N_{i,\uparrow} N_{i,\downarrow}\,,\end{equation} where $\langle i,j\rangle$ represents nearest neighbours on the lattice, $\sigma \in \{\uparrow,\downarrow\}$ represents the spin of the fermions, and $N_{i,\sigma} = c^\dagger _{i,\sigma} c_{i,\sigma}$, there exists a maximal hopping strength $t_\textup{max}$ (which depends on the fixed interaction strength $U$ and temperature $\beta$) such that for $|t|\leq t_\textup{max}$, the corresponding Lindbladian $\mathcal{L}$ mixes rapidly. Further, the constant $t_\textup{max}$ can be explicitly evaluated and is presented on Figure \ref{fig: certified regimes for FH}.
\end{prop}
\end{tcolorbox}


\paragraph*{Algorithmic corollaries.} Rapid mixing times have immediate consequences for the complexities of the corresponding quantum Gibbs samplers. Following the construction of \cite{ding2024efficient}, a logarithmic mixing time brings down the algorithmic complexity to \begin{align}
\text{$\widetilde{\mathcal{O}}(n^2 \operatorname{polylog}(1/\epsilon))$ Hamiltonian simulation time with $\mathcal{O}(n)$ qubits for Gibbs state preparation.}
\end{align} Efficient thermal state preparation can be then readily applied to the calculation of the \textit{partition function} for the corresponding systems, as explained in \cite{rouze2024optimalquantumalgorithmgibbs,smid2025polynomial}, yielding
\begin{align}
\text{$\widetilde{\mathcal{O}}(n^4 /\epsilon^2)$ Hamiltonian simulation time with $\mathcal{O}(n)$ qubits for the estimation of free energies.}
\end{align}
As such\,---\,and generalising our previous work on the atomic limit case \cite{smid2025polynomial}\,---\,our results provide the first provably efficient quantum algorithm for preparing thermal states of weakly-interacting qudit systems at any temperature. This includes for example spin Hamiltonians like the Heisenberg model in a strong external magnetic field. To the best of our knowledge, we also provide the first rapid mixing result for any bosonic Lindbladian; the difficulty of which stems from the non-boundedness of the bosonic ladder operators appearing therein. The results for fermionic systems are then applicable for example to systems where the chemical potential is the dominant term.\\


\paragraph*{Proof ideas.} As rapid mixing is not accessible by studying the spectrum of the generator on its own, and the modified log-Sobolev inequality has not yet been established in any case of these algorithmic Lindbladians, our techniques rely primarily on the study of the \textit{oscillator norm}, which was originally proposed in \cite{Majewski1995,Richter_1996}, and more recently used in \cite{temme2015thermalize,rouze2024optimalquantumalgorithmgibbs,zhan2025rapidquantumgroundstate}. The main technical novelty of our work lies in the generalisation and adaptation of the oscillator norm to each particular Lindbladian we consider, i.e.~making it problem-specific, and hence greatly extending its original applicability. We expect these ideas to pave the way for extensive follow-up work on understanding rapid mixing of many Lindbladians of interest. This technique further allows for an explicit evaluation of the maximal covered interaction strengths, resolving the vagueness of methods based on topological stability of the spectral gap \cite{rouze2024efficient,smid2025polynomial,Tong_2025Fast}.\\


\paragraph*{Impact of rapid mixing.} Faster mixing times also lead to shallower quantum circuits, making them more suitable for near term quantum devices. On top of that, it has been shown that the dynamics of rapidly mixing Lindbladians is stable under perturbation \cite{Cubitt_2015}, further improving their potential resilience to noise. Last but not least, the study of rapid mixing is of general interest for several other reasons, like implications on exponential decay of correlations in the steady states \cite{Kastoryano_2013_correl}, and applications to self-correcting quantum memories \cite{bergamaschi2025rapidmixinggibbsstates}.\\


\paragraph*{Conclusion.} Having at hand the demonstrated end-to-end polynomial time quantum algorithms for resolving the phase diagrams, promises strong quantum advantages in quantum simulation compared to state-of-the-art classical methods (although larger scale numerics for classically hard regimes would reveal more on that point). We emphasise that in contrast to popular quantum phase estimation based methods around ground state energy estimation\,---\,which suffer from the Ansatz state bottleneck\,---\,the presented complexities here are truly end-to-end. We believe our work constitutes a paradigm shift in the rigorous complexity analysis of algorithms for quantum simulation: It significantly advances the state of the art by showing that recent digital schemes based on Lindbladian dynamics can be made efficient for a wide class of physically relevant and classically hard models, offering pathways towards practical quantum advantage.\\


\paragraph*{Manuscript.} The rest of the paper is divided as follows: In Section \ref{sec:Background}, we briefly recapitulate the background for quantum Gibbs state preparation and rapid mixing. In the following, we present our novel results on the mixing times, covering qudits (Section \ref{sec:Spins}), fermions (Section \ref{sec:fermions}), and bosons (Section \ref{sec:bosons}). Finally, we provide an outlook (Section \ref{sec:Outlook}), where we discuss in particular some further thoughts on rapid mixing of general weakly-interacting fermions.


\section{Background}\label{sec:Background}

\subsection{Algorithmic quantum Gibbs samplers}

In this section, we shall briefly summarise the necessary information about the detailed-balanced Lindbladians, which serve as the generators for the algorithmic Gibbs state preparation, and whose mixing times we aim to bound. This summary is mostly taken from the full technical version of our previous work \cite{smid2025polynomial}. For a more in-depth exposition to these algorithms, we refer interested readers to the frameworks of \cite{chen2023efficient,ding2024efficient}, as well as the introductory review article \cite{lin2025dissipativepreparationmanybodyquantum}.\\

Quantum Gibbs sampling is the task of preparing the thermal state $\sigma_\beta = e^{-\beta H}/Z$ for a quantum Hamiltonian $H$. We first define the quantum Markov semigroup $\mathcal{P}_t$ as the semigroup of completely positive unital maps, and work with the generator of the dynamics called the Lindbladian $\mathcal{L}$: $\mathcal{P}_t = e^{t\mathcal{L}}$. Using Kraus' theorem, such a generator can be characterised by the following form
\begin{equation}\label{eqn:Lindbladian}
    \mathcal{L} [O] = i[G,O] + \sum_{a \in \mathcal{A}} \left( L_a^\dagger O L_a - \frac{1}{2} \{ L_a ^\dagger L_a, O \} \right) \,.
\end{equation} 
Here $L_a$ will be referred to as the Lindblad operators and $G=G^\dagger$ as the coherent term.

To perform quantum Gibbs sampling, we construct a quantum Markov semigroup such that $\lim_{t\to\infty}
\mathcal{P}_t^\dagger(\rho_0) = \sigma_\beta$
where $\rho_0$ is an arbitrary initial state and $\Phi^\dagger$ for a superoperator $\Phi$ is the adjoint w.r.t.~the Hilbert-Schmidt inner product: $\langle A,B\rangle=\Tr(A^\dagger B)$.
To define quantum detailed balance, we shall use the Kubo-Martin-Schwinger (KMS) inner product.
Given a full rank state $\sigma>0$, this is defined for two operators $A,B$ as
$\langle A,B\rangle_{\sigma}
    =
    \Tr(
    A^\dagger
    \mathcal{G}_\sigma
    (B))
    \,,$
where 
$
\mathcal{G}_\sigma(A) = 
\sigma^{1/2}
A
\sigma^{1/2}    
\,.
$

\begin{definition}(Quantum Detailed Balance)
    A Lindbladian $\mathcal{L}$ satisfies the KMS quantum detailed balance (QDB) condition if $\mathcal{L}$ is self-adjoint with respect to the KMS inner product.
\end{definition}

Since $\mathcal{L}[\id]=0$, we have that if $\mathcal{L}$ satisfies QDB,
\begin{align}
    0
    =
    \langle A, \mathcal{L}[\id]\rangle_\sigma
    =
    \langle \mathcal{L}[A], \id\rangle_\sigma
    =
    \langle \mathcal{L}[A], \sigma\rangle
    =
    \langle A, \mathcal{L}^\dagger[\sigma]\rangle
    \,,
\end{align}
for any operator $A$.
This shows that $\mathcal{L}^\dagger[\sigma]=0$
so that $\sigma$ is a stationary state of the dynamics generated by $\mathcal{L}^\dagger$. We can write the QDB condition more explicitly as:
\begin{equation}
\label{eq:qdb}
    \mathcal{L}
    = 
    \mathcal{G}_\sigma^{-1}
    \circ
    \mathcal{L}^\dagger 
    \circ
    \mathcal{G}_\sigma \,.
\end{equation}
Note that in general $\mathcal{L}$ is a non-Hermitian operator, however the self-adjointness with the KMS inner product guarantees real spectrum (which will be discussed closer in Lemma \ref{lemma:properties Falpha}).

The efficiency of the Lindbladian dynamics to prepare a thermal state is governed by its mixing time:
\begin{definition}
    The mixing time of the Lindbladian $\mathcal{L}^\dagger$ is \begin{equation}
        t_\textup{mix}(\epsilon) = \inf \left\{ t \geq 0 \left|\, \forall \rho: \left\|e^{t\mathcal{L}^\dagger}[\rho] - \sigma_\beta\right\|_{\Tr} \leq \epsilon \right.\right\}\,,
    \end{equation} where $\|A\|_{\Tr}
    =\Tr(\sqrt{A^\dagger A})
    $ denotes the trace norm. 
\end{definition}

\begin{definition}
    A mixing time scaling polynomially in the system size $n$ will be referred to as \textit{fast}, which can be equivalently characterised by the following contractivity of the Lindbladian:
\begin{align}
    \|e^{t\mathcal{L}^\dagger}[\rho]-\sigma\|_{\Tr} \leq \exp(\operatorname{poly}(n)) e^{- t/\operatorname{poly}(n)}\,.
\end{align}
A mixing time scaling poly-logarithmically will be referred to as \textit{rapid}, which can be equivalently characterised by
\begin{align}
    \|e^{t\mathcal{L}^\dagger}[\rho]-\sigma\|_{\Tr} \leq \operatorname{poly}(n) e^{-\alpha t}
\end{align}
for some constant $\alpha >0$.
\end{definition}

Next, we review the construction of a Lindbladian that satisfies QDB for $\sigma=\sigma_\beta$.
We follow the construction of \cite{ding2024efficient}\,---\,the main difference from the construction of \cite{chen2023efficient} is that it allows one to use a finite number of Lindblad operators. The construction is given in terms of a set of self-adjoint operators $\{A_a\}_{a\in \mathcal{A}}$ called jump operators and filter functions 
$\{\hat{f}^a(\nu)\}_{a\in \mathcal{A}}$
obeying
\begin{equation}
    \label{eq:q_fcn}
    \hat{f}^a(\nu) = q^a (\nu) e^{-\beta \nu /4},\quad q^a(-\nu) = \overline{q^a(\nu)} 
    \,.
\end{equation}
A popular filter function that we will mostly focus on below is the Gaussian one
\begin{align}
\label{eq:Gaussian filter}
    \hat f(\nu) = e^{-(\beta\nu+1)^2/8 + 1/8}\,,\quad 
    f(t) = \frac{1}{2\pi} \int_{-\infty}^{\infty}\hat{f}(\nu)e^{-i\nu t}\ \dd \nu
    =
    \sqrt{\frac{2}{\pi\beta^2}}
    \exp\left(-\tfrac{2}{\beta^2}
    \left(t-i\tfrac{\beta}{4}\right)^2\right)
    \,,
\end{align}
so that $f(t+i\beta/4)$ is real and positive. 
The Lindblad operators are then given by the filtered operator Fourier transforms of the jump operators \begin{align}
    L_a = \hat{f} ^a (\operatorname{ad}_H) A^a
    = \sum\limits_{\nu \in B_H} \hat{f}^a (\nu) A^a _\nu
    = 
    \int_{-\infty} ^{\infty} f^a (t)  e^{iHt} A^a e^{-iHt}\ \dd t \,,
\end{align}
where $B_H=\{\nu=E_i-E_j\ |\ E_i \in \operatorname{spec}(H)\}$ is the set of Bohr frequencies, and 
\begin{align}
    A_\nu = \sum_{i,j|E_i-E_j=\nu}
    P_iAP_j
    \,,\quad 
    A=\sum_{\nu\in B_H}A_\nu
    \,,\quad 
    A_\nu^\dagger = A_{-\nu}
    \,,
\end{align}
with $P_i$ the projector onto the eigenspace of eigenvalue $E_i$. Here, $\operatorname{ad}_H X = [H,X]$ represents the adjoint endomorphism of $H$. Note that $\operatorname{ad}_H A_\nu = [H,A_\nu]=\nu A_{\nu}$.
Further, $\hat{f}^a(\nu)$
denotes the Fourier transform of $f^a(t)$.
The coherent term is given by \begin{align}\hspace{-0.4cm}
    G &= -i \tanh \circ \log (\Delta ^{1/4} _{\sigma_\beta}) \left( \frac{1}{2} \sum_{a \in \mathcal{A}} L_a ^\dagger L_a \right)
    = \frac{i}{2} \sum\limits_{a \in \mathcal{A}} \sum\limits_{\nu \in B_H} \tanh\left( \frac{\beta \nu}{4} \right) (L_a ^\dagger L_a)_\nu
    = \sum_{a \in \mathcal{A}} \int_{-\infty}^\infty g(t) e^{iHt} (L_a ^\dagger L_a) e^{-iHt}\ \dd t \hspace{-0.5cm}\label{eqn:Coherent term}
\end{align} 
with
$
    \hat{g} (\nu) = \frac{i}{2} \tanh\left( \frac{\beta \nu}{4} \right) \cdot \kappa(\nu)\,,
$ where $\Delta_\rho [X] = \rho X \rho^{-1}$ is the modular superoperator, and $\kappa(\nu)$ is a sort of smooth indicator function, obeying $\kappa(\nu) = 1$ on $\nu \in [-2\|H\|,2\|H\|]$, and decaying smoothly and rapidly afterwards, so that it belongs to the class of Gevrey functions as per \cite[Equation (3.17)]{ding2024efficient}.
Here and in the following, $\|\cdot \|$ denotes the operator norm.
In \cite{ding2024efficient}, it was proven that the Lindbladian so defined satisfies KMS-QDB with the thermal state $\sigma_\beta$.\\

Reference \cite[Theorem 34]{ding2024efficient} also proves that this Lindbladian evolution can be simulated on a quantum computer 
up to time $t$ 
with Hamiltonian simulation time complexity
\begin{equation}
\label{eq:runtime_quantum}
\tilde{\mathcal{O}}(t (\beta + 1) |\mathcal{A}|^2 \log^{1+s}(1/\epsilon))\,,
\end{equation} 
where now $\epsilon$ is the precision of the channel in the diamond norm, and $s \geq 1$ is the Gevrey order of the filter function $\hat{f}(\nu)$ (which is for example equal to $1$ for the Gaussian filter).
This assumes normalisation of the jump operators of the form $\max_{a \in \mathcal{A}} \|A^a\| \leq 1$, access to their block encodings, access to controlled Hamiltonian simulation, and preparation oracles for the filter function $f(t)$ (where $f^a(t) = f(t)$ is taken to be the same for all $a \in \mathcal{A}$) and coherent function $g(t)$.


\subsection{Rapid mixing and oscillator norm}

To show fast mixing of a Lindbladian, the most direct approach is to study its eigenspectrum. More specifically, by proving that the spectral gap\footnote{Between the highest and second highest eigenvalue.} of the Lindbladian decays at most polynomially in the system size, we can show fast mixing using Hölder's inequality as \begin{align}
    \left\| e^{\mathcal{L}^\dagger t} [\rho] - \sigma_\beta \right\|_{\Tr} \leq e^{-\Delta(\mathcal{L}^\dagger)t} \left\| \sigma_\beta^{-1/2} \right\| \|\rho-\sigma_\beta\|_{\Tr}.
\end{align} A key intricacy when trying to show rapid mixing is that the knowledge of the spectrum of the Lindbladian on its own does not suffice. Instead, it requires a finer understanding of the dynamics of the quantum Markov semigroup. One such approach is based on the modified (or quantum) logarithmic Sobolev inequality \cite{Kastoryano_2013}. This inequality can be equivalently stated as the decay rate of the relative entropy of the evolved state with respect to the steady state. However, this inequality has been established only in a very limited number of cases, none of which included the algorithmic Lindbladians we consider here for Gibbs sampling. Instead, we will focus here on a so-called \textit{oscillator norm}, originally defined in \cite{Richter_1996,Majewski1995}.

\begin{definition}\label{def:original oscillator norm}
    Let $B(H)$ be the set of bounded operators over a Hilbert space $H$. Assume we have a set of bounded linear maps $\delta_I$ on $B(H)$ called \textit{quasi-derivations}, such that $\delta_I(\id)=0$, the set $\{O\in B(H): \sum_I \|\delta_I(O)\| < \infty\}$ is dense in $B(H)$ under spectral norm, and $\|O-C(O)\| \leq \sum_I \|\delta_I(O)\|$ where $C(O)$ is the normalised trace of $O$. Then \begin{align}
        \oscnorm{O} = \sum_I \|\delta_I(O)\|
    \end{align} defines a seminorm on $B(H)$ called the \textit{oscillator norm}. Specifically, for qubit spin systems, we take the index set to be the individual qubits and take the quasi-derivations to be $\delta_i(O) = O - \frac{1}{2}\Tr_i(O)$ for each qubit $i$.
\end{definition}

Note that the qubit-specific definition of quasi-derivations obeys the general condition $\|O-C(O)\| \leq \sum_I \|\delta_I(O)\|$ because of the following:
Let $C_I(A)=2^{-|I|}\Tr_{I}(A)$ denote the normalised trace on subsystem $I$. Note that $C_I \circ C_J = C_{I\cap J}$. Then we have the following telescoping sum: \begin{align}
        O - C(O)
        &= 
        O 
        - 
        C_1(O)
        +
        \sum_{i=2}^{n} 
        C_{[i-1]}
        \left( 
        O
        -
        C_{i}(O)
        \right)
        =
        \delta_1(O)
        +
        \sum_{i=2}^{n}
        C_{[i-1]} \delta_i(O)
        \,.
    \end{align} 
By taking the spectral norm, we get that    
\begin{align}
        \|O - C(O)\|
        \le 
        \sum_{i=1}^n 
        \| C_{[i-1]}\delta_i(O)\|
        \le 
        \sum_{i=1}^n 
        \|\delta_i(O)\|\,,
    \end{align}
where the last inequality follows since $C_I$ is a contraction. Later, we shall adapt this qubit-specific definition of the quasi-derivations to the case of qudits of dimension $d$ and fermions in Sections \ref{sec:Spins} and \ref{sec:fermions} respectively.

We can directly relate the oscillator norm to the mixing time of the Lindbladian as follows:
Denoting $\mathcal{P}^t=\e^{t\mathcal{L}}$, we can write the contractivity of the Lindbladian like
    \begin{align}
        \|\rho(t)-\sigma\|_{\Tr} 
        &= \sup_{\|O\|\leq 1} \Tr(O(\mathcal{P}^t)^\dagger(\rho-\sigma)) 
        = \sup_{\|O\|\leq 1} \Tr(\mathcal{P}^tO(\rho-\sigma)) 
        = \sup_{\|O\|\leq 1} \|O(t) - \Tr(O(t))/2^n\| \cdot \|\rho - \sigma\|_{\Tr}\,\\
        &\le 
        2
        \sup_{\|O\|\leq 1}
        \sum_{i=1}^n 
        \|\delta_i(O(t))\| = 2
        \sup_{\|O\|\leq 1} \oscnorm{O(t)}\,.
    \end{align}
Hence, since we can trivially bound $\oscnorm{O} \leq 2n$ for any $O$ obeying $\|O\| \leq 1$, it suffices to show that $\oscnorm{O(t)} \leq e^{-\alpha t} \oscnorm{O(0)}$ for some constant $\alpha > 0$ to prove rapid mixing of the Lindbladian $\mathcal{L}$. This has been previously done for the algorithmic Lindbladian for Gibbs state preparation in the regime of high temperatures \cite{rouze2024optimalquantumalgorithmgibbs}, as well as for ground state preparation of some weakly-interacting systems \cite{zhan2025rapidquantumgroundstate}. The main idea of both of these works is to firstly establish this decay for a simple, base-case Lindbladian (the unperturbed one), and then bound the effects of a weak enough perturbation stemming from locality of the Lindbladian.

The overarching challenge we will encounter in this work is that the unperturbed Lindbladians will be more complicated than the ones in \cite{rouze2024optimalquantumalgorithmgibbs,zhan2025rapidquantumgroundstate}, and so the standard oscillator technique won't be immediately applicable. Instead, our key technical contribution will be tailoring the oscillator norm for each Lindbladian, making it problem-specific, and hence greatly extending its usability. The oscillator norm generalised in this way will be an upper bound for the original one from Definition \ref{def:original oscillator norm}, so that it will still be usable for upper bounding the mixing time, but it will also allow the necessary step for bounding the decay of $\oscnorm{O(t)}$ to go through (see Sections \ref{sec: rapid mixing of qudits} and \ref{sec: rapid mixing of free fermions} later).


\section{Weakly-Interacting Quantum Spin Systems}\label{sec:Spins}

\subsection{Introduction}

Non-interacting quantum spin systems are described by separable Hamiltonians, $H^0=\sum_{i=1}^n h_i$, where $h_i$ has support only on the $i$-th site, which we consider to be a qudit of dimension $d$. In this section, we show that the Lindbladians introduced above mix rapidly, i.e.~in time logarithmic in $n$, for weakly-interacting quantum systems whose Hamiltonian is of the form $H=H^0+\lambda V$ with $|\lambda|$ smaller than a certain $n$-independent value, which can be evaluated explicitly for any given system.
Weakly interacting quantum spin models are ubiquitous in condensed matter physics. Paradigmatic examples are qubit lattice Hamiltonians of the form
\begin{align}
    H = \sum_{i=1}^n \alpha_i Z_i
    + \sum_{\langle i, j\rangle }
    h_{ij}
\end{align}
where $h_{ij}$ is a spin-spin interaction, such as the Ising interaction, $h_{ij}= J_{ij} X_i X_j$, or the Heisenberg interaction, $h_{ij}=J_{ij} (X_i X_j+Y_i Y_j+Z_i Z_j)$, where $J_{ij}=\mathcal{O}(\lambda)$.\\

Weakly interacting quantum systems have been studied in several previous works, which established stability of their spectrum and properties under small-enough couplings \cite{datta1996low,Yarotsky_2005,Bravyi_2011}.
Classically efficient algorithms exist for their ground states  \cite{Bravyi_2008} and thermal states at low temperatures \cite{Helmuth_2023}. We note, however, that in \cite{Helmuth_2023} the time complexity of the algorithm is a polynomial whose degree depends on the graph connectivity, and no explicit formula is given, so that it is not possible to directly compare our quantum Gibbs sampling runtime with that of this classical algorithm. We further note that \cite{zhan2025rapidquantumgroundstate} studies the mixing time of a restricted class of weakly interacting quantum systems, however for a Lindbladian that prepares the ground state only.\\

Here, we derive a rigorous result on the mixing time for preparing the Gibbs state of weakly-interacting, non-commuting, quantum spin systems of qudits at any temperature. Combined with the efficiency of the quantum Gibbs sampling algorithm of \cite{chen2023efficient,ding2024efficient}, our result leads to the \emph{first provably efficient quantum algorithm for preparing the Gibbs state of weakly interacting spin systems}, whose 
Hamiltonian simulation time complexity is $\tilde{\mathcal{O}}(n^2)$. On a technical level, our result can be seen as an extension of  
\cite{rouze2024optimalquantumalgorithmgibbs},
which considers perturbations of the depolarising channel for qubits, corresponding to the Lindbladian at infinite temperature. 
Our proof builds on \cite{rouze2024optimalquantumalgorithmgibbs} and introduces an adapted oscillator norm technique that allows us to deal with qudit systems at any temperature.


\subsection{Notation and some lemmas}


We first derive some general properties of the eigenvectors of $\mathcal{L}$ that we will use later in the proof of rapid mixing, specifically for defining and using the Lindbladian-specific version of the oscillator norm.
In the following,  $[N]$ denotes the set $\{1,\dots,N\}$.

\begin{lemma}
    \label{lemma:properties Falpha}
    Let $\mathcal{L}$ be a Lindbladian acting on a quantum system of dimension $N$.
    Assume that $\mathcal{L}$ is irreducible and satisfies the KMS quantum detailed balance condition for a state $\sigma$.
    Then:
    \begin{enumerate}
    \item 
    $\mathcal{L}$ has eigenvectors $\{ F^\alpha \}$ 
    that are orthonormal w.r.t.~the KMS inner product.
    The eigenvalues are $0=\lambda^1 > \lambda^2 \ge \cdots \ge \lambda^{N^2}$.
    \item 
    For any linear operator $F$, we have
    \begin{align}
        \mathcal{L}F
        =\sum_{\alpha=1}^{N^2}
        \lambda^\alpha
        F^\alpha 
        \langle F^\alpha, F\rangle_\sigma
    \end{align}
    \item 
    With $\sigma=\e^{-\beta H}/Z$, let $\Delta E=E_{\textup{max}}-E_{\textup{min}}$ be the difference between the largest and smallest eigenvalue of $H$. 
    Then for any linear operator $F$ and $\alpha\in [N^2]$:
    \begin{align}
        \| F^\alpha 
        \langle F^\alpha, F\rangle_\sigma
        \|
        &\le 
        N^{2}
        \e^{2\beta \Delta E}
        \|F\|
    \end{align}
    \end{enumerate}    
\end{lemma}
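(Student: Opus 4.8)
The three claims are really three facets of the spectral theory of a KMS-self-adjoint generator, so I would organise the proof around the KMS inner product $\langle A,B\rangle_\sigma = \Tr(A^\dagger \mathcal{G}_\sigma(B))$, which is a genuine (positive-definite, since $\sigma>0$) inner product on the $N^2$-dimensional operator space. For claim (i), the QDB condition says $\mathcal{L}$ is self-adjoint with respect to this inner product, so the spectral theorem gives a $\langle\cdot,\cdot\rangle_\sigma$-orthonormal eigenbasis $\{F^\alpha\}$ with real eigenvalues. That $\lambda^1 = 0$ with eigenvector proportional to $\id$ follows from $\mathcal{L}[\id]=0$; that all other eigenvalues are strictly negative follows from irreducibility (the semigroup $e^{t\mathcal{L}}$ is positivity-improving / has a unique full-rank fixed point), together with the fact that $e^{t\mathcal{L}}$ is a contraction in the KMS norm — a standard consequence of QDB plus complete positivity and unitality — so no eigenvalue can have positive real part, and the Perron–Frobenius-type argument rules out $0$ being degenerate and rules out purely imaginary eigenvalues (here they are real anyway). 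I would cite the relevant facts as standard (Davies-type / detailed-balance semigroup theory) rather than reprove them.

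Claim (ii) is then just the expansion of $\mathcal{L}$ in the orthonormal eigenbasis: write $F = \sum_\alpha F^\alpha \langle F^\alpha, F\rangle_\sigma$ (Parseval in the KMS inner product), apply $\mathcal{L}$ and use $\mathcal{L}F^\alpha = \lambda^\alpha F^\alpha$. Nothing subtle here beyond making sure the completeness of $\{F^\alpha\}$ is invoked.

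Claim (iii) is the quantitative one and is where the actual work sits. The quantity $\|F^\alpha \langle F^\alpha, F\rangle_\sigma\|$ in operator norm must be bounded by $N^2 e^{2\beta\Delta E}\|F\|$. The strategy is to control two things separately: the operator norm $\|F^\alpha\|$ of a KMS-normalised eigenvector, and the size of the coefficient $|\langle F^\alpha, F\rangle_\sigma|$. For the coefficient, I would bound $|\langle F^\alpha, F\rangle_\sigma| = |\Tr(F^{\alpha\dagger}\sigma^{1/2}F\sigma^{1/2})| \le \|F\| \cdot \|\sigma^{1/2} F^{\alpha\dagger}\sigma^{1/2}\|_{\Tr} \le \|F\|\cdot \|F^\alpha\|_{\Tr}$ (using $\|\sigma^{1/2}X\sigma^{1/2}\|_{\Tr}\le\|X\|_{\Tr}$ since $\Tr\sigma=1$), and then $\|F^\alpha\|_{\Tr}\le N\|F^\alpha\|$. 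For $\|F^\alpha\|$ itself, the KMS normalisation $\langle F^\alpha,F^\alpha\rangle_\sigma = \Tr(F^{\alpha\dagger}\sigma^{1/2}F^\alpha\sigma^{1/2}) = 1$ only controls a $\sigma$-weighted Schatten-$2$ norm from below, so to pass to the operator norm I need to pay the condition number of $\sigma$: since $\sigma^{1/2} X \sigma^{1/2} \ge \lambda_{\min}(\sigma) X$ for $X\ge 0$ applied to $X = F^{\alpha\dagger}F^\alpha$ and its transpose-analogue, one gets $1 = \langle F^\alpha,F^\alpha\rangle_\sigma \ge \lambda_{\min}(\sigma)\,\|F^\alpha\|_2^2 \ge \lambda_{\min}(\sigma)\,\|F^\alpha\|^2$, hence $\|F^\alpha\|\le \lambda_{\min}(\sigma)^{-1/2}$. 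With $\sigma = e^{-\beta H}/Z$ one has $\lambda_{\min}(\sigma)^{-1} = Z e^{\beta E_{\max}} \le N e^{-\beta E_{\min}} e^{\beta E_{\max}} = N e^{\beta\Delta E}$, so $\|F^\alpha\| \le N^{1/2} e^{\beta\Delta E/2}$. Combining, $\|F^\alpha\langle F^\alpha,F\rangle_\sigma\| \le \|F^\alpha\|^2 \cdot N \|F\| \le N e^{\beta\Delta E}\cdot N \|F\| = N^2 e^{\beta\Delta E}\|F\|$, which is in fact slightly stronger than the stated $N^2 e^{2\beta\Delta E}$, so the claimed bound follows a fortiori.

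\textbf{Main obstacle.} The only genuinely delicate point is claim (i): establishing that irreducibility forces $\lambda^1 = 0$ to be simple and all other eigenvalues strictly negative (rather than merely non-positive), i.e. the quantum Perron–Frobenius / ergodicity statement for detailed-balanced Lindblad semigroups. The rest is bookkeeping with Hölder-type inequalities for Schatten norms and the crude but sufficient estimate $\lambda_{\min}(\sigma)^{-1}\le N e^{\beta\Delta E}$; the factor $e^{2\beta\Delta E}$ in the statement leaves plenty of slack, so I would not optimise the constants.
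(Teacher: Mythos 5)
Your proposal is correct and follows essentially the same route as the paper: the KMS self-adjointness gives the orthonormal eigenbasis (the paper makes this explicit via the parent Hamiltonian $\mathcal{G}_\sigma^{1/2}\mathcal{L}\mathcal{G}_\sigma^{-1/2}$, which is equivalent to your direct appeal to the spectral theorem), part (ii) is the same basis expansion, and part (iii) uses the same bound $\|F^\alpha\|\le\lambda_{\min}(\sigma)^{-1/2}\le N^{1/2}e^{\beta\Delta E/2}$. The only difference is that in bounding the coefficient you use $\|F^\alpha\|_{\Tr}\le N\|F^\alpha\|$ where the paper uses $\|\sigma^{1/2}\|_{\Tr}^2\le Ne^{\beta\Delta E}$, which is why you land on the slightly sharper constant $N^2e^{\beta\Delta E}$; this is a cosmetic improvement, and both arguments are sound.
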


\begin{proof}
    \begin{enumerate}
        \item 
        Introduce the self-adjoint parent Hamiltonian:
        \begin{align}
            \mathcal{H}
            =
            \mathcal{G}_\sigma^{+1/2}
            \mathcal{L}
            \mathcal{G}_\sigma^{-1/2}\,.
        \end{align}
        $\mathcal{H}$ has eigenvectors $V^\alpha$ orthonormal w.r.t.~the Hilbert-Schmidt inner product 
        and has the same spectrum as $\mathcal{L}$, where the properties of the spectrum follow from the assumptions  on $\mathcal{L}$.
        $F^\alpha= \mathcal{G}_\sigma^{-1/2}V^\alpha$ is an eigenstate of $\mathcal{L}$ with eigenvalue $\lambda^\alpha$:
        \begin{align}
            \mathcal{L}F^\alpha
            =
            \mathcal{G}_\sigma^{-1/2}
            \mathcal{H}
            \mathcal{G}_\sigma^{+1/2}
            \mathcal{G}_\sigma^{-1/2}V^\alpha
            =
            \lambda^\alpha F^\alpha
            \,,
        \end{align}
        and
        \begin{align}
            \delta_{\alpha,\alpha'}
            =
            \langle V^\alpha,V^{\alpha'}
            \rangle 
            =
            \langle \mathcal{G}_\sigma^{+1/2}
            F^\alpha,
            \mathcal{G}_\sigma^{+1/2}F^{\alpha'}
            \rangle 
            =
            \langle 
            F^\alpha,
            \mathcal{G}_\sigma F^{\alpha'}
            \rangle 
            =
            \langle 
            F^\alpha, F^{\alpha'} 
            \rangle_\sigma
        \end{align}
        Note that $V^1 = \sqrt{\sigma}$ since $F^1=\id$.
        \item 
        Follows from the orthonormality of the basis 
        $\{ F^\alpha \}$ and the detailed balance condition:
        \begin{align}
            \mathcal{L}F
            =
            \sum_\alpha 
            F^\alpha 
            \langle 
            F^\alpha, 
            \mathcal{L}F
            \rangle_\sigma 
            =
            \sum_\alpha 
            F^\alpha 
            \langle 
            \mathcal{L}
            F^\alpha, 
            F
            \rangle_\sigma 
            =
            \sum_\alpha 
            \lambda^\alpha 
            F^\alpha 
            \langle 
            F^\alpha, 
            F
            \rangle_\sigma 
        \end{align}
    \item 
    Consider first
    $\| F^\alpha \|$:
    \begin{align}
        \|F^\alpha\|
        =
        \| 
        \sigma^{-1/4}
        V^\alpha
        \sigma^{-1/4}
        \|
        \le 
        \|\sigma^{-1/4}\|^2 
        \|
        V^\alpha \|
        \le 
        \|\sigma^{-1/2}\|
    \end{align}
    The first inequality follows from the submultiplicativity of the operator norm and the second from bounding the operator norm of $V^\alpha$ by the Hilbert-Schmidt norm 
    $\langle V^\alpha, V^\alpha\rangle = 1$.
    Next, we have explicitly:
    \begin{align}
        \|F^\alpha\|
        \le 
        \|\sigma^{-1/2}\|
        =
        Z^{1/2}
        \|\e^{+\frac{\beta}{2} H}\|
        =
        Z^{1/2}
        \e^{+\frac{\beta}{2} E_{\text{max}}}
        \le 
        N^{1/2} 
        \e^{-\frac{\beta}{2} E_{\text{min}}}
        \e^{+\frac{\beta}{2} E_{\text{max}}}
        =
        N^{1/2} 
        \e^{+\frac{\beta}{2} \Delta E}
        \,.
    \end{align}
    Next we bound $|\langle F^\alpha, F\rangle_\sigma|
    =
    |\Tr((F^\alpha)^\dagger 
    \sigma^{1/2}
    F 
    \sigma^{1/2}
    )|
    $:
    \begin{align}
        |\langle F^\alpha, F\rangle_\sigma|
        \le 
        \|F^\alpha \|
        \|
        \sigma^{1/2}
        F
        \sigma^{1/2}\|_{\Tr}
        \le 
        \|F^\alpha \|
        \|
        \sigma^{1/2}\|_{\Tr}
        \,
        \|
        F
        \sigma^{1/2}\|_{\Tr}
        \le 
        \|F^\alpha \|
        \|F \|
        \|
        \sigma^{1/2}\|_{\Tr}^2      \,.
    \end{align}
    The first inequality uses the tracial matrix Hölder inequality, the second the submultiplicativity of the trace norm, the third the 
    Hölder inequality for Schatten norms.
    Finally, we have explicitly,
    denoting the eigenvalues of $H$ by $E_\ell$:
    \begin{align}
        \|
        \sigma^{1/2}\|_{\Tr}^2
        =
        \frac{
        \left(
        \sum_\ell \e^{-\frac{\beta}{2}E_\ell}\right)^2}
        {\sum_\ell \e^{-\beta E_\ell}}
        \le 
        \frac{N^2 \e^{-\beta E_{\text{min}}}}{N 
        \e^{-\beta E_{\text{max}}}
        }
        =
        N \e^{\beta \Delta E}
    \end{align}
    Putting things together we get the result of the Lemma.
    \end{enumerate}
\end{proof}

A key ingredient for bounding the mixing times of the algorithmic Lindbladian of \cite{ding2024efficient}
specified in Equation \eqref{eqn:Lindbladian} is understanding its locality and characterising the strength of the perturbation for weakly interacting systems. We show these properties in detail in the Appendix in Lemma \ref{lemma:bounds on delta L}. These are the counterpart of the analysis for high temperatures in \cite[App.~B]{rouze2024optimalquantumalgorithmgibbs}. To complete the proof of rapid mixing, we will also need the following Lemma \ref{lemma:eta} providing us the constants appearing in the upper bound for the mixing time after perturbing the Lindbladian, which is an adaptation of \cite[App.~A 2]{rouze2024optimalquantumalgorithmgibbs} and the proof follows the same steps. Its assumptions would then be satisfied using the results from Lemma \ref{lemma:bounds on delta L}.

\begin{lemma}\label{lemma:eta}
    Let $\mathcal{L}$ and $\mathcal{L}^0$ be two Lindbladians that satisfy
    \begin{align}
        &\| \mathcal{L}^{(r)}_i - \mathcal{L}^{(r-1)}_i\|_{\infty\to\infty}\le \zeta(r)
        \,,\quad 
        \| \mathcal{L}_i - \mathcal{L}^0_i\|_{\infty\to\infty}\le \xi(\lambda)
    \end{align}
    for some positive functions $\zeta(r),\xi(\lambda)$ such that 
    $\Delta(\ell) = 
    \sum_{r\ge \ell}\zeta(r)$ is finite.
    Here, $\mathcal{L}^{(r)}_i$ is the truncation of the Lindbladian $\mathcal{L}_i$ associated to the Hamiltonian $H$ truncated to a ball of radius $r$ around site $i$.
    Let $\{ \delta_i^\alpha \}$ be a set of partial quasi-derivations such that $\sum_{\alpha \in S}\delta_i^\alpha(O)=\delta_i(O)=O-\frac{1}{d}\Tr_i(O)$ is the full quasi-derivation at the qudit $i$, with $\|\delta_i^\alpha O\|\le d_\star \|O\|$, where $d$ is the dimension of the qudits.
    
    Then, for any positive $r_0$, linear operator $O$ with $\|O\|\le 1$, $\alpha\in S$ and $i\neq j$, we have 
    \begin{align}
        \label{eq:relations kappa gamma}
        &\|[
        \delta_i^\alpha,
        \mathcal{L}_j
        ](O)\|
        \le 
        \sum_{\alpha'}
        \sum_\ell
        \kappa_{ij}^{\ell,\alpha'}
        \| \delta_\ell^{\alpha'} O\|
        \,,\quad 
        \|\delta_i^\alpha 
        (
        \mathcal{L}_i
        -
        \mathcal{L}_i^0) O\|
        \le 
        \sum_{\alpha'}
        \sum_\ell \gamma_i^{\ell,\alpha'}
        \| \delta_\ell^{\alpha'} O\|
        \,,
    \end{align}
    for some $\kappa_{ij}^{\ell,\alpha}, \gamma_i^{\ell,\alpha}\ge 0$ such that 
    \begin{align}
        \sum_{i}
        \Big(
        \gamma_i^{\ell,\alpha}
        +
        \sum_{j\neq i}
        \kappa_{ij}^{\ell,\alpha}
        \Big)
        \le 
\xi(\lambda)(1+2d_\star)(2r_0+1)^{2D}
+f(r_0)
\equiv \eta 
    \end{align}    
    where
    \begin{align}
        f(r_0)&=
\Delta(r_0)(1+2d_\star)(2r_0+1)^{2D}
+
\left(d_\star(2+r_0) + d_\star (2r_0+1)^D+1 \right)
\sum_{\ell>r_0}
        \Delta(\ell)
        (2\ell+1)^{2D}
\\&\quad
+ d_\star 
\sum_{\ell'\ge r_0}
\sum_{\ell \ge \ell'}
\Delta(\ell) (2\ell+1)^{2D}\label{eqn:f_decay function}
    \end{align}
\end{lemma}
\begin{proof}
We have, denoting the distance between $i,j$ by $d_{ij}$ and noting that $\mathcal{L}_j^{(d_{ij})}$ is not supported at $i$:
\begin{align}
    \| [\delta_i^\alpha,\mathcal{L}_j](O)\|
    =
    \| [\delta_i^\alpha,\mathcal{L}_j-\mathcal{L}_j^{(d_{ij})}](O)\|
    \le 
    \|
    \mathcal{L}_j-\mathcal{L}_j^{(d_{ij})}\|_{\infty\to\infty}
    \|\delta_i^\alpha(O)\|+
    d_\star 
    \|(\mathcal{L}_j-\mathcal{L}_j^{(d_{ij})})(O)\|\,.
\end{align}
Using the telescopic sum 
$\mathcal{L}_j-\mathcal{L}_j^{(r_0)}
=
\sum_{r>r_0}\mathcal{L}_j^{(r)}-\mathcal{L}_j^{(r-1)}$ and the fact that $\mathcal{L}_j^{(r-1)}(\id_{B_j(r)})=0$ and another telescopic telescopic sum 
$\delta_{B_j(r)}O
=\delta_1O+\sum_{i=2}^{|B_j(r)|}C_{[i-1]}\delta_iO$ with $C_I$ the normalised trace over $I$, we have
\begin{align}
    \| [\delta_i^\alpha,\mathcal{L}_j](O)\|
    &\le 
    \sum_{r>d_{ij}}
    \zeta(r)
    \|\delta_i^\alpha(O)\|+
    d_\star 
    \sum_{r>d_{ij}}
    \|(\mathcal{L}^{(r)}_j-\mathcal{L}_j^{(r-1)})
    \delta_{B_j(r)}(O)\|
    \\
    &\le
    \Delta(d_{ij})
    \|\delta_i^\alpha(O)\|+
    d_\star 
    \sum_{r>d_{ij}}
    \zeta(r)
    \sum_{k|d_{jk}\le r}
    \|\delta_{k}(O)\|
    \\
    &\le
    \Delta(d_{ij})
    \|\delta_i^\alpha(O)\|+
    d_\star 
    \sum_{\alpha'}
    \sum_k
    \sum_{r>\max(d_{ij},d_{jk})}
    \zeta(r)
    \|\delta_{k}^{\alpha'}(O)\|
    \\
    &=
    \Delta(d_{ij})
    \|\delta_i^\alpha(O)\|+
    d_\star 
    \sum_{\alpha'}
    \sum_k
    \Delta(\max(d_{ij},d_{jk}))
    \|\delta_{k}^{\alpha'}(O)\|\,.\label{eqn: kappas small}
\end{align}\enlargethispage{1cm}
Now, we fix $r_0>0$ to be chosen later and separate the cases $d_{ij}>r_0$, for which we use the above expression, and
$d_{ij}\le r_0$, where we use the following bound: Since $i\neq j$ and
$\mathcal{L}_j^0$ is supported strictly at $j$, we have
\begin{align}
    &\| [\delta_i^\alpha,\mathcal{L}_j](O)\|
    \le 
    \|[\delta_i^\alpha,
    \mathcal{L}_j-\mathcal{L}_j^{(r_0)}](O)
    \|
    +
    \|[\delta_i^\alpha,
    \mathcal{L}_j^{(r_0)}-\mathcal{L}_j^{0}](O)
    \|
    \\
    &\le 
    d_\star 
    \|
    (\mathcal{L}_j-\mathcal{L}_j^{(r_0)})(O)
    \|
    +
    \|
    (\mathcal{L}_j-\mathcal{L}_j^{(r_0)})
    \|_{\infty\to\infty}
    \|\delta_i^\alpha(O)\|
    +
    d_\star 
    \|
    (\mathcal{L}_j^{(r_0)}-\mathcal{L}_j^{0})(O)
    \|
    +
    \|
    (\mathcal{L}_j^{(r_0)}-\mathcal{L}_j^0)
    \|_{\infty\to\infty}
    \|\delta_i^\alpha(O)\|
\end{align}
Now note that the bound on the strength of the perturbation holds also for $\mathcal{L}_j^{(r_0)}-\mathcal{L}_j^0$, and proceeding as above by introducing the quasi-derivation $\delta_{B_j(r)}$:
\begin{align}
    &\| [\delta_i^\alpha,\mathcal{L}_j](O)\|
    \le 
    (\Delta(r_0)+\xi(\lambda))
    \|\delta_i^\alpha(O)\|
    +
    d_\star 
    \sum_{r>r_0}
    \|
    (\mathcal{L}_j^{r}-\mathcal{L}_j^{(r-1)})
    \delta_{B_j(r)}(O)
    \|
    +
    d_\star 
    \|
    (\mathcal{L}_j^{(r_0)}-\mathcal{L}_j^{0})\delta_{B_j(r_0)}(O)
    \|\\
    &\le 
    (\Delta(r_0)+\xi(\lambda))
    \|\delta_i^\alpha(O)\|
    +
    d_\star 
    \sum_{\alpha'}
    \left(
    \sum_k \Delta(\max(d_{jk},r_0))
    +
    \xi(\lambda)
    \sum_{k|d_{jk}\le r_0}
    \right)
    \|
    \delta_{k}^{\alpha'}(O)
    \|\\
    & 
    \le 
    (\Delta(r_0)+\xi(\lambda))
    \|\delta_i^\alpha(O)\|
    +
    d_\star 
    \sum_{\alpha'}
    \sum_{k|d_{jk}\le r_0}
    \left(
    \Delta(r_0)
    +
    \xi(\lambda)
    \right)
    \|
    \delta_{k}^{\alpha'}(O)
    \|
    +
    d_\star 
    \sum_{\alpha'}
    \sum_{k| d_{jk}>r_0}
    \Delta(d_{jk})
    \|
    \delta_{k}^{\alpha'}(O)
    \|\\
    &\le 
    (1+d_\star)
    \left(
    \Delta(r_0)
    +
    \xi(\lambda)
    \right)
    \sum_{\alpha'}
    \sum_{k|d_{jk}\le r_0}
    \|
    \delta_{k}^{\alpha'}(O)
    \|
    +
    d_\star 
    \sum_{\alpha'}
    \sum_{k| d_{jk}>r_0}
    \Delta(d_{jk})
    \|
    \delta_{k}^{\alpha'}(O)
    \|
    \,.\label{eqn: kappas large}
\end{align}
From Equations \eqref{eqn: kappas small} and \eqref{eqn: kappas large} we can read off the values of $\kappa$'s in the first desired inequality to be
\begin{align}
    \kappa_{ij}^{k,\alpha'}
    =
    \begin{cases}
       \Delta(d_{ij})
       & i=k, d_{ij}>r_0\,,\\
       d_\star 
    \Delta(\max(d_{ij},d_{jk})) & i\neq k, d_{ij}>r_0\,,\\
    (\Delta(r_0)+\xi(\lambda))(1+d_\star)
    &
    d_{jk},d_{ij}\le r_0\,,\\
    d_\star \Delta(d_{jk})
    &
    d_{jk}>0,d_{ij}\le r_0\,.
    \end{cases}
\end{align}
Then we proceed similarly regarding the second desired inequality, and bound
\begin{align}
    \|\delta_i^\alpha(\mathcal{L}_i-\mathcal{L}_i^0)(O)\|
    &\le
    \|\delta_i^\alpha(\mathcal{L}_i-\mathcal{L}_i^{r_0})(O)\|
    +
    \|\delta_i^\alpha(\mathcal{L}_i^{r_0}-\mathcal{L}_i^0)(O)\|
    \\
    &\le 
    d_\star 
    \sum_{\alpha'}
    \sum_k \Delta(\max(r_0,d_{ik}))
    \|
    \delta_{k}^{\alpha'}(O)
    \|
    +
    d_\star 
    \xi(\lambda)
    \sum_{\alpha'}
    \sum_{k|d_{ik}\le r_0} 
    \|
    \delta_{k}^{\alpha'}(O)
    \|\,,
\end{align}
from which we can read off the values of $\gamma$'s as
\begin{align}
    \gamma_i^{k,\alpha'}
    =
    \begin{cases}
       d_\star (\Delta(r_0)+\xi(\lambda)) & d_{ik}\le r_0\,,\\
       d_\star \Delta(d_{ik}) & d_{ik}> r_0\,.\\
    \end{cases}
\end{align}
Note that we have chosen both $\kappa^{k\alpha}_{ij}$ and $\gamma_i^{k\alpha}$ to be independent of $\alpha$.
Finally, we can sum these up as follows:
\begin{align}
        \sum_{i}
        \Big(
        \sum_{j\neq i}
        \kappa_{ij}^{\ell,\alpha}
        +
        \gamma_i^{\ell,\alpha}
        \Big)
        &\le 
        \sum_{j(\neq k) |d_{jk}>r_0}
        \Delta(d_{jk})
        +
        d_\star 
        \sum_{i\neq j|d_{ij}>r_0}
        \Delta(\max(d_{ij},d_{jk}))
        \\
        &\quad 
        +
        (\Delta(r_0)+\xi(\lambda))(1+d_\star)
        \sum_{i\neq j|d_{ij},d_{jk}\le r_0}1
        +
        d_\star 
        \sum_{i\neq j|d_{ij}\le r_0,d_{jk}> r_0}
        \Delta(d_{jk})
        \\
        &\quad +
        d_\star 
        (\Delta(r_0)+\xi(\lambda))
        \sum_{i|d_{ik}\le r_0}
        1
        +
        d_\star 
        \sum_{i|d_{ik}> r_0}
        \Delta(d_{ik})
        \,.
    \end{align}   
    Next we are going to use the bound $|B_j(r)|,|\{i|d_{ij}=r\}|
    \le (2r+1)^D$.
    In \cite{rouze2024optimalquantumalgorithmgibbs} they used the bound $|\{i|d_{ij}=r\}|
    \le (2r+1)^{D-1}$ but this does not hold at $D=1$, so we instead use this looser bound which holds for any $D$.
    We use the notation $I$ for the indicator function and label the summands above as $(1)$ to $(6)$, and we bound each of them separately in a system-size-independent manner:
\begin{align}
&(1)
    \sum_{j(\neq k) |d_{jk}>r_0}
        \Delta(d_{jk})
    =
    \sum_{\ell > r_0}
    \Delta(\ell) 
    |\{ 
    j|d_{jk}=\ell
    \}|
    \le 
    \sum_{\ell > r_0}
    \Delta(\ell) 
    (2\ell+1)^D\\
&(2)
        \sum_{i\neq j|d_{ij}>r_0}
        \Delta(\max(d_{ij},d_{jk}))
\le 
\sum_{d_{jk}>d_{ij}>r_0}
\Delta(d_{jk}))
+ 
\sum_{d_{ij}>r_0, d_{jk}\le d_{ij}}
\Delta(d_{ij}))
\\
&\quad\le 
\sum_{j|d_{jk}>r_0}
\Delta(d_{jk})
|\{i|d_{ij}\le d_{jk},
d_{ij}>r_0\}|
+
\sum_j \sum_{\ell \ge r_0,d_{jk}}
\Delta(\ell) (2\ell+1)^D
\\
&\quad\le 
\sum_{j|d_{jk}>r_0}
\Delta(d_{jk})
|\{i|d_{ij}\le d_{jk}\}|
+
\sum_{\ell'} 
\sum_{\ell \ge r_0,\ell'}
\Delta(\ell) (2\ell+1)^D
(2\ell'+1)^D
\\
&\quad\le 
\sum_{\ell>r_0}
\Delta(\ell)
|\{i|d_{ij}\le \ell\}| \cdot
|\{j|d_{jk} = \ell\}|
+
r_0
\sum_{\ell \ge r_0}
\Delta(\ell) (2\ell+1)^{2D}
+
\sum_{\ell'\ge r_0}
\sum_{\ell \ge \ell'}
\Delta(\ell) (2\ell+1)^{2D}
\\
&\quad\le 
(1
+
r_0)
\sum_{\ell \ge r_0}
\Delta(\ell) (2\ell+1)^{2D}
+
\sum_{\ell'\ge r_0}
\sum_{\ell \ge \ell'}
\Delta(\ell) (2\ell+1)^{2D}
\end{align}
\begin{align}
&(3)
    \sum_{i\neq j|d_{ij},d_{jk}\le r_0}1
    =
    \sum_j
    \sum_{i(\neq j)}
    I(d_{ij}\le r_0)
    I(d_{jk}\le r_0)
    \le 
    (2r_0+1)^{2D}
\\
&(4)\sum_{i\neq j|d_{ij}\le r_0,d_{jk}> r_0}
        \Delta(d_{jk})
    =
    (2r_0+1)^D
    \sum_{\ell>r_0}
    \Delta(\ell)
    (2\ell+1)^D\\
&(5)
   \sum_{i|d_{ik}\le r_0}
        1
        \le (2r_0+1)^D
    \\ 
        &(6)\sum_{i|d_{ik}> r_0}
        \Delta(d_{ik})
        =
        \sum_{\ell>r_0}
        \Delta(\ell)
        (2\ell+1)^D
\end{align}
Putting things together we get
\begin{align}\hspace{-0.2cm}
\sum_{i}
        \Big(
        \sum_{j\neq i}
        \kappa_{ij}^{\ell,\alpha}
        +
        \gamma_i^{\ell,\alpha}
        \Big)
        &\le 
        \sum_{\ell > r_0}
    \Delta(\ell) 
    (2\ell+1)^D
    +d_\star 
    \Big((1
+
r_0)
\sum_{\ell \ge r_0}
\Delta(\ell) (2\ell+1)^{2D}
+
\sum_{\ell'\ge r_0}
\sum_{\ell \ge \ell'}
\Delta(\ell) (2\ell+1)^{2D} \Big)\hspace{1cm}
\\
&\quad +
(\Delta(r_0)+\xi(\lambda))(1+d_\star)(2r_0+1)^{2D}+
d_\star 
(2r_0+1)^D
    \sum_{\ell>r_0}
    \Delta(\ell)
    (2\ell+1)^D
    \\
&\quad +
d_\star 
        (\Delta(r_0)+\xi(\lambda)) (2r_0+1)^D
+
d_\star 
\sum_{\ell>r_0}
        \Delta(\ell)
        (2\ell+1)^D
    \\
&\le 
(\Delta(r_0)+\xi(\lambda))(1+2d_\star)(2r_0+1)^{2D}
+
\left(d_\star(2+r_0) + d_\star (2r_0+1)^D+1 \right)
\sum_{\ell>r_0}
        \Delta(\ell)
        (2\ell+1)^{2D}\hspace{-2cm}
\\
&\quad + d_\star 
\sum_{\ell'\ge r_0}
\sum_{\ell \ge \ell'}
\Delta(\ell) (2\ell+1)^{2D}
\equiv 
\xi(\lambda)(1+2d_\star)(2r_0+1)^{2D}
+f(r_0)\equiv \eta \,,
\end{align}
giving us the system-size-independent $\eta$, where we have defined
\begin{align}
f(r_0)=
&\ \Delta(r_0)(1+2d_\star)(2r_0+1)^{2D}
+
\left(d_\star(2+r_0) + d_\star (2r_0+1)^D+1 \right)
\sum_{\ell>r_0}
        \Delta(\ell)
        (2\ell+1)^{2D}
\\
&\quad + d_\star 
\sum_{\ell'\ge r_0}
\sum_{\ell \ge \ell'}
\Delta(\ell) (2\ell+1)^{2D}
\,.
\end{align}
The existence of this $\eta$ will be crucial for extending the rapid mixing from the non-interacting systems to the weakly interacting systems, conditioned on $\eta$ being sufficiently small, which will in turn give an upper bound on the maximal coupling strength $\lambda_\textup{max}$.
Note that this $\eta$ coincides with the expression in \cite[Eqs.~A35-A36]{rouze2024optimalquantumalgorithmgibbs}
if we set $d_\star =2$ and replace the exponent $2D$ with $2D-1$ and $2D-2$ as coming from the different bounds on the number of points on the surface of a ball on the lattice, as remarked above.
\end{proof}

Note that for the case at hand, Lemma \ref{lemma:bounds on delta L} implies that
\begin{align}
    \label{eq:Delta ell}
    \Delta(\ell) = C
    \sum_{r\ge \ell}\e^{-\chi r}
    =
    C
    \e^{-\chi \ell}
    \sum_{r\ge 0}\e^{-\chi r}
    =
    \frac{C}{1-\e^{-\chi}}\e^{-\chi \ell}
    \equiv
    \tilde{C}
    \e^{-\chi \ell}\,.
\end{align}


\subsection{Rapid mixing results}\label{sec: rapid mixing of qudits}

After these preparations, we can prove rapid mixing of the algorithmic Lindbladian associated with weakly interacting spin systems.

\begin{theorem}\label{thm:rapid mixing separable general}
Consider a system of $n$ qudits of dimension $d$.
Let $\mathcal{L}^0$ be the algorithmic Lindbladian associated to the separable Hamiltonian $H^0=\sum_ih_i$ and $\mathcal{L}$ that associated to $H=H^0+\lambda V$, where $V$ has at least exponentially decaying interactions and  the jump operators of $\mathcal{L}^0$ and $\mathcal{L}$ are strictly local.
Denote by $\Delta_0$ the spectral gap of 
$\mathcal{L}^0$ and by $\Delta E_\star=\max_i \Delta E_i$, where $\Delta E_i$ is the difference between the maximum and minimum eigenvalue of the Hamiltonian $h_i$. Then, as long as $d^2 \eta < \Delta_0$, where
$\eta$ is the constant introduced in Lemma \ref{lemma:eta},
$\mathcal{L}$ mixes rapidly:
\begin{align}
        t_{\textup{mix}}(\epsilon)
        \le 
        \frac{1}{\Delta_0 - d^2\eta}
        \log\left(
        4
        d^4 \e^{2\beta \Delta E_\star}
        \frac{n}{\epsilon} \right)
        \,.
    \end{align}
\end{theorem}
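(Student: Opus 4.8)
The plan is to run the oscillator-norm method with a seminorm tailored to $\mathcal{L}$. Writing $O(t)=e^{t\mathcal{L}}O$ for the Heisenberg evolution, the Background identity gives $\|e^{t\mathcal{L}^\dagger}[\rho]-\sigma_\beta\|_{\Tr}\le 2\sup_{\|O\|\le1}\oscnorm{O(t)}$, so it suffices to establish a uniform decay $\oscnorm{O(t)}\le\operatorname{poly}(n)\,e^{-(\Delta_0-d^2\eta)t}$. Since the bare qudit oscillator norm is too coarse for the perturbative step, I would work with the refined seminorm built from the partial quasi-derivations of Lemma~\ref{lemma:eta}, $\oscnorm{O}_\star:=\sum_{i,\alpha}\|\delta_i^\alpha(O)\|$, together with its KMS-weighted variant; since $\delta_i=\sum_\alpha\delta_i^\alpha$ one has $\oscnorm{O}\le\oscnorm{O}_\star$, so any decay of $\oscnorm{O(t)}_\star$ transfers. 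The two ingredients are then a clean contraction for the free Lindbladian $\mathcal{L}^0$ and a perturbative argument controlling $\mathcal{L}-\mathcal{L}^0$.

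\textbf{Free contraction.} Since $H^0=\sum_i h_i$ is a sum of mutually commuting, strictly single-site terms, $\mathcal{L}^0=\sum_i\mathcal{L}^0_i$ with $[\mathcal{L}^0_i,\mathcal{L}^0_j]=0$ and $\mathcal{L}^0_i$ supported on qudit $i$, so $e^{t\mathcal{L}^0}=\prod_i e^{t\mathcal{L}^0_i}$ is a composition of commuting unital completely positive maps, hence an operator-norm contraction. For $j\neq i$ the factor $e^{t\mathcal{L}^0_j}$ commutes with $\delta_i^\alpha$ and is norm-non-increasing, while $\delta_i^\alpha$ annihilates every operator trivial on site $i$ and $e^{t\mathcal{L}^0_i}$ fixes such operators; thus $\|\delta_i^\alpha e^{t\mathcal{L}^0}X\|\le\|\delta_i^\alpha e^{t\mathcal{L}^0_i}\delta_i X\|$. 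I would then apply Lemma~\ref{lemma:properties Falpha} to the single qudit $i$ (dimension $N=d$, KMS state $\sigma_i=e^{-\beta h_i}/Z_i$): expanding $\delta_i X$ in the KMS eigenbasis of $\mathcal{L}^0_i$, the component along $\mathbf{1}_i$ is annihilated by the outer $\delta_i^\alpha$ while the surviving components carry eigenvalues $\le-\Delta_0$ (the gap of $\mathcal{L}^0$ equalling the minimum of the single-site gaps), and parts~2--3 of that lemma bound each surviving component in operator norm. This gives $\oscnorm{e^{t\mathcal{L}^0}X}_\star\le K\,e^{-\Delta_0 t}\oscnorm{X}_\star$ with $K$ of order $d^4 e^{2\beta\Delta E_\star}$; run in the KMS-weighted norm the contraction is exactly $e^{-\Delta_0 t}$, the constant $K$ entering only when one passes back to operator norm.

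\textbf{Perturbation.} I would decompose $\mathcal{L}-\mathcal{L}^0=\sum_j(\mathcal{L}_j-\mathcal{L}^0_j)$ into quasi-local summands and use Duhamel, $\delta_i^\alpha O(t)=\delta_i^\alpha e^{t\mathcal{L}^0}O+\int_0^t\delta_i^\alpha e^{(t-s)\mathcal{L}^0}(\mathcal{L}-\mathcal{L}^0)O(s)\,\dd s$, commuting $\delta_i^\alpha$ through $e^{(t-s)\mathcal{L}^0}$ as above. The structural input is Lemma~\ref{lemma:eta}: the cross terms $[\delta_i^\alpha,\mathcal{L}_j]$ with $j\neq i$ and the local mismatch $\delta_i^\alpha(\mathcal{L}_i-\mathcal{L}^0_i)$ re-expand over $\{\delta_\ell^{\alpha'}\}$ with nonnegative coefficients of total weight at most $\eta$, uniformly in $n$, the finiteness and size-independence of $\eta$ stemming from the exponential decay $\Delta(\ell)=\tilde C e^{-\chi\ell}$ of Lemma~\ref{lemma:bounds on delta L} (cf.\ Eq.~\eqref{eq:Delta ell}) dominating the polynomial volume factors $(2\ell+1)^{2D}$. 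After performing the $\alpha$-sum, which costs a factor $\le|S|\le d^2$, the quantity $a(t):=\oscnorm{O(t)}_\star$ obeys a Volterra-type inequality $a(t)\le K a(0)\,e^{-\Delta_0 t}+d^2\eta\int_0^t e^{-\Delta_0(t-s)}a(s)\,\dd s$, whose Grönwall solution is $a(t)\le K a(0)\,e^{-(\Delta_0-d^2\eta)t}$ provided $d^2\eta<\Delta_0$. Since $a(0)=\oscnorm{O}_\star\le 2n$ for $\|O\|\le1$, the Background identity then gives $\|e^{t\mathcal{L}^\dagger}[\rho]-\sigma_\beta\|_{\Tr}\le 4d^4 e^{2\beta\Delta E_\star}\,n\,e^{-(\Delta_0-d^2\eta)t}$, and solving for the time at which the right-hand side equals $\epsilon$ produces the claimed bound; the explicit $\lambda_\textup{max}$ is obtained by minimising $\eta=\xi(\lambda)(1+2d_\star)(2r_0+1)^{2D}+f(r_0)$ over the truncation radius $r_0$ and solving $d^2\eta(\lambda)=\Delta_0$.

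\textbf{Main obstacle.} I expect the delicate part to be the bookkeeping in the perturbative step: the Duhamel iteration has to be arranged so that the large conversion factor $K\sim d^4 e^{2\beta\Delta E_\star}$ enters only the amplitude and not the exponent (that is, so that the effective rate is $\Delta_0-d^2\eta$ rather than $\Delta_0-d^2 K\eta$), which is why the free contraction must be run in the KMS-weighted refined norm and converted back only at the very end. The second nontrivial point, already isolated into Lemmas~\ref{lemma:bounds on delta L} and~\ref{lemma:eta}, is verifying that the quasi-locality of the filtered jump operators of the algorithmic Lindbladian of~\cite{ding2024efficient} genuinely controls every one of the multi-site lattice sums defining $\eta$ in an $n$-independent way (this is what makes $\lambda_\textup{max}$ a true constant), together with the accounting that the partial-derivation sums contribute exactly the factor $d^2$ in the smallness condition $d^2\eta<\Delta_0$. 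Everything else reduces to operator-norm estimates already packaged in Lemmas~\ref{lemma:properties Falpha}, \ref{lemma:bounds on delta L}, and~\ref{lemma:eta}.
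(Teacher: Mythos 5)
Your overall architecture coincides with the paper's: partial quasi-derivations $\delta_i^\alpha(O)=\delta_i(F_i^\alpha)\langle F_i^\alpha,O\rangle_{\sigma_i^0}$ built from the single-site eigenbasis of Lemma~\ref{lemma:properties Falpha}, the eigenrelation $\delta_i^\alpha\mathcal{L}_i^0=\lambda_i^\alpha\delta_i^\alpha$ so that the free decay is exactly $e^{-\Delta_0 t}$ with no constant in the exponent, Lemma~\ref{lemma:eta} for the perturbative coefficients, Gr\"onwall, and the conversion factor $2d^2e^{2\beta\Delta E_i}$ entering only the amplitude. Your stated worry about keeping $K$ out of the rate is indeed resolved by that eigenrelation.

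However, the Duhamel formula you wrote uses the wrong reference semigroup, and this breaks the size-independence of the perturbative constant. From $\delta_i^\alpha O(t)=\delta_i^\alpha e^{t\mathcal{L}^0}O+\int_0^t\delta_i^\alpha e^{(t-s)\mathcal{L}^0}(\mathcal{L}-\mathcal{L}^0)O(s)\,\dd s$ the integrand contains $\delta_i^\alpha\sum_{j}(\mathcal{L}_j-\mathcal{L}_j^0)O(s)$, and for $j\neq i$ the split $\delta_i^\alpha(\mathcal{L}_j-\mathcal{L}_j^0)=[\delta_i^\alpha,\mathcal{L}_j]+(\mathcal{L}_j-\mathcal{L}_j^0)\delta_i^\alpha$ leaves, besides the commutator that Lemma~\ref{lemma:eta} does control, a residual piece bounded only by $\sum_{j\neq i}\|\mathcal{L}_j-\mathcal{L}_j^0\|_{\infty\to\infty}\,\|\delta_i^\alpha O(s)\|\le (n-1)\,\xi(\lambda)\,\|\delta_i^\alpha O(s)\|$. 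This term is extensive and admits no cancellation, so your Volterra inequality would carry a Gr\"onwall constant of order $n\lambda$ rather than $\eta$, giving rapid mixing only for $\lambda=\mathcal{O}(1/n)$. The missing idea is the choice of interaction picture: one differentiates $e^{-\lambda_i^\alpha s}\,\mathcal{P}^{t-s}_{\neq i}\,\delta_i^\alpha\,\mathcal{P}^s O$ with $\mathcal{P}^{t}_{\neq i}=e^{t\sum_{j\neq i}\mathcal{L}_j}$ generated by the \emph{full perturbed} Lindbladians away from site $i$ (still a contraction). Then $\sum_{j\neq i}\mathcal{L}_j\,\delta_i^\alpha O(s)$ is cancelled exactly by the generator of $\mathcal{P}_{\neq i}$, and only $\delta_i^\alpha(\mathcal{L}_i-\mathcal{L}_i^0)O(s)$ together with the commutators $[\delta_i^\alpha,\mathcal{L}_j]O(s)$ survives\,---\,precisely the two quantities whose coefficients Lemma~\ref{lemma:eta} sums to the size-independent $\eta$. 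With that correction, the remainder of your argument (the factor $d^2$ from the $\alpha'$-sum, the Gr\"onwall solution under $d^2\eta<\Delta_0$, the initial bound $\sum_{i,\alpha>1}\|\delta_i^\alpha O\|\le 2d^4e^{2\beta\Delta E_\star}n$, and solving for $t$) goes through as you describe.
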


\begin{proof}
    First recall the Definition of the oscillator norm \ref{def:original oscillator norm} and adapt it to the qudit case with dimension $d$ by defining $\delta_i(A) = A - C_i(A)$, where
    $C_I(A)=d^{-|I|}\Tr_{I}(A)$ is the normalised trace on subsystem $I$.
    Then we have as previously
    \begin{equation}
        \|\rho(t)-\sigma\|_{\Tr} 
        = \sup_{\|O\|\leq 1} \|O(t) - \Tr(O(t))/d^n\| \cdot \|\rho - \sigma\|_{\Tr}
        \le 
        2
        \sup_{\|O\|\leq 1}
        \sum_{i=1}^n 
        \|\delta_i(O(t))\| =  2
        \sup_{\|O\|\leq 1} \oscnorm{O(t)}\,.
    \end{equation}

    Now note that $\mathcal{L}^0 = \sum_{i=1}^n \mathcal{L}^0_i$ where $\mathcal{L}_i^0$ is supported only on qudit $i$.
    Recall from Lemma \ref{lemma:properties Falpha} that $\mathcal{L}^0_i$ has a basis of eigenvectors $F^\alpha_i$ with eigenvalues $\lambda_i^\alpha$ which are orthonormal w.r.t.~the KMS inner product with the state $\sigma^0_i=
    \e^{-\beta h_i}/Z_i^0$, so that we can expand the observable $O$ as
    \begin{align}
        O
        =
        \sum_{\alpha=1}^{d^2}
        F_i^\alpha
        \langle F_i^\alpha, O\rangle_{\sigma^0_i}
    \end{align}
    and define the corresponding partial quasi-derivations
    \begin{align}
        \delta_i^\alpha(O)
        =
        \delta_i( 
        F_i^\alpha)
        \langle 
        F_i^\alpha,O \rangle_{\sigma^0_i}\,.
    \end{align}
    Note that 
    $\delta_i^1(O) = 0$, and so $
        \sum\limits_{\alpha=2}^{d^2}
        \delta_i^\alpha(O)
        =
        \delta_i(O)$, as well as 
        $\delta_i^\alpha
        (F^{\alpha'}_i)
        =
        \delta_{\alpha,\alpha'}
        \delta_i(F^\alpha)$.
    The first equation follows since $F_i^1=\id$ and $\delta_i(\id)=0$.
    We can hence bound the original oscillator norm as
    \begin{align}
        \sum_{i=1}^n\| \delta_i(O)\|
        \le 
        \sum_{\alpha=2}^{d^2}
        \sum_{i=1}^n
        \| \delta^\alpha_i(O)\|\,.
    \end{align}
    The usefulness of the quasi-derivation $\delta_i^\alpha$ is that it satisfies the following relation:
    \begin{align}
        \label{eq:closure delta alpha}
        \delta_i^\alpha 
        \mathcal{L}_i^0 O
        =
        \delta_i^\alpha 
        \sum_{\alpha'}
        \lambda_i^{\alpha'}
        F_i^{\alpha'}
        \langle 
        F_i^{\alpha'},
        O
        \rangle_{\sigma^0_i}
        =
        \sum_{\alpha'}
        \lambda_i^{\alpha'}
        \langle 
        F_i^{\alpha'},
        O
        \rangle_{\sigma^0_i}
        \delta_i^{\alpha}(F_i^{\alpha'})
        =
        \lambda_i^{\alpha}
        \langle 
        F_i^{\alpha},
        O
        \rangle_{\sigma^0_i}
        \delta_i(F_i^{\alpha})
        =
        \lambda_i^{\alpha}
        \delta_i^\alpha(O)
        \,,
    \end{align} which is a key ingredient for bounding the decay of $\oscnorm{O(t)}$.
    The remainder of the proof largely follows \cite{Majewski1995} and \cite{rouze2024optimalquantumalgorithmgibbs}, and we report it here for the reader's convenience.
    Let $\mathcal{P}^t=\e^{t\mathcal{L}}$, 
    $\mathcal{P}^t_{\neq i}=\e^{t\sum_{j\neq i}\mathcal{L}_j}$.
    Then
    \begin{align}\hspace{-0.3cm}
        \frac{\partial}{\partial s}\Big( 
        \e^{-\lambda_i^\alpha s}
        \mathcal{P}^{t-s}_{\neq i}
        &\delta_i^\alpha \mathcal{P}^s O
        \Big)
        =
        -\lambda_i^\alpha
        \e^{-\lambda_i^\alpha s}
        \mathcal{P}^{t-s}_{\neq i}
        \delta_i^\alpha O(s)
        +
        \e^{-\lambda_i^\alpha s}
        \mathcal{P}^{t-s}_{\neq i}
        \Big( 
        -\sum_{j\neq i}\mathcal{L}_j
        \Big)
        \delta_i^\alpha O(s)
        +
        \e^{-\lambda_i^\alpha s}
        \mathcal{P}^{t-s}_{\neq i}
        \delta_i^\alpha 
        \Big( 
        \sum_{j}\mathcal{L}_j
        \Big)
        O(s)\hspace{-0.3cm}
        \\
        &=
        -
        \e^{-\lambda_i^\alpha s}
        \mathcal{P}^{t-s}_{\neq i}
        \delta_i^\alpha 
        \mathcal{L}_i^0
        O(s)
        +
        \e^{-\lambda_i^\alpha s}
        \mathcal{P}^{t-s}_{\neq i}
        \Big( 
        -\sum_{j\neq i}\mathcal{L}_j
        \Big)
        \delta_i^\alpha O(s)
        +
        \e^{-\lambda_i^\alpha s}
        \mathcal{P}^{t-s}_{\neq i}
        \delta_i^\alpha
        \Big( 
        \sum_{j\neq i}\mathcal{L}_j
        +
        \mathcal{L}_i
        \Big)
        O(s)
        \\
        &=
        \e^{-\lambda_i^\alpha s}
        \mathcal{P}^{t-s}_{\neq i}
        \delta_i^\alpha 
        (
        \mathcal{L}_i
        -
        \mathcal{L}_i^0)
        O(s)
        +
        \e^{-\lambda_i^\alpha s}
        \mathcal{P}^{t-s}_{\neq i}
        \Big[
        \delta_i^\alpha,
        \sum_{j\neq i}\mathcal{L}_j
        \Big]
         O(s)\,,
    \end{align}
    where in the second line we used relation \eqref{eq:closure delta alpha}.
    Now we integrate this equation from $0$ to $t$:
    \begin{align}
        \e^{-\lambda_i^\alpha t}
        \delta_i^\alpha O(t)
        -
        \mathcal{P}^{t}_{\neq i}
        \delta_i^\alpha O
        =
        \int_0^t \dd s\,
        \e^{-\lambda_i^\alpha s}
        \mathcal{P}^{t-s}_{\neq i}
        \Big( 
        \delta_i^\alpha 
        (
        \mathcal{L}_i
        -
        \mathcal{L}_i^0)
        O(s)
        +
        \Big[
        \delta_i^\alpha,
        \sum_{j\neq i}\mathcal{L}_j
        \Big]
         O(s)
        \Big)
    \end{align}
    and take the operator norm, using that $\mathcal{P},\mathcal{P}_{\neq i}$ are contractions 
    (see e.g.~\cite{P_rez_Garc_a_2006}):
    \begin{align}
        \|\delta_i^\alpha O(t)\|
        &\le
        \e^{\lambda_i^\alpha t}
        \|\mathcal{P}^{t}_{\neq i}
        \delta_i^\alpha O\|
        +
        \int_0^t \dd s\,
        \e^{\lambda_i^\alpha (t-s)}
        \|
        \mathcal{P}^{t-s}_{\neq i}
        \Big(
        \delta_i^\alpha 
        (
        \mathcal{L}_i
        -
        \mathcal{L}_i^0)
        O(s)
        +
        \sum_{j\neq i}
        [
        \delta_i^\alpha,
        \mathcal{L}_j
        ]
        O(s)
        \Big)
        \|\\
        &\le 
        \e^{\lambda_i^\alpha t}
        \|
        \delta_i^\alpha O\|
        +
        \int_0^t \dd s\,
        \e^{\lambda_i^\alpha (t-s)}
        \|
        \delta_i^\alpha 
        (
        \mathcal{L}_i
        -
        \mathcal{L}_i^0)
        O(s)
        \|
        +
        \sum_{j\neq i}\|
        [
        \delta_i^\alpha,
        \mathcal{L}_j
        ]
        O(s)\|
        \Big)\,.
    \end{align}
    At this point we use 
    Lemma \ref{lemma:bounds on delta L} and \ref{lemma:eta} so that there exist $\kappa_{ij}^{\ell,\alpha}, \gamma_i^{\ell,\alpha},\eta\ge 0$ such that, denoting $\lambda^\alpha_\star=\max_i\lambda_i^\alpha$, and by 
    $\Delta_0 = -\max_{\alpha>1,i}\lambda^\alpha_i$ the spectral gap of the Lindbladian $\mathcal{L}^0$, we have
    \begin{align}
        \sum_{\alpha>1}\sum_i 
        \|\delta_i^\alpha O(t)\|
        &\le
        \sum_{\alpha>1}\e^{\lambda_\star^\alpha t}
        \sum_i 
        \|
        \delta_i^\alpha O\|
        +
        \sum_{\alpha>1}
        \int_0^t \dd s\,
        \e^{\lambda_\star^\alpha (t-s)}
        \sum_{\alpha'>1}
        \sum_\ell        
        \sum_i 
        \Big(
        \gamma_i^{\ell,\alpha'}
        +
        \sum_{j\neq i}
        \kappa_{ij}^{\ell,\alpha'}
        \Big)\| \delta_\ell^{\alpha'}
        O(s)\|
        \\
        &\le 
        \sum_{\alpha>1}\e^{\lambda_\star^\alpha t}
        \sum_i 
        \|
        \delta_i^\alpha O\|
        +
        \eta
        \sum_{\alpha>1}
        \int_0^t \dd s\,
        \e^{\lambda_\star^\alpha (t-s)}
        \sum_{\alpha'>1}
        \sum_\ell        
        \| \delta_\ell^{\alpha'}O(s)\|\\
        &\le 
        \e^{-\Delta_0  t}
        \sum_{\alpha>1}
        \sum_i 
        \|
        \delta_i^\alpha O\|
        +
        \eta d^2
        \int_0^t \dd s\,
        \e^{-\Delta_0 (t-s)}
        \sum_{\alpha'>1}
        \sum_\ell        
        \| \delta_\ell^{\alpha'}O(s)\|
        \,.
    \end{align}    
    Now this equation is of the form
    \begin{align}
    \e^{-\lambda t} f(t) \le    
    f(0)
    +
    \eta
    \int_0^t \dd s\,
    \e^{-\lambda s}
    f(s)    
    \end{align}
    with 
    $f(t) = \sum_i 
    \|\delta_i^\alpha O(t)|$.
    Taking the derivative w.r.t.~$t$ yields
    \begin{align}
        -\lambda \e^{-\lambda t} f(t)
        +
        \e^{-\lambda t} \dot{f}(t)
        \le \eta e^{-\lambda t} f(t)\,,
    \end{align}
    which from Gr\"onwall's inequality gives
    $f(t)\le \e^{(\lambda + \eta)t} f(0)$, and in our case
    \begin{align}
        \sum_{\alpha>1}
        \sum_i 
        \|\delta_i^\alpha O(t)\|
        \le 
        \e^{(-\Delta_0 + \eta d^2) t}
        \sum_{\alpha>1}
        \sum_i 
        \|\delta_i^\alpha O\|\,.
    \end{align}
    Since $C_i$ is a contraction and using the calculations of Lemma \ref{lemma:properties Falpha}, we have
    \begin{align}
        \|\delta_i^\alpha O\|
        &=
        \|
        \langle 
        F_i^\alpha,O \rangle_{\sigma^0_i}
        \delta_i( 
        F_i^\alpha)
        \|
        \le 
        2\| \langle 
        F_i^\alpha,O \rangle_{\sigma^0_i}
        F_i^\alpha \|
        \le 
        2
        \|  
        (F_i^\alpha)^\dagger
        (\sigma_i^0)^{1/2}O
        (\sigma_i^0)^{1/2}
        \|
        \|
        F_i^\alpha \|
        \le 
        2
        d^2 \e^{2\beta \Delta E_i} \| O \|\,,
    \end{align}
    so that finally
    \begin{align}
        \sup_{\|O\|\le 1}\sum_{\alpha>1}
        \sum_i 
        \|\delta_i^\alpha O(t)\|
        \le 
        d^2 d_\star
        n
        \e^{(-\Delta_0 + \eta d^2) t}
        \,,\quad 
        d_\star =2
        d^2 \e^{2\beta \Delta E_\star}\,,
    \end{align}
    where $\Delta E_\star = \max_i 
    \Delta E_i$ and thus
    \begin{align}
        \|\rho(t)-\sigma\|_{\Tr}
        \le 
        4
        d^4 \e^{2\beta \Delta E_\star}
        n
        \e^{(-\Delta_0 + \eta d^2) t}\,.
    \end{align}
    For this to be decaying exponentially in time, we need 
    $d^2 \eta <\Delta_0$.
    Setting the r.h.s.~of the equation above smaller than $\epsilon$ and solving for $t$, we get the logarithmic bound on the mixing time
    \begin{align}
        t_{\textup{mix}}(\epsilon)
        \le 
        \frac{1}{\Delta_0 - d^2\eta}
        \log\left(
        4
        d^4 \e^{2\beta \Delta E_\star}
        \frac{n}{\epsilon} \right)\,.
    \end{align}
\end{proof}

The condition $d^2\eta < \Delta_0$ corresponds to the following bound on the coupling strength $\lambda$:

\begin{corollary}\label{cor:maximal lambda}
The Lindbladian from Theorem \ref{thm:rapid mixing separable general} mixes rapidly if 
\begin{align}
    \label{eq:lambda bound}
   |\lambda| 
   <
   \lambda_\textup{max} \equiv 
   \frac{\Delta_0 - d^2 f(r_0)}{d^2 (1+ 4d^2\e^{2\beta \Delta E_\star})
   (2r_0+1)^{2D} C'}
   \,.
\end{align}
Here, $f(r_0)$ is defined in \eqref{eqn:f_decay function} together with \eqref{eq:Delta ell}, $C'$ in Lemma \ref{lemma:bounds on delta L}, and $r_0$ is a positive parameter that controls the bounds in Lemma \ref{lemma:eta}, which can be always chosen such that $\lambda_\textup{max}>0$.
\end{corollary}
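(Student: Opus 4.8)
The plan is to extract the bound on $|\lambda|$ directly from the condition $d^2\eta < \Delta_0$ established in Theorem \ref{thm:rapid mixing separable general}, by tracing through how $\eta$ depends on $\lambda$. Recall from Lemma \ref{lemma:eta} that $\eta = \xi(\lambda)(1+2d_\star)(2r_0+1)^{2D} + f(r_0)$, where $\xi(\lambda)$ is the bound on $\|\mathcal{L}_i - \mathcal{L}_i^0\|_{\infty\to\infty}$, i.e.\ the strength of the Lindbladian perturbation induced by the Hamiltonian perturbation $\lambda V$. The first step is therefore to invoke Lemma \ref{lemma:bounds on delta L} to obtain the explicit form $\xi(\lambda) = |\lambda| \, C'$ for the constant $C'$ appearing there (linear dependence on $\lambda$ being the expected leading behaviour for a perturbation of the Hamiltonian entering the filtered operator Fourier transform). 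Also recall $d_\star = 2d^2 \e^{2\beta \Delta E_\star}$ from the proof of the theorem, so that $1 + 2d_\star$ is of order $d^2 \e^{2\beta \Delta E_\star}$, matching the denominator structure in \eqref{eq:lambda bound}.

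The second step is pure algebra: substitute $\eta = |\lambda| C' (1+2d_\star)(2r_0+1)^{2D} + f(r_0)$ into $d^2 \eta < \Delta_0$ and solve the resulting linear inequality for $|\lambda|$. This gives
\begin{align}
    |\lambda| < \frac{\Delta_0 - d^2 f(r_0)}{d^2 C' (1+2d_\star)(2r_0+1)^{2D}}\,,
\end{align}
and after inserting $d_\star = 2d^2\e^{2\beta\Delta E_\star}$ and simplifying the constant (absorbing the factor-of-two discrepancies and the exact shape of $1+2d_\star$ versus $2d_\star$ into the stated $\lambda_\textup{max}$, and matching the exponent $2D$ versus $D$ via the precise counting bounds used), one recovers exactly the right-hand side of \eqref{eq:lambda bound}. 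The remaining point to address is that $\lambda_\textup{max}$ as written is a genuine positive number: this requires $\Delta_0 - d^2 f(r_0) > 0$, which is achieved by choosing $r_0$ large enough. Since $\Delta(\ell) = \tilde C \e^{-\chi \ell}$ decays exponentially (Equation \eqref{eq:Delta ell}), the quantity $f(r_0)$ — a finite sum of terms each of the form (polynomial in $r_0$ or $\ell$)$\times \Delta(\ell)$ — tends to $0$ as $r_0 \to \infty$, because exponential decay beats polynomial growth. Hence $d^2 f(r_0) < \Delta_0$ for all sufficiently large $r_0$, and any such choice yields $\lambda_\textup{max} > 0$.

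I do not anticipate a serious obstacle here; the corollary is essentially a bookkeeping restatement of the theorem's hypothesis in terms of the physical coupling $\lambda$. The one place requiring mild care is reconciling the constants: the proof of Theorem \ref{thm:rapid mixing separable general} and Lemma \ref{lemma:eta} carry various powers of $d$, factors of $(2r_0+1)$ with exponent $D$ versus $2D$, and the distinction between $1+2d_\star$ and $2d_\star$, so one must verify that the clean expression in \eqref{eq:lambda bound} is a valid (i.e.\ sufficient, possibly slightly conservative) lower bound on the true threshold rather than claiming equality with it. This is routine once the dependence $\xi(\lambda) = |\lambda| C'$ from Lemma \ref{lemma:bounds on delta L} is in hand, and the monotone decay of $f(r_0)$ guarantees the final assertion that $r_0$ can always be chosen to make $\lambda_\textup{max}$ strictly positive.
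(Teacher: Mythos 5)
Your route is exactly the paper's (implicit) one: substitute $\xi(\lambda)=C'|\lambda|$ from Lemma \ref{lemma:bounds on delta L} into $\eta=\xi(\lambda)(1+2d_\star)(2r_0+1)^{2D}+f(r_0)$ from Lemma \ref{lemma:eta}, impose the hypothesis $d^2\eta<\Delta_0$ of Theorem \ref{thm:rapid mixing separable general}, and solve the linear inequality for $|\lambda|$; and your positivity argument (each term of $f(r_0)$ is a polynomial times $\e^{-\chi\ell}$, so $f(r_0)\to 0$ and the numerator is eventually positive) is the same as the paper's.

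The one place where your hedging points in the wrong direction: the inequality your derivation actually yields is
\begin{align}
|\lambda| \;<\; \frac{\Delta_0-d^2 f(r_0)}{d^2\,C'\,(1+2d_\star)\,(2r_0+1)^{2D}}\,,
\end{align}
and with $d_\star=2d^2\e^{2\beta\Delta E_\star}$ the denominator $d^2\bigl(1+4d^2\e^{2\beta\Delta E_\star}\bigr)(2r_0+1)^{2D}C'$ is strictly \emph{larger} than the stated $2d^4\e^{2\beta\Delta E_\star}(2r_0+1)^{D}C'$. Hence the $\lambda_\textup{max}$ in \eqref{eq:lambda bound} is \emph{larger} (less conservative) than the threshold your argument supports, so the discrepancy cannot be dismissed by ``absorbing factor-of-two discrepancies'': as written, $|\lambda|<\lambda_\textup{max}$ does not imply $d^2\eta<\Delta_0$. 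This mismatch sits in the paper's stated constant rather than in your logic, but a self-contained proof should conclude with the derived threshold (carrying $(2r_0+1)^{2D}$ and $1+2d_\star$) rather than assert equality with \eqref{eq:lambda bound}.
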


Note that in \eqref{eq:lambda bound} $C'$ depends on $\beta$ and the decay rate of the interactions in the perturbation,  $f(r_0)$
depends on $\beta$ and the parameters occurring in the Lieb-Robinson bound and $\Delta_0$ depends on the non-interacting Hamiltonian $H_0$, the choice of jump operators and the filter function -- which in turns depends on $\beta$. 
For a given Hamiltonian, we can use this bound to optimise $r_0$ and the choice of jump operators and filter functions, so that the range of perturbation strengths which correspond to rapid mixing is maximised.
Note that there always exists a choice of $r_0$ such that the numerator is positive for the following reason. For the choice of $\Delta(l)$ in \eqref{eq:Delta ell}, there exist positive constants $C,C'$ such that:
\begin{align}    
    \sum_{\ell>r_0}
    \Delta(\ell)(2\ell+1)^{2D}
    &=
    \tilde{C}
    \sum_{y>0}
    \e^{-\chi(y+r_0)}(2r_0+2y+1)^{2D}
    =
    C
    \e^{-\chi r_0}
    r_0^{2D}
    (1+\mathcal{O}(r_0^{-1}))
\\
    \sum_{\ell'\ge r_0}
\sum_{\ell \ge \ell'}
\Delta(\ell) (2\ell+1)^{2D}
&=
\tilde{C}
\sum_{y,z\ge 0}
\e^{-\chi(y+z+r_0)}(2r_0+2y+2z+1)^{2D}
=
C'
    \e^{-\chi r_0}
    r_0^{2D}
    (1+\mathcal{O}(r_0^{-1}))
\end{align}
so that for large $r_0$ the bound on $\lambda$ is of the form $\Delta_0 r_0^{-2D} - \e^{-\chi r_0} p(r_0)$ up to an overall constant independent of $r_0$, where $p(r_0)$ is a polynomial in $r_0$. Thus, for any values of the parameters appearing in the bound, there is a finite $r_0$ such that $\Delta_0 r_0^{-2D} - \e^{-\chi r_0} p(r_0)$ is positive since the second negative term goes to zero  faster than the first term.\\

While the presentation here focused on exponentially decaying interactions, we remark that it is possible to extend these results for slowly decaying power-law interactions as done in the high-temperature case in \cite{rouze2024optimalquantumalgorithmgibbs}.
Using formula \eqref{eq:runtime_quantum} the results presented in this section directly translate to a $\tilde{\mathcal{O}}(n^2)$ Hamiltonian simulation time to prepare the Gibbs state of weakly-interacting spin systems.


\section{Fermionic Systems}\label{sec:fermions}

\subsection{Oscillator norms for fermions}

This section will adapt the oscillator norm tools to the case of fermionic Lindbladians. We shall work with canonical fermionic creation and annihilation operators $c^\dagger, c$ obeying $\{c_i,c_j^\dagger\} = \delta_{ij}$. We will follow the definitions and theory for fermions developed in \cite{zhan2025rapidquantumgroundstate}.

\begin{definition}\label{def:fermionic trace}
    The \textit{fermionic quasi-derivation} $\delta^f_a$ at site $a$ is defined by \begin{equation}
        \delta^f_a(O) = O - \frac{c_a^\dagger c_a + c_a c_a^\dagger}{2} \Tr^f_a(O) = O - \frac{I}{2} \Tr^f_a(O)\,,
    \end{equation} where $\Tr^f_a$ denotes \textit{fermionic partial trace} over site $a$. This can be defined by a sum-product formula \begin{equation}
        \Tr_a^f(O) = \frac{1}{2}\left(O + (c_a+c_a^\dagger) O (c_a+c_a^\dagger) + (c_a-c_a^\dagger) O (c_a^\dagger - c_a) + (1-2c_a^\dagger c_a) O (1-2c_a^\dagger c_a)\right)\,.\label{eqn:fermionic trace}
    \end{equation}
\end{definition}

\begin{definition}\label{def:fermionic osci norm}
    The \textit{projected fermionic oscillator norm} is defined by \begin{equation}
        \oscnorm{O} = \sum_{i=1}^n \| \delta_i^f(P_i(O))\| + \| \delta_i^f(Q_i(O))\|\,,
    \end{equation} where $P_i$ is the projector onto the diagonal space at site $i$ and $Q_i$ onto the off-diagonal space, given by \begin{align}
        P_i(O) &= c_i c_i^\dagger O c_i c_i^\dagger + c_i^\dagger c_i O c_i^\dagger c_i\,,\\
        Q_i(O) &= c_i c_i^\dagger O c_i^\dagger c_i + c_i^\dagger c_i O c_i c_i^\dagger\,.
    \end{align}
\end{definition}

\begin{remark}
    The previous definitions are given with respect to some specific set of canonical fermions ${c_i}$, but later we shall consider redefining them in terms of a different canonical set, specific to the given Lindbladian.
\end{remark}

\begin{remark}
    The projected oscillator norm is an upper bound to the non-projected one defined in \ref{def:original oscillator norm} as \begin{equation}
         \oscnorm{O} = \sum_{i=1}^n \| \delta_i^f(P_i(O))\| + \| \delta_i^f(Q_i(O))\| \geq  \sum_{i=1}^n \| \delta_i^f(P_i(O)+Q_i(O))\| =  \sum_{i=1}^n \| \delta_i^f(O)\|\,.
    \end{equation}
\end{remark}

\begin{lemma}
    We can bound the trace distance between a time-evolved state $\rho(t)$ and the Gibbs state $\sigma$ (i.e. the fixed state of the evolution) using this oscillator norm like \begin{equation}
        \|\rho(t)-\sigma\|_{\Tr} \leq \sup_{\|O\|\leq 1} \oscnorm{O(t)} \cdot \|\rho(0)-\sigma\|_{\Tr}  \leq 2 \cdot \sup_{\|O\|\leq 1} \oscnorm{O(t)}\,,
    \end{equation} where $\rho(t) = e^{t\mathcal{L}^\dagger}[\rho(0)]$ and $O(t) = e^{t\mathcal{L}}[O]$.
\end{lemma}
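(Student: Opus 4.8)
The plan is to mimic the qubit-case computation shown earlier in the excerpt (the one following Definition \ref{def:original oscillator norm}), replacing the ordinary normalised trace by the fermionic partial trace and inserting the projectors $P_i, Q_i$. First I would use duality of the trace norm: writing $\rho(t) = e^{t\mathcal{L}^\dagger}[\rho(0)]$ and moving the evolution onto the observable via the Hilbert--Schmidt adjoint,
\begin{align}
    \|\rho(t)-\sigma\|_{\Tr}
    &= \sup_{\|O\|\le 1}\Tr\!\big(O\,e^{t\mathcal{L}^\dagger}[\rho(0)-\sigma]\big)
    = \sup_{\|O\|\le 1}\Tr\!\big(e^{t\mathcal{L}}[O]\,(\rho(0)-\sigma)\big)
    = \sup_{\|O\|\le 1}\Tr\!\big(O(t)\,(\rho(0)-\sigma)\big)\,,
\end{align}
where I used that $\mathcal{L}$ is the KMS-adjoint generator and that $\sigma$ is stationary for $\mathcal{L}^\dagger$ (so $e^{t\mathcal{L}^\dagger}[\sigma]=\sigma$).

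Next I would exploit that $\rho(0)-\sigma$ is traceless. Because $\Tr(\rho(0)-\sigma)=0$, I can subtract from $O(t)$ any multiple of the identity without changing the pairing; more strongly, I can subtract the ``fully-traced-out'' part. The key algebraic fact I would invoke is a telescoping identity: $O - (\text{global fermionic normalised trace})$ can be written as a sum over sites of terms each controlled by $\delta_i^f$, exactly as in the bosonic/qubit telescoping sum $O-C(O)=\delta_1(O)+\sum_{i\ge 2}C_{[i-1]}\delta_i(O)$ displayed earlier, but now using $\Tr_i^f$ and the contractivity of the fermionic conditional expectations. Then, decomposing $O(t) = \sum_i \big(\text{local pieces}\big)$ and further splitting each site-$i$ contribution into its diagonal part $P_i(O(t))$ and off-diagonal part $Q_i(O(t))$ (using $P_i+Q_i=\mathrm{id}$ at site $i$ and that $P_i,Q_i$ are trace-norm contractions), Hölder's inequality $|\Tr(AB)|\le\|A\|\,\|B\|_{\Tr}$ gives
\begin{align}
    \|\rho(t)-\sigma\|_{\Tr}
    \le \sup_{\|O\|\le 1}\Big(\sum_{i=1}^n \|\delta_i^f(P_i(O(t)))\| + \|\delta_i^f(Q_i(O(t)))\|\Big)\,\|\rho(0)-\sigma\|_{\Tr}
    = \sup_{\|O\|\le 1}\oscnorm{O(t)}\cdot\|\rho(0)-\sigma\|_{\Tr}\,.
\end{align}
The final inequality $\|\rho(t)-\sigma\|_{\Tr}\le 2\sup_{\|O\|\le1}\oscnorm{O(t)}$ then follows from the crude bound $\|\rho(0)-\sigma\|_{\Tr}\le\|\rho(0)\|_{\Tr}+\|\sigma\|_{\Tr}=2$.

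The main obstacle I anticipate is the telescoping/contraction step in the fermionic setting: unlike the tensor-product qubit case, the fermionic conditional expectations $\Tr_i^f$ do not obviously commute with each other on non-diagonal operators, and one must be careful that the analogue of $C_{[i-1]}\circ C_i = C_{[i-1]\cap\{i\}}$ and the contractivity $\|C_I(X)\|\le\|X\|$ still hold for the fermionic partial traces. This is presumably exactly why the norm is \emph{projected} onto the diagonal/off-diagonal sectors $P_i,Q_i$ in Definition \ref{def:fermionic osci norm} --- on each graded sector the fermionic conditional expectations behave like honest conditional expectations onto a commuting subalgebra, and the telescoping argument goes through sector by sector. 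I would therefore carry out the telescoping separately on the $P$-sector and the $Q$-sector, invoking the fermionic theory of \cite{zhan2025rapidquantumgroundstate} for the requisite contractivity and (graded) commutation properties, and assemble the two bounds at the end.
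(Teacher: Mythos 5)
Your proposal is correct and is essentially the argument the paper intends: trace-norm duality to pass to the Heisenberg picture, tracelessness of $\rho(0)-\sigma$ plus the telescoping bound $\|O-C(O)\|\le\sum_i\|\delta_i^f(O)\|$, and then the remark that the projected norm dominates the unprojected one via $P_i+Q_i=\mathrm{id}$ and the triangle inequality (your sector-by-sector telescoping is an unnecessary complication for this last step). The only point you omit that the paper flags is the restriction of the supremum to \emph{even} observables, which is harmless since $\rho(0)-\sigma$ is even and projecting $O$ onto its even part does not increase $\|O\|$.
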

This holds equally as before for the case of fermions when we restrict ourselves to the subspace of even observables. Hence, since we can now bound $\oscnorm{O} \leq 4n$ for any $O$ obeying $\|O\| \leq 1$, it again suffices to show that $\oscnorm{O(t)} \leq e^{-\alpha t} \oscnorm{O(0)}$ for some $\alpha > 0$ to prove rapid mixing of the fermionic Lindbladian $\mathcal{L}$.


\subsection{Free fermions}\label{sec: rapid mixing of free fermions}

Consider a free fermionic Hamiltonian
\begin{align}
H_0 = \sum_{i,j} M_{ij} c_i ^\dagger c_j,
\end{align}
where $M$ is a hermitian $n\times n$ matrix, which we will also assume to be real for simplicity (so that it is symmetric). Note that we can diagonalise $M$ like $M = U^T D U$, with $U$ being orthogonal (real and unitary), where $D = \operatorname{diag}(\epsilon_1, \dots \epsilon_{n})$ are the eigenvalues of $M$. Consider the Bogoliubov transformation $\mathbf{b} = U \cdot \mathbf{c}$. These operators also form a set of canonical fermions, with respect to which the free fermionic Hamiltonian becomes $H_0 = \sum_i \epsilon_ib_i^\dagger b_i$.\\

Similarly to \cite{smid2025polynomial,Tong_2025Fast}, choose the jump operators of the Gibbs sampler to be $A_i^{(1)} = c_i$ and $A_i^{(2)} = c_i^\dagger$ and the filter functions $\hat{f}^a = \hat{f}$ to be equal and real (note that we can choose non-Hermitian jumps if we also include their conjugates and choose the filter functions of the pair to be the same). The non-interacting Lindbladian in the Heisenberg picture is then \begin{align}\hspace{-0.2cm}
    \mathcal{L}_0[O] &= \sum_{a=1}^n L_{0,a}^{(1)\dagger} O L^{(1)}_{0,a} - \frac{1}{2}\{L_{0,a}^{(1)\dagger} L^{(1)}_{0,a},O\} + \sum_{a=1}^n L_{0,a}^{(2)\dagger} O L^{(2)}_{0,a} - \frac{1}{2}\{L_{0,a}^{(2)\dagger} L^{(2)}_{0,a},O\}\\
    &= \sum_{a=1}^n \sum_{s,q} \hat{f}(-M)_{as}\hat{f}(-M)_{aq} \cdot \left( c_s^\dagger Oc_q -\frac{1}{2} \{c_s^\dagger c_q,O\} \right) +  \sum_{a=1}^n \sum_{s,q} \hat{f}(M)_{as}\hat{f}(M)_{aq} \cdot \left( c_s Oc_q^\dagger -\frac{1}{2} \{c_s c_q^\dagger,O\} \right)\hspace{-0.2cm}\\
     &= \sum_{i=1}^n \hat{f}(-\epsilon_i)^2 \cdot \left(b_i^\dagger O b_i - \frac{1}{2} \{b_i^\dagger b_i, O\}\right) + \hat{f}(\epsilon_i)^2 \cdot \left(b_i O b_i^\dagger - \frac{1}{2} \{b_i b_i^\dagger, O\}\right) \equiv \sum_{i=1}^n \mathcal{L}_i^{(1)}[O] + \mathcal{L}_i^{(2)}[O]\,,
\end{align} where on the last line we have diagonalised the Lindbladian using the Bogoliubov transformation. Note that this is the same form we would get if we were to consider the diagonalised Hamiltonian $H_0 = \sum_i \epsilon_ib_i^\dagger b_i$ and take the jump operators to be the set $\{b_i,b_i^\dagger\}_{i=1}^n$. This happens because any Lindbladian is invariant under a unitary transformation of the Lindblad operators, $L_a \to L_a ' = \sum_a U_{ab}L_b$, and so any unitary transformation of the jump operators actually leads to the same Lindbladian.\\

Now, we define the fermionic oscillator norm (Definitions \ref{def:fermionic trace}, \ref{def:fermionic osci norm}) in terms of the transformed Bogoliubov fermions $b_i$. Consider a general even observable $O$, and expand it along the $i$th site, so that \begin{equation}
    O = \id \otimes~O^i_{0,0} + b_i^\dagger \otimes O^i_{1,0} + b_i \otimes O^i_{0,1} + b_i^\dagger b_i \otimes O^i_{1,1}\,,
\end{equation} where $\otimes$ denotes the fermionic graded tensor product with a fixed ordering. Then we find
\begin{equation}
    \delta_i^f(O) = \frac{1}{2}(b^\dagger_i b_i - b_i b_i^\dagger) \otimes O^i_{1,1} + b_i^\dagger \otimes O^i_{1,0} + b_i \otimes O^i_{0,1}\,,
\end{equation}
as well as
\begin{align}
    \mathcal{L}_i^{(1)}[O] &= \hat{f}(-\epsilon_i)^2 \cdot \left(-b^\dagger_i b_i \otimes O^i_{1,1} - \frac{1}{2} b_i^\dagger \otimes O^i_{1,0} - \frac{1}{2} b_i \otimes O^i_{0,1}\right)\\
     \mathcal{L}_i^{(2)}[O] &= \hat{f}(\epsilon_i)^2 \cdot \left(b_i b_i ^\dagger \otimes O^i_{1,1} - \frac{1}{2} b_i^\dagger \otimes O^i_{1,0} - \frac{1}{2} b_i \otimes O^i_{0,1}\right)\,.
\end{align}
Now we can evaluate
\begin{align}
    \delta_i^f(\mathcal{L}_i^{(1)}[O]) &= \hat{f}(-\epsilon_i)^2 \cdot \left(\frac{1}{2}(b_i b_i^\dagger - b_i^\dagger b_i) \otimes O^i_{1,1}  - \frac{1}{2} b_i^\dagger \otimes O^i_{1,0} - \frac{1}{2} b_i \otimes O^i_{0,1}\right)\\
    \delta_i^f(\mathcal{L}_i^{(2)}[O]) &= \hat{f}(\epsilon_i)^2 \cdot \left(\frac{1}{2}(b_i b_i^\dagger - b_i^\dagger b_i) \otimes O^i_{1,1}  - \frac{1}{2} b_i^\dagger \otimes O^i_{1,0} - \frac{1}{2} b_i \otimes O^i_{0,1}\right)\,,
\end{align}
so that together
\begin{equation}
     \delta_i^f(\mathcal{L}_i[O]) = 2q(\epsilon_i)^2\cosh(\beta \epsilon_i /2) \cdot \left(\frac{1}{2}(b_i b_i^\dagger - b_i^\dagger b_i) \otimes O^i_{1,1}  - \frac{1}{2} b_i^\dagger \otimes O^i_{1,0} - \frac{1}{2} b_i \otimes O^i_{0,1}\right)\,.
\end{equation}
This is important because now we can evaluate the relations between the corresponding actions on projected observables as follows:
\begin{align}
    \delta_i^f(P_i(\mathcal{L}_i[O])) &= - 2q(\epsilon_i)^2\cosh(\beta \epsilon_i /2) \cdot \delta_i^f(P_i(O))\\
    \delta_i^f(Q_i(\mathcal{L}_i[O])) &= - q(\epsilon_i)^2\cosh(\beta \epsilon_i /2) \cdot \delta_i^f(Q_i(O))\,.
\end{align} 
Hence we can follow as previously and find that 
\begin{align}
    \frac{\dd}{\dd t} \delta_i^f(P_i(O(t))) &= \delta_i^f(P_i(\mathcal{L}_i^0[O(t)]))+ \sum_{j \neq i} \delta_i^f(P_i(\mathcal{L}^0_j [O(t)]))\\ 
    &= - 2q(\epsilon_i)^2\cosh(\beta \epsilon_i /2) \cdot \delta_i^f(P_i(O(t)))  + \sum_{j \neq i} \mathcal{L}_j [\delta_i^f(P_i(O(t)))]\,,
\end{align}
and similarly
\begin{align}
    \frac{\dd}{\dd t} \delta_i^f(Q_i(O(t))) &= \delta_i^f(Q_i(\mathcal{L}_i^0[O(t)]))+ \sum_{j \neq i} \delta_i^f(Q_i(\mathcal{L}^0_j [O(t)]))\\ 
    &= - q(\epsilon_i)^2\cosh(\beta \epsilon_i /2) \cdot \delta_i^f(Q_i(O(t)))  + \sum_{j \neq i} \mathcal{L}_j [\delta_i^f(Q_i(O(t)))]\,.
\end{align}
Finally, by integrating $\frac{\dd}{\dd t} \left(e^{ 2q(\epsilon_i)^2\cosh(\beta \epsilon_i /2) \cdot t} \cdot e^{\sum_{j\neq i} t \mathcal{L}^0_j} [\delta_i^f(P_i(O(t)))]\right)$ from $0$ to $t$ (as explained in \cite{rouze2024optimalquantumalgorithmgibbs}), and taking the spectral norm, we obtain the inequality
\begin{align}
    \|\delta_i^f P_i (O(t))\| \leq\ &e^{ -2q(\epsilon_i)^2\cosh(\beta \epsilon_i /2) \cdot t} \cdot  \|\delta_i^f P_i (O(0))\|\,,
\end{align}
where we have also used that CPTP maps are contractive. Similarly also obtain
\begin{align}
    \|\delta_i^f Q_i (O(t))\| \leq\ &e^{ -q(\epsilon_i)^2\cosh(\beta \epsilon_i /2) \cdot t} \cdot  \|\delta_i^f Q_i (O(0))\|\,.
\end{align}
Adding these together, we get that
\begin{align}
     \|\delta_i^f P_i &(O(t))\| +  \|\delta_i^f Q_i (O(t))\| \leq e^{ -q(\epsilon_i)^2\cosh(\beta \epsilon_i /2) \cdot t} \cdot (\|\delta_i^f P_i (O(0))\| + \|\delta_i^f Q_i (O(0))\|)\,.
\end{align}
Denoting the unperturbed Lindbladian gap by $\Delta_0 = 2\cdot \min_i q(\epsilon_i)^2 \cosh(\beta \epsilon_i /2)$, we can sum these up in order to obtain \begin{equation}
    \oscnorm{O(t)} \leq e^{-\frac{\Delta_0}{2} t} \oscnorm{O(0)}\,,
\end{equation} like we wanted. Hence, we arrive at the following proposition:

\begin{prop}
    Free fermionic systems with bounded single particle Hamiltonian, $\|M\| = \mathcal{O}(1)$, mix rapidly in logarithmic time bounded by \begin{equation}
        t_\textup{mix} \leq \frac{2}{\Delta_0} \cdot \log\left( 8 \cdot \frac{n}{\epsilon}\right)\,.
    \end{equation}
\end{prop}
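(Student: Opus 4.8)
The plan is to assemble the pieces already developed above; the Proposition is a clean corollary of the single-site decay machinery, so the ``proof'' is mostly bookkeeping. First I would recall the contractivity estimate established in the preceding paragraphs: working in the Bogoliubov basis $\{b_i\}$ diagonalising both $H_0$ and $\mathcal{L}_0$, the closure relations $\delta_i^f(P_i(\mathcal{L}_i^0[O])) = -2q(\epsilon_i)^2\cosh(\beta\epsilon_i/2)\,\delta_i^f(P_i(O))$ and $\delta_i^f(Q_i(\mathcal{L}_i^0[O])) = -q(\epsilon_i)^2\cosh(\beta\epsilon_i/2)\,\delta_i^f(Q_i(O))$, combined with the integrating-factor/Gr\"onwall argument and the contractivity of the CPTP maps $e^{t\sum_{j\neq i}\mathcal{L}_j^0}$, yield $\oscnorm{O(t)} \le e^{-\Delta_0 t/2}\,\oscnorm{O(0)}$ with $\Delta_0 = 2\min_i q(\epsilon_i)^2\cosh(\beta\epsilon_i/2)$. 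This is the technical heart, and it is already done in the text; I would simply invoke it.

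Next I would feed this into the trace-distance bound. Using the Lemma relating $\|\rho(t)-\sigma\|_{\Tr}$ to the projected fermionic oscillator norm, together with the crude bound $\oscnorm{O} \le 4n$ valid for all even $O$ with $\|O\|\le 1$ (which holds because $P_i$ and $Q_i$ are pinchings, hence operator-norm contractions, and $\delta_i^f$ has operator-norm at most $2$), one gets
\begin{align}
\|\rho(t)-\sigma\|_{\Tr} \le 2\sup_{\|O\|\le 1}\oscnorm{O(t)} \le 2 e^{-\Delta_0 t/2}\sup_{\|O\|\le 1}\oscnorm{O(0)} \le 8n\, e^{-\Delta_0 t/2}\,.
\end{align}
Setting the right-hand side $\le \epsilon$ and solving for $t$ gives exactly $t_{\textup{mix}}(\epsilon) \le \tfrac{2}{\Delta_0}\log(8n/\epsilon)$, which is the claimed bound.

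Finally, to justify the word ``rapid'', I would invoke the hypothesis $\|M\| = \mathcal{O}(1)$: every eigenvalue $\epsilon_i$ of $M$ satisfies $|\epsilon_i| \le \|M\|$, so for the Gaussian filter $q(\nu)^2\cosh(\beta\nu/2) = e^{-\beta^2\nu^2/4}\cosh(\beta\nu/2)$ is bounded below on $[-\|M\|,\|M\|]$ by a strictly positive constant independent of $n$; hence $\Delta_0 = \Omega(1)$ and the above bound is logarithmic in $n$. The only point requiring a little care in writing this up is the graded tensor-product bookkeeping when expanding $O$ along site $i$ — one must check that $\delta_i^f$, $P_i$ and $Q_i$ all act block-diagonally so that the single-site decay rates genuinely factor out — but this is precisely what the explicit evaluations of $\mathcal{L}_i^{(1)}[O]$, $\mathcal{L}_i^{(2)}[O]$ and their images under $\delta_i^f$ in the text already verify, so I expect no genuine obstacle.
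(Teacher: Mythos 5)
Your proposal is correct and follows essentially the same route as the paper: invoke the already-established decay $\oscnorm{O(t)}\le e^{-\Delta_0 t/2}\oscnorm{O(0)}$, combine it with the trivial bound $\oscnorm{O}\le 4n$ and the trace-distance lemma to get $\|\rho(t)-\sigma\|_{\Tr}\le 8n\,e^{-\Delta_0 t/2}$, and solve for $t$. The only cosmetic difference is that you spell out why $\|M\|=\mathcal{O}(1)$ forces $\Delta_0=\Omega(1)$, which the paper leaves implicit; note also that in the unperturbed case the integrating-factor step suffices on its own, with no Gr\"onwall argument needed.
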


\begin{proof}
    Upper bound \begin{equation}
    \|\rho(t) - \sigma\|_{\Tr} \leq \sup_{\|O\|\leq 1} \oscnorm{O(t)} \|\rho(0) - \sigma\|_{\Tr} \leq 8n e^{-\Delta_0 t/2} \overset{\text{set}}{\leq} \epsilon
\end{equation} and solve for $t$. 
\end{proof}


\subsection{Perturbations of non-hopping fermions}

The oscillator norm technique we used to show rapid mixing of the free fermionic Lindbladian can often be extended to show rapid mixing of the perturbed Lindbladian as well \cite{rouze2024optimalquantumalgorithmgibbs, zhan2025rapidquantumgroundstate}, which follows from the locality of the resulting Lindbladian. However, in order to show the rapid mixing of the unperturbed part, we had to define the oscillator norm with respect to the Bogoliubov-transformed $b$-fermions. Since this transformation is non-local, this oscillator norm is no longer compatible with the locality of $\mathcal{L}$ which is with respect to the original $c$-fermions. There are instances where the transformation preserves locality, for example when the free Hamiltonian was already diagonal, which allows us to show rapid mixing for Hamiltonians of the form $H = \sum_{i=1}^n \epsilon_i c_i^\dagger c_i + \lambda V$, which we shall now demonstrate. Note that this case is applicable when the chemical potential of the system is the leading term in the Hamiltonian, or for example to the perturbed Kitaev chain without chemical potential, as given in \cite[Equation (4.5)]{froehlich2019lieschwingerblockdiagonalizationgappedquantum} (but note that our results are not restricted to 1D geometries).\\

The oscillator norm (Definitions \ref{def:fermionic trace}, \ref{def:fermionic osci norm}) would now be defined with respect to the original $c$-fermions. Denote the corresponding interacting and non-interacting Lindbladian by $\mathcal{L}$ and $\mathcal{L}_0$ respectively. Hence consider \begin{align}
    \frac{\dd}{\dd t} \delta_i^f(P_i(O(t))) &=  \delta_i^f(P_i(\mathcal{L}[O(t)]))\\
    &= \delta_i^f(P_i(\mathcal{L}_i^0[O(t)])) + \delta_i^f(P_i((\mathcal{L}_i - \mathcal{L}_i^0) [O(t)])) + \sum_{j \neq i} \delta_i^f(P_i(\mathcal{L}_j [O(t)]))\\ 
    = - 2q(\epsilon_i)^2\cosh(\beta \epsilon_i /2) \cdot &\delta_i^f(P_i(O(t))) + \delta_i^f(P_i((\mathcal{L}_i - \mathcal{L}_i^0) [O(t)])) + \sum_{j \neq i} \mathcal{L}_j [\delta_i^f(P_i(O(t)))] + \sum_{j \neq i} [\delta_i^fP_i,\mathcal{L}_j] [O(t)]\,,
\end{align} and similarly \begin{align}
    \frac{\dd}{\dd t} \delta_i^f(Q_i(O(t))) &= \\
     - q(\epsilon_i)^2\cosh(\beta \epsilon_i /2) \cdot &\delta_i^f(Q_i(O(t))) + \delta_i^f(Q_i((\mathcal{L}_i - \mathcal{L}_i^0) [O(t)])) + \sum_{j \neq i} \mathcal{L}_j [\delta_i^f(Q_i(O(t)))] + \sum_{j \neq i} [\delta_i^f Q_i,\mathcal{L}_j] [O(t)]\,.
\end{align}

Then again integrate $\frac{\dd}{\dd t} \left(e^{ 2q(\epsilon_i)^2\cosh(\beta \epsilon_i /2) \cdot t} \cdot e^{\sum_{j\neq i} t \mathcal{L}_j} [\delta_i^f(P_i(O(t)))]\right)$ from $0$ to $t$, and take the spectral norm to obtain the inequality \begin{align}
    \|\delta_i^f P_i (O(t))\| \leq\ &e^{ -2q(\epsilon_i)^2\cosh(\beta \epsilon_i /2) \cdot t} \cdot  \|\delta_i^f P_i (O(0))\|\\ &+\int_0^t e^{ 2q(\epsilon_i)^2\cosh(\beta \epsilon_i /2) \cdot (s-t)} \cdot \left(\left\|\delta_i^f(P_i((\mathcal{L}_i - \mathcal{L}_i^0) [O(s)]))\right\|  + \left\|\sum_{j \neq i} [\delta_i^f P_i,\mathcal{L}_j] [O(s)]\right\| \right)\ \dd s\,,
\end{align} where we have also used that CPTP maps are contractive (more specifically, their adjoints in the Heisenberg picture, $\mathcal{L}_i$, are contractive with respect to the spectral norm). Similarly also obtain \begin{align}
    \|\delta_i^f Q_i (O(t))\| \leq\ &e^{ -q(\epsilon_i)^2\cosh(\beta \epsilon_i /2) \cdot t} \cdot  \|\delta_i^f Q_i (O(0))\|\\ &+\int_0^t e^{ q(\epsilon_i)^2\cosh(\beta \epsilon_i /2) \cdot (s-t)} \cdot \left(\left\|\delta_i^f(Q_i((\mathcal{L}_i - \mathcal{L}_i^0) [O(s)]))\right\|  + \left\|\sum_{j \neq i} [\delta_i^f Q_i,\mathcal{L}_j] [O(s)]\right\| \right)\ \dd s\,.
\end{align} Adding these together, we get that \begin{align}\label{eqn: sitewise integral inequality}
     \|\delta_i^f P_i &(O(t))\| +  \|\delta_i^f Q_i (O(t))\| \leq e^{ -q(\epsilon_i)^2\cosh(\beta \epsilon_i /2) \cdot t} \cdot (\|\delta_i^f P_i (O(0))\| + \|\delta_i^f Q_i (O(0))\|)\\
     &+ \int_0^t e^{ q(\epsilon_i)^2\cosh(\beta \epsilon_i /2) \cdot (s-t)} \cdot \left(\left\|\delta_i^f(P_i((\mathcal{L}_i - \mathcal{L}_i^0) [O(s)]))\right\|  + \left\|\sum_{j \neq i} [\delta_i^f P_i,\mathcal{L}_j] [O(s)]\right\|\right.\\ &\hspace{3cm}+ \left\|\delta_i^f(Q_i((\mathcal{L}_i - \mathcal{L}_i^0) [O(s)]))\right\| + \left. \left\|\sum_{j \neq i} [\delta_i^f Q_i,\mathcal{L}_j] [O(s)]\right\| \right)\ \dd s\,.
\end{align} We would like to sum these up over the index $i$, but before that, we first make the following assumption: \begin{assumption}\label{assumption: kappa existence}
    Assume that there exist coefficients $\kappa_i^c$ and $\gamma_i^c$, which sum up to at most a constant $\eta$ like $\sum_i \kappa_i^c + \gamma _i^c \leq \eta$, and obey \begin{align}
        \|\delta_i^f(P_i((\mathcal{L}_i - \mathcal{L}_i^0) [O]))\|\ \ ,\ \ \|\delta_i^f(Q_i((\mathcal{L}_i - \mathcal{L}_i^0) [O]))\| &\leq \sum_c \kappa^c_i \cdot (\|\delta_c^f P_c(O)\| + \|\delta_c^f Q_c(O)\|)\\
        \| [\delta_i^f P_i,\sum_{j \neq i}\mathcal{L}_j] [O]\|\ \ , \ \ \| [\delta_i^f Q_i,\sum_{j \neq i}\mathcal{L}_j] [O]\| &\leq \sum_c \gamma^c_i \cdot (\|\delta_c^f P_c(O)\| + \|\delta_c^f Q_c(O)\|)
    \end{align}
\end{assumption}

Now, also denoting the unperturbed Lindbladian gap by $\Delta_0 = 2\cdot \min_i q(\epsilon_i)^2 \cosh(\beta \epsilon_i /2)$, we can sum up \eqref{eqn: sitewise integral inequality} in order to obtain \begin{equation}
    \oscnorm{O(t)} \leq e^{-\frac{\Delta_0}{2} t} \oscnorm{O(0)} + 2\eta \int_0^t e^{\frac{\Delta_0}{2} (s-t)} \oscnorm{O(s)}\ \dd s\,.
\end{equation}
This is a first order integral inequality, known as Gr\"onwall's inequality, which we can solve when $\eta \leq \Delta_0 / 4$ by
\begin{equation}
    \oscnorm{O(t)} \leq e^{-(\Delta_0 /2 - 2\eta)t} \cdot \oscnorm{O(0)}\,.
\end{equation}
Hence, we arrive at the following theorem:

\begin{theorem}
    For quasi-local fermionic systems governed by the Hamiltonian $H = \sum_{i=1}^n \epsilon_i c^\dagger_i c_i + \lambda V$, there exists a constant $\lambda_\textup{max}$ such that for $|\lambda| \leq \lambda_\textup{max}$, the Lindbladian mixes rapidly in logarithmic time bounded by \begin{equation}
        t_\textup{mix} \leq \frac{1}{\Delta_0 /2 - 2\cdot \eta(\lambda)} \cdot \log\left( 8 \cdot \frac{n}{\epsilon}\right)\,.
    \end{equation}
\end{theorem}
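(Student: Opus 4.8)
The plan is to combine the site-wise integral inequality \eqref{eqn: sitewise integral inequality} with a locality analysis of the perturbed Lindbladian, and then close the argument with Gr\"onwall's inequality, exactly as in the qudit case of Theorem \ref{thm:rapid mixing separable general}. The structural fact that makes the fermionic oscillator norm (Definitions \ref{def:fermionic trace} and \ref{def:fermionic osci norm}), defined with respect to the \emph{original} $c$-fermions, usable here is that the Bogoliubov transformation diagonalising $H^0 = \sum_i \epsilon_i c_i^\dagger c_i$ is trivial, so that norm is already compatible with the quasi-locality of $\mathcal{L}$; this is precisely why the argument goes through for non-hopping fermions but not for general free fermions.

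The first step is to establish Assumption \ref{assumption: kappa existence}, i.e.\ to exhibit coefficients $\kappa_i^c,\gamma_i^c$ with a system-size-independent total weight $\eta$. This is the fermionic counterpart of Lemmas \ref{lemma:bounds on delta L} and \ref{lemma:eta}: I would run the same truncation and telescoping-sum scheme, using a Lieb--Robinson bound for the filtered jump operators together with the Gevrey decay of the filter and coherent functions to show that $\mathcal{L}_i - \mathcal{L}_i^0$ has norm $\mathcal{O}(\lambda)$ and that both $\delta_i^f(P_i((\mathcal{L}_i - \mathcal{L}_i^0)[\cdot]))$ and $[\delta_i^f P_i,\mathcal{L}_j]$ (with their $Q_i$ analogues) reach across the lattice only with fast-decaying weights, and then sum those weights geometrically over the lattice to obtain $\eta = \eta(\lambda)$ of the schematic form $\xi(\lambda)\cdot(\text{polynomial geometric factor in }r_0) + f(r_0)$ with $\xi(\lambda) = \mathcal{O}(\lambda)$ and $f(r_0)\to 0$ as $r_0 \to \infty$ --- the Lieb--Robinson tail beating the polynomial factor exactly as in the discussion following Corollary \ref{cor:maximal lambda}.

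Granting Assumption \ref{assumption: kappa existence}, I would sum \eqref{eqn: sitewise integral inequality} over $i$, bound every decay rate below by $\Delta_0/2 = \min_i q(\epsilon_i)^2\cosh(\beta\epsilon_i/2)$, and use $\sum_i(\kappa_i^c + \gamma_i^c)\le \eta$ to arrive at the closed inequality $\oscnorm{O(t)} \le e^{-\Delta_0 t/2}\oscnorm{O(0)} + 2\eta\int_0^t e^{\Delta_0(s-t)/2}\oscnorm{O(s)}\,\dd s$. Multiplying by $e^{\Delta_0 t/2}$, differentiating in $t$, and applying Gr\"onwall's inequality gives $\oscnorm{O(t)} \le e^{-(\Delta_0/2 - 2\eta)t}\oscnorm{O(0)}$ whenever $\eta(\lambda) < \Delta_0/4$. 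Feeding this into the lemma relating the trace distance to the projected fermionic oscillator norm, together with $\oscnorm{O(0)}\le 4n$ for $\|O\|\le 1$ on even observables, gives $\|\rho(t)-\sigma\|_{\Tr}\le 8n\, e^{-(\Delta_0/2 - 2\eta)t}$; setting the right-hand side equal to $\epsilon$ and solving for $t$ yields the claimed logarithmic bound. Finally, $\lambda_\textup{max}$ is defined implicitly through $\eta(\lambda_\textup{max}) = \Delta_0/4$; since $\xi(\lambda)\to 0$ as $\lambda\to 0$ and $f(r_0)$ can be made arbitrarily small by enlarging $r_0$, such a positive $\lambda_\textup{max}$ exists and is explicitly computable for any given model.

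The main obstacle I anticipate is the first step: producing Assumption \ref{assumption: kappa existence} with a genuinely $n$-independent $\eta$. Two sources of leakage must be controlled --- the genuine perturbation $\mathcal{L}_i - \mathcal{L}_i^0$, whose norm is $\mathcal{O}(\lambda)$ but which spreads support because the filtered operator Fourier transform $\hat f(\operatorname{ad}_H)$ is only quasi-local rather than strictly local, and the commutators $[\delta_i^f P_i,\mathcal{L}_j]$, which fail to vanish only because after perturbing the Hamiltonian $\mathcal{L}_j$ is no longer strictly supported away from site $i$. Both require a careful Lieb--Robinson / truncation estimate for the algorithmic Lindbladian in the fermionic setting, together with bookkeeping of the graded tensor structure so that the projectors $P_i$, $Q_i$ and the fermionic partial trace $\Tr_i^f$ interact correctly with $\mathcal{L}_j$. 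Once this estimate is in place, the remaining Gr\"onwall bookkeeping is identical to what was carried out for qudits and for free fermions above.
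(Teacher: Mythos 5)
Your proposal follows essentially the same route as the paper: defining the projected fermionic oscillator norm with respect to the original $c$-fermions (valid precisely because the non-hopping Hamiltonian needs no Bogoliubov rotation), deriving the site-wise integral inequality, invoking the $\kappa_i^c,\gamma_i^c$ summability assumption with $n$-independent $\eta$ established by the Lemma~\ref{lemma:eta}-type truncation argument, closing with Gr\"onwall under $\eta<\Delta_0/4$, and converting the resulting decay of $\oscnorm{O(t)}\le 4n$ into the stated mixing-time bound. The only difference is one of emphasis: you correctly flag that establishing Assumption~\ref{assumption: kappa existence} in the fermionic graded setting is the real technical burden, which the paper itself handles somewhat tersely by deferring to the qudit Lemma~\ref{lemma:eta}.
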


This theorem follows from upper bounding \begin{equation}
    \|\rho(t) - \sigma\|_{\Tr} \leq \sup_{\|O\|\leq 1} \oscnorm{O(t)} \|\rho(0) - \sigma\|_{\Tr} \leq 8n e^{-(\Delta_0 /2 -2\eta)t} \overset{\text{set}}{\leq} \epsilon
\end{equation} and solving for $t$. Finally, to show the existence of $\eta$, we refer back to Lemma \ref{lemma:eta}. Note that understanding $\eta(\lambda)$ then also yields and explicit value for $\lambda_\textup{max}$ as in Corollary \ref{cor:maximal lambda}.

\subsection{Strongly-interacting Fermi-Hubbard model}

In this section, we show how to adapt the techniques for separable Hamiltonians from Section \ref{sec:Spins} to the fermionic setting, and in particular we show rapid mixing of the Fermi-Hubbard model in the strongly-interacting limit. The full Hamiltonian is given by \begin{equation}
    H_\text{FH} = -t\sum_{\langle i,j\rangle, \sigma} \left( c^\dagger _{i,\sigma} c_{j,\sigma} + c^\dagger_{j,\sigma}c_{i,\sigma} \right) + U\sum_{i=1}^n N_{i,\uparrow} N_{i,\downarrow}\,,
\end{equation} where $\langle \cdot,\cdot\rangle$ denotes nearest neighbours on the lattice, $\sigma \in \{\uparrow,\downarrow\}$ represents the two different spins, and $N_{i,\sigma} = c^\dagger _{i,\sigma} c_{i,\sigma}$ is the fermionic number operator of the given spin. We can observe that the interaction part of the system is indeed separable in lattice sites, and so we consider \begin{equation}
    H_0 = U\sum_{i=1}^n N_{i,\uparrow} N_{i,\downarrow}\label{eqn: Hamiltonian, unperturbed strong FH}
\end{equation} as our base Hamiltonian, while the hopping part will be considered as its perturbation.

Now we select the set of jump operators to be $\mathcal{A} = \{c_{i,\downarrow},c^\dagger_{i,\downarrow},c_{i,\uparrow},c^\dagger_{i,\uparrow}\}_{i=1}^n$, i.e.~all the individual ladder operators, and evaluate the corresponding Lindblad operators. We can readily find the time evolution of the jump operators as \begin{align}
    e^{iH_0 t} c_{i,\sigma} e^{-iH_0 t} &= c_{i,\sigma} \cdot \left(1-N_{i,\overline{\sigma}} +e^{-iUt}N_{i,\overline{\sigma}} \right)\,,\\
    e^{iH_0 t} c^\dagger_{i,\sigma} e^{-iH_0 t} &= c^\dagger_{i,\sigma} \cdot \left(1-N_{i,\overline{\sigma}} +e^{+iUt}N_{i,\overline{\sigma}}\right)\,,
\end{align} where $\overline{\sigma}$ denotes the opposite spin to $\sigma$. Hence we find that \begin{align}
    L_i^{(1,\sigma)} &= c_{i,\sigma} \cdot \left(\hat{f}_i(0)\cdot(1-N_{i,\overline{\sigma}} )+\hat{f}_i(-U)\cdot N_{i,\overline{\sigma}} \right)\,,\\
    L_i^{(2,\sigma)} &= c^\dagger_{i,\sigma} \cdot \left(\hat{f}_i(0)\cdot(1-N_{i,\overline{\sigma}} )+\hat{f}_i(+U)\cdot N_{i,\overline{\sigma}} \right)\,.
\end{align} Now let's assume for simplicity that the filter functions are all the same, $\hat{f}_i(\nu) = \hat{f}(\nu)$, are real in the frequency domain, and are also normalised such that $\hat{f}(0)=1$. Denoting $\hat{f}(\pm U) = f_\pm$, we can evaluate the sum \begin{align}
    \sum_{t,\sigma} L_i^{\dagger\ (t,\sigma)}  L_i^{(t,\sigma)} = 2 + (f_+^2 - 1)\cdot (N_{i,\downarrow} + N_{i,\uparrow}) + 2 (f_-^2 - f_+ ^2) \cdot N_{i,\uparrow}N_{i,\downarrow}\,,
\end{align} and notice that it commutes with the Hamiltonian, $\left[H_0, \sum_{t,\sigma} L_i^{\dagger\ (t,\sigma)}  L_i^{(t,\sigma)}\right] = 0$, which ensures that the individual coherent terms $G_i$ vanish. Finally, we obtain the action of the Lindbladian in the Heisenberg picture as \begin{align}
    \mathcal{L}^0_i[O] = &\sum_\sigma \left( \left[1-N_{i,\overline{\sigma}}+f_- \cdot N_{i,\overline{\sigma}}\right] \cdot c^\dagger _{i,\sigma} \cdot O \cdot c_{i,\sigma} \cdot \left[1-N_{i,\overline{\sigma}}+f_- \cdot N_{i,\overline{\sigma}}\right]\right. \\
    &\qquad+ \left.\left[1-N_{i,\overline{\sigma}}+f_+ \cdot N_{i,\overline{\sigma}}\right] \cdot c_{i,\sigma} \cdot O \cdot c^\dagger_{i,\sigma} \cdot \left[1-N_{i,\overline{\sigma}}+f_+ \cdot N_{i,\overline{\sigma}}\right] \right)\\
    &-\frac{1}{2} \left\{ 2 + (f_+^2 - 1)\cdot (N_{i,\downarrow} + N_{i,\uparrow}) + 2 (f_-^2 - f_+ ^2) \cdot N_{i,\uparrow}N_{i,\downarrow}\ ,\  O \right\}\,.
\end{align}

In order to diagonalise this Lindbladian, we shall proceed carefully in order to account for the subtleties due to the fermionic statistics. Firstly observe that the parts of the Lindbladian acting on different sites still commute with each other, $\mathcal{L}^0_i \circ \mathcal{L}^0_j = \mathcal{L}^0_j \circ \mathcal{L}^0_i$. Now consider a basis for observables $O$, where we separate the contribution at the site $i$, $O=F_i \otimes O_i$, with $F_i$ supported only at site $i$ and $O_i$ supported outside of site $i$. In the end, we will need to consider only observables that are in total even, but the situation will differ depending on whether the contribution at site $i$ is even or odd. Let's start by considering $O = F_i \otimes O_i^{(\text{even})}$ (where $F_i$ can now be either even or odd). In this case, we simply have that $\mathcal{L}_i[O] = \mathcal{L}_i[F_i] \otimes O_i^{(\text{even})}$, and we can proceed with the diagonalisation as usual. However, for $O = F_i \otimes O_i^{(\text{odd})}$, we find that \begin{align}
    \mathcal{L}^0_i[O] = \bigg(&\sum_\sigma \left( -\left[1-N_{i,\overline{\sigma}}+f_- \cdot N_{i,\overline{\sigma}}\right] \cdot c^\dagger _{i,\sigma} \cdot F_i \cdot c_{i,\sigma} \cdot \left[1-N_{i,\overline{\sigma}}+f_- \cdot N_{i,\overline{\sigma}}\right]\right. \\
    &\qquad- \left.\left[1-N_{i,\overline{\sigma}}+f_+ \cdot N_{i,\overline{\sigma}}\right] \cdot c_{i,\sigma} \cdot F_i \cdot c^\dagger_{i,\sigma} \cdot \left[1-N_{i,\overline{\sigma}}+f_+ \cdot N_{i,\overline{\sigma}}\right] \right)\\
    &-\frac{1}{2} \left\{ 2 + (f_+^2 - 1)\cdot (N_{i,\downarrow} + N_{i,\uparrow}) + 2 (f_-^2 - f_+ ^2) \cdot N_{i,\uparrow}N_{i,\downarrow}\ ,\  F_i \right\}\bigg) \otimes O_i^{(\text{odd})}\\
    &\equiv \mathcal{L}_i^{(\text{twisted})}[F_i] \otimes O_i^{(\text{odd})}\,,
\end{align} where we have introduced minus signs in front of the first part of the Lindbladian, and denoted this new version as "twisted". In order to build up all the eigenvectors of the full Lindbladian, we would need to consider both $F_i^{(\text{twisted,even})}$ and $F_i^{(\text{twisted,odd})}$, however, since we only care about observables which are in total even, we only need eigenvectors of the forms $F_i^{(\text{even})} \otimes O_i^{(\text{even})}$ and $F_i^{(\text{twisted,odd})} \otimes O_i^{(\text{odd})}$. 

We can then find that the $8$ eigenvectors $F_i^{(\text{twisted,odd})}$ will be of the forms
\begin{align}
    F_i^{\pm,\sigma} &\propto \left(1-2f_+-f_-^2\pm \sqrt{4f_-f_++f_-^4-2f_-^2+1}\right) \cdot c_{i,\sigma} + 2(f_+-f_-+f_-^2-1)\cdot N_{i,\overline{\sigma}}c_{i,\sigma}\\
    F_i^{\pm,\sigma,\dagger} &\propto \left(1-2f_+-f_-^2\pm \sqrt{4f_-f_++f_-^4-2f_-^2+1}\right) \cdot c^\dagger_{i,\sigma} + 2(f_+-f_-+f_-^2-1)\cdot N_{i,\overline{\sigma}}c^\dagger_{i,\sigma}
\end{align} and have eigenvalues \begin{align}
    \lambda_\pm^{(\text{odd})} = \frac{1}{2}\left(-f_-^2-f_+^2-2\mp \sqrt{4f_-f_++f_-^4-2f_-^2+1}\right)\,.
\end{align} Note that the difference between $F_i^{(\text{odd})}$ and $F_i^{(\text{twisted,odd})}$ amounts only into mapping $f_\pm \mapsto -f_\pm$, and so their eigenvalues remain the same, while the coefficients of the eigenvectors undergo these sign changes. For the even eigenvectors $F_i^{(\text{even})}$, we find that \begin{align}
    F_i^{(9)} \propto c_i^\uparrow c_i^\downarrow, \quad F_i^{(10)} \propto c_i^{\dagger \uparrow} c_i^{\dagger \downarrow}
\end{align} are eigenvectors with eigenvalues \begin{align}
    \lambda^{(\text{even})}_{9,10} = -1-f_-^2\,,
\end{align} then \begin{align}
    F_i^{(11)} \propto c_i^{\dagger \uparrow} c_i^\downarrow, \quad F_i^{(12)} \propto c_i^{\uparrow} c_i^{\dagger \downarrow}
\end{align} are eigenvectors with eigenvalues \begin{align}
    \lambda^{(\text{even})}_{11,12} = -1-f_+^2\,,
\end{align} and finally we have \begin{align}
    F_i^{(13)} &\propto I\quad \text{with}\quad \lambda_{13}^{(\text{even})}=0\\
    F_i^{(14)} &\propto N_{i,\uparrow} - N_{i,\downarrow} \quad \text{with}\quad \lambda_{14}^{(\text{even})}=-1-f_+^2\\
    \lambda^{(\text{even})}_{15,16} &= \frac{1}{2}\left( -f_+^2-2f_-^2-3 \mp \sqrt{f_+^4+4f_+^2f_-^2 - 2f_+^2+4f_-^4-12f_-^2+9} \right) 
\end{align} with $F_i^{(15,16)}$ being given in Appendix \ref{appendix: eigenvectors} due to their length.
In particular, we see that the identity is a unique $0$-eigenstate and the other eigenvalues are negative (as expected), and that the spectral gap of $\mathcal{L}^0$ is \begin{align}
    \Delta_0 = \min\bigg\{ 1+f_-^2\,,\,1+f_+^2\,,\,&\frac{1}{2}\left(f_-^2+f_+^2+2- \sqrt{4f_-f_++f_-^4-2f_-^2+1}\right)\,,\\ &\frac{1}{2}\left( f_+^2+2f_-^2+3 - \sqrt{f_+^4+4f_+^2f_-^2 - 2f_+^2+4f_-^4-12f_-^2+9} \right)\bigg\}\,.\label{eqn:spectral gap unperturbed strong FH}
\end{align}

\begin{figure}[h]
    \centering
    \includegraphics[width=0.7\linewidth]{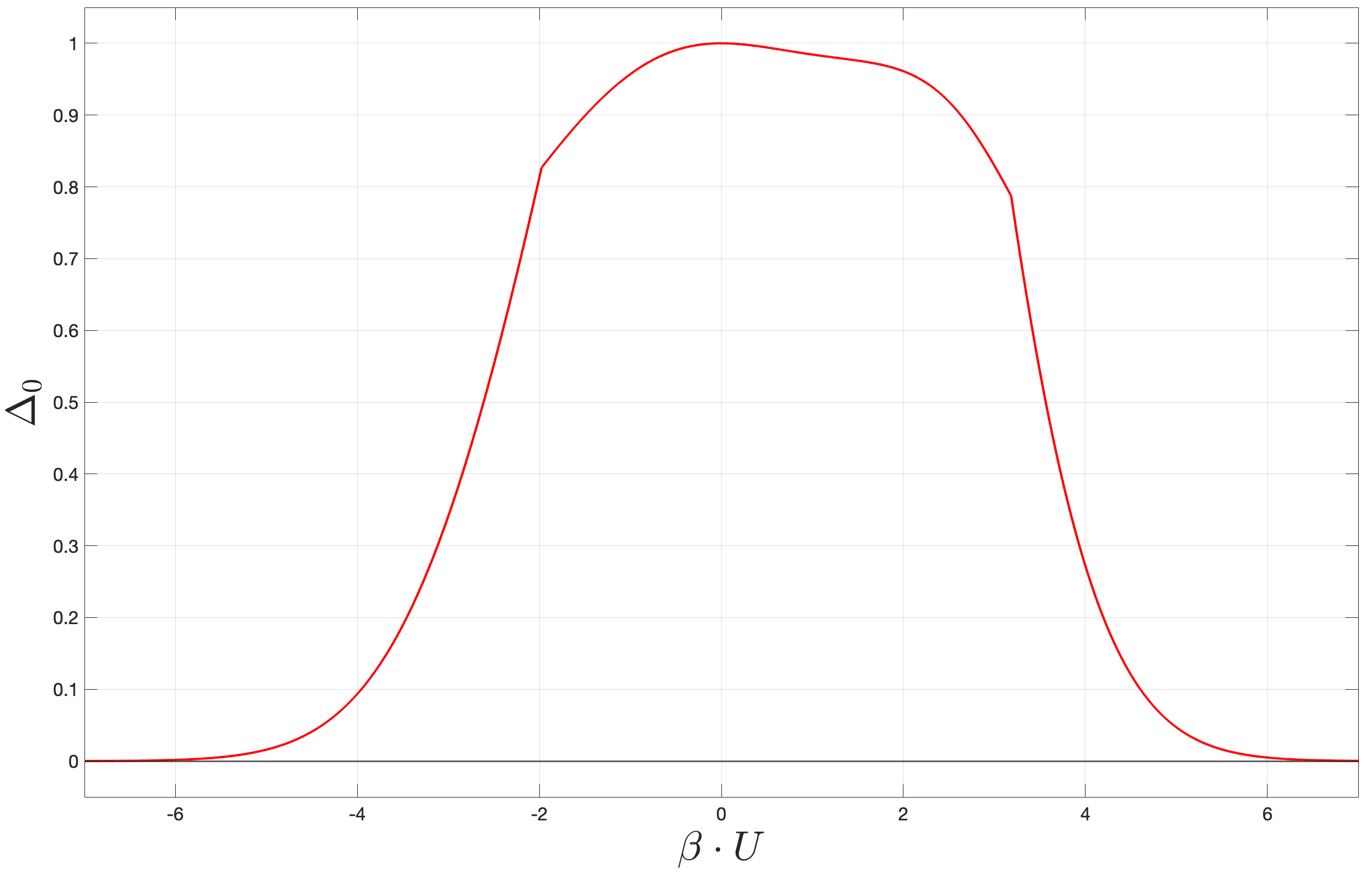}
    \caption{Plot of the unperturbed spectral gap $\Delta_0$ from Equation \eqref{eqn:spectral gap unperturbed strong FH} of the Lindbladian $\mathcal{L}_0$ corresponding to the strongly-interacting limit of the Fermi-Hubbard model given by $H_0$ from Equation \eqref{eqn: Hamiltonian, unperturbed strong FH}, when using the Gaussian filter function $\hat{f}(\nu) = e^{-\beta^2 \nu^2 /8 - \beta\nu/4}$.}
    \label{fig:unperturbed_gap_strong_FH}
\end{figure}

Next, we need to define the partial quasi-derivations with respect to these $16$ eigenvectors (which we shall number from $1$ to $16$ in the order of decreasing eigenvalues, starting with $\lambda_i^1=0$), like \begin{align}
    \delta_i^\alpha(O) = \delta_i(F_i^\alpha)\cdot \langle F_i^\alpha,O\rangle_{\sigma_i^0}\,,
\end{align} where $\delta_i(O) = O - \frac{1}{4}\Tr_i^f(O)$ and $\langle F_i^\alpha,O\rangle_{\sigma_i^0} = \Tr_i^f \left(F_i^\alpha \cdot e^{-\frac{\beta}{2} U N_{i,\uparrow}N_{i,\downarrow}} \cdot O\cdot e^{-\frac{\beta}{2} U N_{i,\uparrow}N_{i,\downarrow}}\right)$, with the fermionic partial trace given by $\Tr_i^f = \Tr_i^{(c_\downarrow)} \circ \Tr_i^{(c_\uparrow)} =\Tr_i^{(c_\uparrow)} \circ \Tr_i^{(c_\downarrow)}$ and $\Tr_i^{(c)}$ from Equation \eqref{eqn:fermionic trace}. Note that the eigenvectors $F_i^\alpha$ are orthogonal with respect to the KMS inner product, $\langle F_i^\alpha,F_i^\beta\rangle_{\sigma_i^0} = \delta_{\alpha\beta}$, where we have further implicitly assumed that they are also normalised. In other words, we can expand the observable $O$ along the site $i$ in the eigenbasis $\{F_i^\alpha\}_{\alpha=1}^{16}$, and then $\delta_i^\alpha(O)$ takes the full quasi-derivation $\delta_i$ of the component along $F_i^\alpha$. This ensures the key relation \begin{align}
    \delta_i^\alpha(\mathcal{L}_i^0[O]) = \lambda^\alpha \cdot \delta_i^\alpha(O)\,.
\end{align} Further, neither the partial quasi-derivations nor the Lindbladian will change the (even) parity of the observable $O$, and so we get that $\delta_i^\alpha \circ \mathcal{L}^0_j [O] =  \mathcal{L}^0_j  \circ \delta_i^\alpha[O]$ for all $i \neq j$, and hence we can proceed as before to get rapid mixing of $\mathcal{L}_0$ in time \begin{align}
    t_\textup{mix}(\epsilon) \leq \frac{1}{\Delta_0} \log\left( 1024\cdot e^{2\beta |U|} \frac{n}{\epsilon} \right)\,.
\end{align}
Finally, since the partial quasi-derivations $\delta_i^\alpha$ are compatible with the locality of the system, we can extend the rapid mixing result to the regime of small enough $|t| \leq t_\text{max}$ to arrive at the following proposition:
\begin{prop}
    For the Fermi-Hubbard model given by $
    H_\textup{FH} = -t\sum_{\langle i,j\rangle, \sigma} \left( c^\dagger _{i,\sigma} c_{j,\sigma} + c^\dagger_{j,\sigma}c_{i,\sigma} \right) + U\sum_{i=1}^n N_{i,\uparrow} N_{i,\downarrow}$, where $\langle i,j\rangle$ represents nearest neighbours on the lattice, $\sigma \in \{\uparrow,\downarrow\}$ represents the spin of the fermions, and $N_{i,\sigma} = c^\dagger _{i,\sigma} c_{i,\sigma}$, there exists a system-size-independent maximal hopping strength $t_\textup{max}$ (which depends on the fixed interaction strength $U$ and temperature $\beta$) such that for $|t|\leq t_\textup{max}$, the corresponding Lindbladian $\mathcal{L}$ mixes rapidly in time bounded by 
    \begin{align}
    t_\textup{mix}(\epsilon) \leq \frac{1}{\Delta_0-16\eta(t)} \log\left( 1024\cdot e^{2\beta |U|} \frac{n}{\epsilon} \right)\,,
    \end{align}
    where $\eta(t)$ is given by Lemma \ref{lemma:eta}.
    Further, the constant $t_\textup{max}$ can be explicitly evaluated using Corollary \ref{cor:maximal lambda}.
\end{prop}

\subsubsection{Evaluating $t_\textup{max}$}
\label{sec: Evaluating explicit parameters for FH model}

We now wish to explicitly certify parameter regimes of the Fermi-Hubbard model for which our theory guarantees rapid mixing by evaluating the maximal hopping strength $t_\text{max}$ using Corollary \ref{cor:maximal lambda}, focusing on the 2D square lattice. This will turn out to be a bit more complicated than just a direct evaluation due to a few reasons. Firstly, as explained at the very end of Section \ref{sec: rapid mixing of qudits}, we need to find a sufficiently large radius $r_0$ to guarantee $t_\text{max}$ to be positive, and so we need to carry out an optimisation over $r_0$.
Next, as derived in Appendix \ref{sec: Localising odd fermionic operators}, we have the following Lieb-Robinson bound on locality for the Fermi-Hubbard model on a $D$-dimensional cubic lattice:
\begin{equation}
    \forall_{\mu>0}: \|c_i^{(\dagger)}(\tau)-c_i^{(\dagger)(r)}(\tau)\| \leq 2 e^{-\mu r}\left( e^{2(|U|+8D|t|e^{\mu})|\tau|}-1\right)\,.
\end{equation} Here, $\mu$ is a free parameter, which we also need to optimise in order to maximise $t_\text{max}$. Finally, since the Lieb-Robinson bounds depend on the hopping strength $t$, but the maximal allowed hopping strength $t_\text{max}$ depends on the Lieb-Robinson bounds, we end up with a highly non-linear relation for $t_\text{max}$.

In summary, in order to find the maximal guaranteed hopping strength $t_\textup{max}$ covered by our proof, we need to solve the equation \begin{align}
    t = \max\limits_{\overset{r_0 \in \mathbb{N}}{\mu>0}} \lambda_\textup{max}(t,r_0,\mu)\,,
\end{align} where the functional form of $\lambda_\textup{max}$ is specified in Corollary \ref{cor:maximal lambda}. We carry out this calculation numerically and present the results on Figure \ref{fig: certified regimes for FH}, displaying the certified regimes of rapid mixing in red. Note that, while this plot seems fairly symmetric, it is not even in $\beta \cdot U$, as the unperturbed spectral gap $\Delta_0$ isn't. Further, for the certified regimes, we plot the prefactor $\frac{1}{\Delta_0-16\eta(t)}$ of our mixing time upper bound on Figure \ref{fig: mixing times prefactor for FH}. It is also important to note that these results on guaranteed rapid mixing do not imply that the system won't be rapidly mixing beyond the shown regimes, nor that the values of $t_\text{max}$ obtained below for which rapid mixing is guaranteed cannot be improved by, e.g., using tighter Lieb-Robinson bounds obtained with the technique from \cite{Wang2020} for the strongly-interacting Fermi-Hubbard model, or tighter estimates on $\|\mathcal{L}_i - \mathcal{L}_i^0\|_{\infty \to \infty}$.

\begin{figure}[H]
    \centering
    \includegraphics[width=\linewidth]{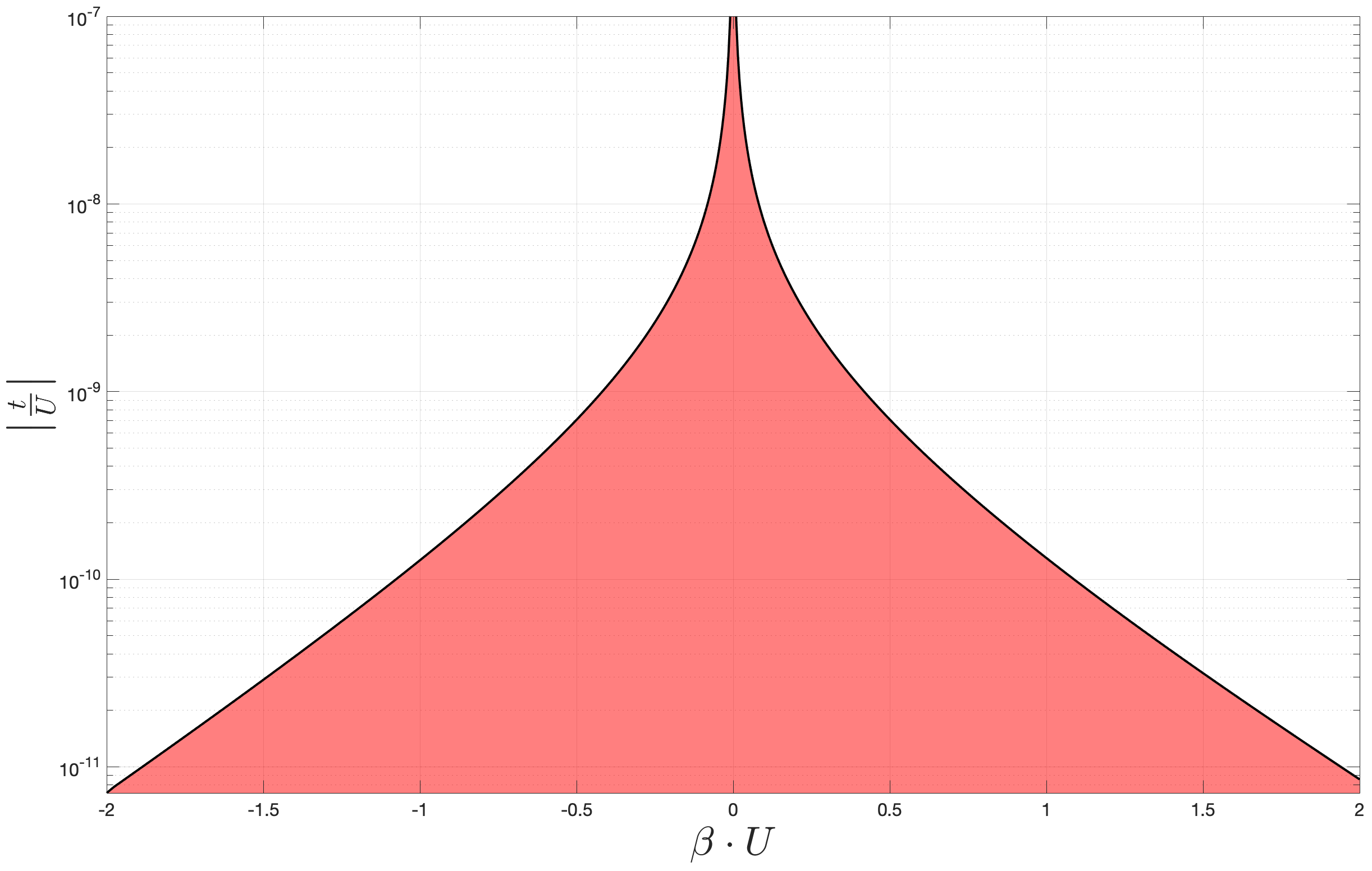}
    \caption{Parameter regimes for which the rapid mixing of the Fermi-Hubbard model is guaranteed, corresponding to the strongly-interacting regime on a 2D square lattice, using the Gaussian filter function.}
    \label{fig: certified regimes for FH}
\end{figure}

In order to evaluate $\|\mathcal{L}_i - \mathcal{L}_i^0\|_{\infty \to \infty}$, we needed to express the perturbation of the Hamiltonian as a sum over terms supported on different balls of different radii $r$, i.e. \begin{align}
    \sum_{\langle i,j\rangle, \sigma} \left( c^\dagger _{i,\sigma} c_{j,\sigma} + c^\dagger_{j,\sigma}c_{i,\sigma} \right) \overset{\text{set}}{=} \sum_{r\ge 1} \sum_{C \in C(r)} W_C\,,
\end{align} such that $\|W_C\| \leq Ke^{-\nu r}$ (see Lemma \ref{lemma:bounds on delta L}). Since the Fermi-Hubbard model is a nearest-neighbour system, we only require balls of radius $1$. Then, by always choosing the bottom and left edges of every square in the grid, i.e.~choosing the set $\{W_{i,j}\}_{i,j}$ with \begin{align}
    W_{i,j} = \sum_{\sigma} \left( c^\dagger _{i,j,\sigma} c_{i+1,j,\sigma} + c^\dagger_{i+1,j,\sigma}c_{i,j,\sigma} + c^\dagger _{i,j,\sigma} c_{i,j+1,\sigma} + c^\dagger_{i,j+1,\sigma}c_{i,j,\sigma} \right)\,,
\end{align} (and appropriate simplifications along the very top and right edges of the system), we would require \begin{align}
    4 \leq Ke^{-\nu}\,,
\end{align} where we again have one extra degree of freedom due to the short-range nature of the system. The bound on $\|\mathcal{L}_i - \mathcal{L}_i^0\|_{\infty \to \infty}$ subsequently depends on \begin{align}
    K\cdot \Phi_2(e^{-\nu}) = K\cdot \sum_{r\geq 1} (2r+1)^2 e^{-\nu r} = K\cdot \frac{9e^{2\nu} - 2e^\nu + 1}{(e^\nu - 1)^3}\,.
\end{align}
We wish to minimise $\|\mathcal{L}_i - \mathcal{L}_i^0\|_{\infty \to \infty}$, which corresponds to minimising $K\cdot \Phi_2(e^{-\nu})$, and since \begin{align}
    K\cdot \Phi_2(e^{-\nu}) \geq 4 e^\nu \cdot \frac{9e^{2\nu} - 2e^\nu + 1}{(e^\nu - 1)^3} \geq 36\,,
\end{align} which is obtained for $K = 4e^\nu$ and $\nu \to \infty$, we will use the value $36$ for this term.

\begin{figure}[H]
    \centering
    \includegraphics[width=\linewidth]{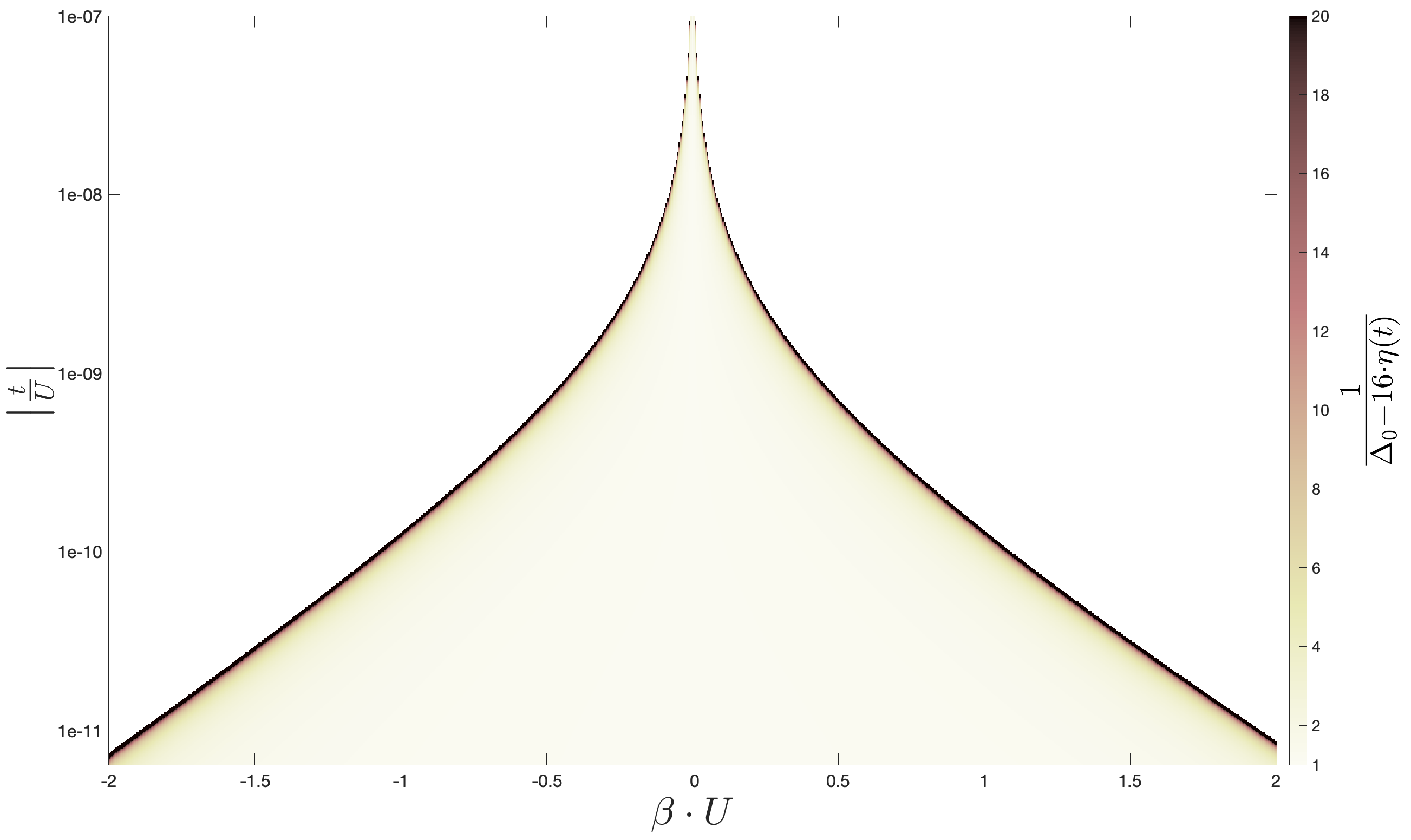}
    \caption{Parameter regimes with guaranteed rapid mixing as in Figure \ref{fig: certified regimes for FH}, but plotting the constant prefactor $\frac{1}{\Delta_0 - 16\eta(t)}$ of the mixing time $t_\textup{mix}(\epsilon)$ as a heatmap. The darker the shade, the slower the mixing time will be, with the prefactor diverging to $\infty$ at the border of guaranteed rapid mixing.}
    \label{fig: mixing times prefactor for FH}
\end{figure}

\newpage
\clearpage


\section{Bosonic Systems}\label{sec:bosons}

Here, we shall consider a free bosonic Hamiltonian of the form
\begin{align}
    H_0 = \sum_{i,j=1}^n a_i^\dagger h_{ij}a_j = \mathbf{a}^\dagger \cdot h \cdot \mathbf{a}
\end{align}
with $h$ hermitian, and $a_i,a_i^\dagger$ obeying the canonical commutation relations. We shall further assume $h$ to be real for future convenience, and so it will be a symmetric matrix. Since the bosonic ladder operators are unbounded, the oscillator norm technique is immediately not applicable, and so we shall proceed by analysing the behaviour for Gaussian states, which will impose the restriction on the initial state of the evolution.

\begin{prop}
    The free bosonic system $H_0$ has a well-defined Gibbs state if and only if $h$ is positive definite.
\end{prop}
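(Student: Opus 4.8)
The plan is to compute the Gibbs state $\sigma_\beta = e^{-\beta H_0}/Z$ explicitly after diagonalising the quadratic form, and to show that the normalising partition function $Z = \Tr(e^{-\beta H_0})$ is finite precisely when every normal-mode frequency is strictly positive. First I would diagonalise $h$: since $h$ is real symmetric, write $h = O^T \operatorname{diag}(\omega_1,\dots,\omega_n) O$ with $O$ orthogonal, and introduce new modes $\mathbf{b} = O\mathbf{a}$, which still satisfy the canonical commutation relations $[b_i,b_j^\dagger] = \delta_{ij}$ because $O$ is orthogonal. In these coordinates $H_0 = \sum_i \omega_i b_i^\dagger b_i$ decouples into a direct sum of independent harmonic oscillators (note the absence of anomalous $b_i b_j$ or $b_i^\dagger b_j^\dagger$ terms, which is where the hermiticity/reality assumption on $h$ and the particle-number-conserving form of $H_0$ are used).

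Next I would reduce to the single-mode case. Since $H_0$ is a sum of commuting single-mode number operators acting on distinct tensor factors, $e^{-\beta H_0} = \bigotimes_i e^{-\beta \omega_i b_i^\dagger b_i}$ and hence $Z = \prod_i \Tr(e^{-\beta \omega_i \hat n_i})$ with $\hat n_i = b_i^\dagger b_i$ having spectrum $\{0,1,2,\dots\}$. For a single mode, $\Tr(e^{-\beta\omega\hat n}) = \sum_{k=0}^\infty e^{-\beta\omega k}$, which is the geometric series $\frac{1}{1-e^{-\beta\omega}}$ when $\beta\omega > 0$ and diverges when $\beta\omega \le 0$. Since $\beta > 0$, finiteness of each factor is equivalent to $\omega_i > 0$, and finiteness of the product (no factor being infinite, and the product of finitely many finite positive numbers being finite) is equivalent to $\omega_i > 0$ for all $i$, i.e.\ to $h$ being positive definite. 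When this holds, $\sigma_\beta = \bigotimes_i (1-e^{-\beta\omega_i}) e^{-\beta\omega_i \hat n_i}$ is a bona fide (trace-one, positive, full-rank on the Fock space) thermal state; when some $\omega_i \le 0$, $e^{-\beta H_0}$ is not trace-class and no normalisable Gibbs state exists.

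The only genuinely delicate point — and the one I would be careful to state rather than gloss over — is the ``only if'' direction at the boundary $\omega_i = 0$: a zero mode contributes $\sum_k 1 = \infty$, so even a single vanishing eigenvalue already destroys trace-class-ness; thus positive \emph{semi}-definiteness is not enough and strict positivity is required. I would also remark that one should fix the convention that $H_0$ is shifted so that the constant (zero-point) term is immaterial to the argument, since an overall additive constant only rescales $Z$ by a finite factor $e^{-\beta c}$ and does not affect convergence. Everything else is the routine geometric-series computation, so no further grinding is needed.
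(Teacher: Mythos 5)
Your proof is correct and follows essentially the same route as the paper's: diagonalise the quadratic form via a Bogoliubov (orthogonal) rotation, factorise the partition function over the normal modes, and observe that each geometric series $\sum_{k\ge 0} e^{-\beta\omega_i k}$ converges iff $\omega_i>0$. The only difference is that you spell out the diagonalisation step and the boundary case $\omega_i=0$ explicitly, which the paper leaves implicit; no gap either way.
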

\begin{proof}
    The partition function of the system is
    \begin{equation}
        Z = \Tr(e^{-\beta H_0}) = \Tr(e^{-\beta \sum_i \epsilon_i b_i^\dagger b_i}) = \prod_i \Tr(e^{-\beta \epsilon_i b_i^\dagger b_i}) =  \prod_i \sum_{n^{(i)} \geq 0} e^{-\beta \epsilon_i n^{(i)}} = \prod_i \frac{1}{1-e^{-\beta \epsilon_i}}\,,
    \end{equation} where the sum converges if and only if $\epsilon_i >0$ for all $i$.
\end{proof}

This condition will be important later for the Lindbladian to even have a steady state. Unlike Lindbladians over finite-dimensional Hilbert spaces, which must have a non-positive eigenspectrum and at least one steady state, bosonic Lindbladians can have positive eigenvalues and no steady states. From now on, we shall assume $h$ to be positive definite unless stated otherwise.

\begin{prop}
    For a free bosonic system $H_0$, by taking the set of jump operators to be $\{x_i = a_i + a_i ^\dagger, p_i =-i(a_i-a_i^\dagger)\}_{i=1}^n$, and filter functions $\hat f_a$ to be real and equal, the coherent term $G$ in the algorithmic Lindbladian vanishes.
\end{prop}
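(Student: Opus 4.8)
The plan is to show that the coherent term $G$ vanishes by exploiting the specific structure of the quadrature jump operators together with the reality and equality of the filter functions. Recall from Equation~\eqref{eqn:Coherent term} that
\[
G = \sum_{a \in \mathcal{A}} \int_{-\infty}^\infty g(t)\, e^{iH_0 t}\, (L_a^\dagger L_a)\, e^{-iH_0 t}\ \dd t,
\]
with $\hat{g}(\nu) = \tfrac{i}{2}\tanh(\beta\nu/4)\kappa(\nu)$, equivalently $G = \tfrac{i}{2}\sum_{a}\sum_{\nu}\tanh(\beta\nu/4)(L_a^\dagger L_a)_\nu$. The key observation I would make is that since $\tanh$ is odd, $G$ vanishes provided the operator $\sum_a L_a^\dagger L_a$ is invariant under $\operatorname{ad}_{H_0}$-frequency reflection, i.e. provided $(L_a^\dagger L_a)_\nu$ summed over $a$ equals its own $\nu \to -\nu$ image. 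So the whole proof reduces to establishing a symmetry of $\sum_a L_a^\dagger L_a$.

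The concrete steps would be as follows. First, diagonalise $H_0 = \sum_i \epsilon_i b_i^\dagger b_i$ via the orthogonal Bogoliubov transformation $\mathbf{b} = U\mathbf{a}$ (using that $h = U^T D U$ is real symmetric), and note that since $U$ is real the transformed quadratures $x_i' = b_i + b_i^\dagger$, $p_i' = -i(b_i - b_i^\dagger)$ are real linear combinations of the original $x_i, p_i$; as in the fermionic case, a real orthogonal mixing of the jump operators leaves the Lindbladian unchanged, so it suffices to work with $\{x_i', p_i'\}$. Second, compute the filtered Lindblad operators: since $\operatorname{ad}_{H_0}$ acts on $b_i$ with frequency $-\epsilon_i$ and on $b_i^\dagger$ with frequency $+\epsilon_i$, one gets $L_{x_i} = \hat f(\operatorname{ad}_{H_0}) x_i' = \hat f(\epsilon_i) b_i^\dagger + \hat f(-\epsilon_i) b_i$ and similarly $L_{p_i} = -i(\hat f(\epsilon_i) b_i^\dagger - \hat f(-\epsilon_i) b_i)$, using that $\hat f$ is real. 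Third, form $L_{x_i}^\dagger L_{x_i} + L_{p_i}^\dagger L_{p_i}$ for each mode and observe that the cross terms cancel: writing $\alpha = \hat f(\epsilon_i)$, $\gamma = \hat f(-\epsilon_i)$, one has $L_{x_i}^\dagger L_{x_i} = \alpha^2 b_i b_i^\dagger + \gamma^2 b_i^\dagger b_i + \alpha\gamma(b_i b_i + b_i^\dagger b_i^\dagger)$ while $L_{p_i}^\dagger L_{p_i} = \alpha^2 b_i b_i^\dagger + \gamma^2 b_i^\dagger b_i - \alpha\gamma(b_i b_i + b_i^\dagger b_i^\dagger)$, so the sum is $2\alpha^2 b_i b_i^\dagger + 2\gamma^2 b_i^\dagger b_i$, which is a number operator plus constant and hence is $\operatorname{ad}_{H_0}$-invariant (frequency $0$). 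Summing over $i$, $\sum_a L_a^\dagger L_a = \sum_i (2\hat f(\epsilon_i)^2 b_i b_i^\dagger + 2\hat f(-\epsilon_i)^2 b_i^\dagger b_i)$ has only the $\nu = 0$ component, so every $\tanh(\beta\nu/4)$ factor with $\nu\neq 0$ multiplies zero, and the $\nu=0$ term carries $\tanh(0)=0$; therefore $G = 0$.

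The main obstacle — really a bookkeeping point rather than a deep one — is making sure the off-diagonal (pairing) terms $b_i b_i$ and $b_i^\dagger b_i^\dagger$ genuinely cancel between the $x$ and $p$ contributions and do not resurface through cross-mode terms $b_i b_j$, $b_i^\dagger b_j$ with $i\neq j$. Since the jump operators $x_i', p_i'$ each involve only mode $i$ after the Bogoliubov rotation, no cross-mode terms appear in $L_{x_i}^\dagger L_{x_i}$ in the primed basis; one must simply be careful that it is legitimate to compute $G$ in the rotated jump-operator basis, which follows from the unitary invariance of the Lindbladian under $L_a \mapsto \sum_b U_{ab} L_b$ for real orthogonal $U$ (the same argument already used for free fermions in Section~\ref{sec: rapid mixing of free fermions}). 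A secondary subtlety is that $\kappa(\nu)$ in $\hat g$ is irrelevant here because it only reweights frequencies and $\kappa(0)=1$; since the only surviving frequency is $\nu=0$ and $\tanh(0)=0$, the conclusion is immediate regardless of $\kappa$.
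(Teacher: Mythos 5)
Your proposal is correct and follows essentially the same route as the paper: compute the filtered Lindblad operators for the quadrature jumps, observe that the $\mathbf{a}\mathbf{a}$ and $\mathbf{a}^\dagger\mathbf{a}^\dagger$ cross terms cancel between the $x$ and $p$ contributions so that $\sum_a L_a^\dagger L_a$ commutes with $H_0$ (only the $\nu=0$ Bohr component survives), and conclude $G\propto\hat g(0)\propto\tanh(0)=0$. The only cosmetic difference is that you first pass to the Bogoliubov basis, whereas the paper works directly with the matrix functions $\hat f(\pm h)$ and the identity $[H_0,\mathbf{a}^\dagger\cdot S\cdot\mathbf{a}]=\mathbf{a}^\dagger\cdot[h,S]\cdot\mathbf{a}$; both are valid.
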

\begin{proof}
    We find that $[H_0,a_k] = -\sum_jh_{kj}a_j$ and $[H_0,a_k^\dagger] = \sum_jh_{kj}a_j^\dagger$, hence \begin{align}
        \mathbf{a}(t) &= e^{it\operatorname{ad}_{H_0}}[\mathbf{a}] = e^{-ith}\cdot\mathbf{a}\,,\\
        \mathbf{a}^\dagger(t) &= e^{it\operatorname{ad}_{H_0}}[\mathbf{a}^\dagger] = e^{ith}\cdot\mathbf{a^\dagger}\,.
    \end{align} The time evolved jump operators are then \begin{align}
        \mathbf{x}(t) &= e^{-ith} \cdot \mathbf{a} + e^{ith} \cdot \mathbf{a}^\dagger\,,\\
        \mathbf{p}(t) &= -i (e^{-iht} \cdot \mathbf{a} - e^{iht} \cdot \mathbf{a}^\dagger)\,,
    \end{align} which corresponds to the Lindblad operators being \begin{align}
        \mathbf{L}_x &= \hat{f}(-h) \cdot \mathbf{a} + \hat{f}(h) \cdot \mathbf{a}^\dagger\,,\\
        \mathbf{L}_p &= -i \hat{f}(-h) \cdot \mathbf{a} + i \cdot \hat{f}(h) \cdot \mathbf{a}^\dagger\,.
    \end{align}
    Finally, we find that \begin{align}
        \sum_\mu L_\mu^\dagger L_\mu &= \mathbf{L}_x^\dagger \cdot \mathbf{L}_x + \mathbf{L}_p ^
        \dagger \cdot \mathbf{L}_p \\
        &= 2 \mathbf{a}^\dagger \cdot (\hat{f}(-h)^2 + \hat{f}(h)^2) \cdot \mathbf{a} + 2\Tr(\hat{f}(h)^2)\,,
    \end{align} but then -- observing that $[H_0, \mathbf{a}^\dagger \cdot S \cdot \mathbf{a}] = \mathbf{a}^\dagger \cdot [h,S] \cdot \mathbf{a}$ -- this quantity will commute with $H_0$, and so we obtain \begin{equation}
        G = \int_{-\infty}^\infty g(t) \cdot e^{iH_0 t}  \sum_\mu L_\mu^\dagger L_\mu e^{-iH_0 t}\ \dd t \propto \hat{g}(0) \propto \tanh(0) = 0\,,
    \end{equation} showing that the coherent term vanishes.
\end{proof}

\begin{prop}
    The Lindbladian $\mathcal{L}_0^\dagger$ corresponding to the free bosonic Hamiltonian $H_0$ with the set of jump operators $\{x_i,p_i\}_{i=1}^n$ and equal real filter functions $\hat{f}^a(\nu) = \hat{f}(\nu) = q(\nu) e^{-\beta \nu/4}$ has spectral gap given by \begin{equation}
        \Delta_0 = 2 \cdot \min_i q(\epsilon_i)^2 \cdot \sinh\left(\frac{\beta}{2} \epsilon_i\right)\,,
    \end{equation} where $\epsilon_i \in \operatorname{spec}(h)$ are the eigenvalues of the single particle Hamiltonian $h$.
\end{prop}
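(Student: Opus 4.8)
The plan is to diagonalise the dynamics mode by mode and reduce each mode to a standard single-mode damped harmonic oscillator, whose Liouvillian spectrum is explicit.

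First I would diagonalise $h = O^{T} D O$ with $O$ real orthogonal and $D = \operatorname{diag}(\epsilon_{1},\dots,\epsilon_{n})$, and pass to the normal modes $\mathbf{b} = O\mathbf{a}$, which again obey canonical commutation relations and bring $H_{0}$ to $\sum_{i}\epsilon_{i}b_{i}^{\dagger}b_{i}$. Using the Lindblad operators $\mathbf{L}_{x} = \hat f(-h)\mathbf{a} + \hat f(h)\mathbf{a}^{\dagger}$ and $\mathbf{L}_{p} = -i\hat f(-h)\mathbf{a} + i\hat f(h)\mathbf{a}^{\dagger}$ from the previous proposition together with $\hat f(\pm h) = O^{T}\hat f(\pm D)O$, one sees that in the $b$-basis these are the orthogonal recombination of the $2n$ operators $\hat f(-\epsilon_{j})b_{j} + \hat f(\epsilon_{j})b_{j}^{\dagger}$ and $-i\hat f(-\epsilon_{j})b_{j} + i\hat f(\epsilon_{j})b_{j}^{\dagger}$, each supported on a single mode $j$. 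Since a Lindbladian is invariant under unitary (in particular orthogonal) recombination of its Lindblad operators, as already used for free fermions, and since the coherent term vanishes, this gives $\mathcal{L}_{0} = \sum_{i=1}^{n}\mathcal{L}_{0}^{(i)}$ with $\mathcal{L}_{0}^{(i)}$ acting on mode $i$ only. A further unitary mixing $\tfrac{1}{\sqrt 2}(L_{x_{i}}\pm i L_{p_{i}})$ replaces the two jumps on mode $i$ by $\sqrt 2\,\hat f(-\epsilon_{i})\,b_{i}$ and $\sqrt 2\,\hat f(\epsilon_{i})\,b_{i}^{\dagger}$, so $\mathcal{L}_{0}^{(i)}$ is the single-mode damped-oscillator generator with cooling rate $\gamma_{i}^{\downarrow} = 2\hat f(-\epsilon_{i})^{2} = 2q(\epsilon_{i})^{2}\e^{\beta\epsilon_{i}/2}$ and heating rate $\gamma_{i}^{\uparrow} = 2\hat f(\epsilon_{i})^{2} = 2q(\epsilon_{i})^{2}\e^{-\beta\epsilon_{i}/2}$ (using that $q$ is real and even). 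Note that positive-definiteness of $h$, i.e.\ $\epsilon_{i} > 0$, is exactly what forces $\gamma_{i}^{\downarrow} > \gamma_{i}^{\uparrow}$, hence genuine relaxation of mode $i$.

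Second I would identify the spectrum of one damped mode. A short Heisenberg-picture computation yields $\mathcal{L}_{0}^{(i)}[b_{i}] = -\tfrac{1}{2}(\gamma_{i}^{\downarrow} - \gamma_{i}^{\uparrow})\,b_{i}$ and the conjugate relation for $b_{i}^{\dagger}$; more generally $\mathcal{L}_{0}^{(i)}$ preserves the filtration by total polynomial degree in $b_{i},b_{i}^{\dagger}$ and acts on the degree-$k$ part with eigenvalues that are nonpositive integer multiples of $-\tfrac{1}{2}(\gamma_{i}^{\downarrow} - \gamma_{i}^{\uparrow})$ (e.g.\ $b_{i}^{\dagger}b_{i}$ relaxes at twice the rate of $b_{i}$). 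Hence the gap of mode $i$ is $\tfrac{1}{2}(\gamma_{i}^{\downarrow} - \gamma_{i}^{\uparrow}) = \hat f(-\epsilon_{i})^{2} - \hat f(\epsilon_{i})^{2} = 2q(\epsilon_{i})^{2}\sinh(\beta\epsilon_{i}/2)$. Finally, since $\e^{t\mathcal{L}_{0}} = \bigotimes_{i}\e^{t\mathcal{L}_{0}^{(i)}}$, the spectrum of $\mathcal{L}_{0}$ is the Minkowski sum of the single-mode spectra and its gap is the minimum of the single-mode gaps, giving $\Delta_{0} = 2\min_{i}q(\epsilon_{i})^{2}\sinh(\beta\epsilon_{i}/2)$; the same holds for $\mathcal{L}_{0}^{\dagger}$, as a generator and its adjoint share the same spectrum.

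The step I expect to be the main obstacle is making the single-mode analysis rigorous on the infinite-dimensional Fock space: one must handle the unboundedness of $b_{i},b_{i}^{\dagger}$, fix in what sense the ``spectrum'' and ``gap'' of the non-normal, unbounded Liouvillian are understood, and exclude spurious eigenvalues below the fundamental one\,---\,for instance by exhibiting the full eigensystem explicitly (Laguerre-type polynomials in $b_{i}^{\dagger}b_{i}$ dressed with powers of $b_{i}$ and $b_{i}^{\dagger}$), or, consistently with the Gaussian-state restriction used elsewhere in the paper, by reading the gap off the decay of the first moments $\langle b_{i}\rangle$ and of the covariance matrix, which is all the subsequent mixing-time bound actually uses.
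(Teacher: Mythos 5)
Your proposal is correct and arrives at exactly the right spectrum, but it takes a more elementary route than the paper. The paper's proof is a direct application of Prosen's third quantisation: it reads off the structure matrices ($H=K=L=0$, $M=\hat f(-h)^2$, $N=\hat f(h)^2$), forms the effective matrix $X = I_2\otimes q(h)^2\sinh(\beta h/2)$, and quotes the general result that the Liouvillian spectrum consists of non-negative integer combinations of the rapidities $\beta_{i,\pm}=q(\epsilon_i)^2\sinh(\beta\epsilon_i/2)$, giving $\operatorname{spec}(\mathcal{L}_0^\dagger)=\{-2\sum_i\beta_i(n_i^++n_i^-)\}$ and hence the gap. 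You instead decouple by hand: Bogoliubov rotation to normal modes, orthogonal invariance of the Lindbladian under recombination of jumps, the further $\tfrac{1}{\sqrt2}(L_x\pm iL_p)$ mixing yielding pure cooling/heating jumps $\sqrt2\,\hat f(\mp\epsilon_i)\,b_i^{(\dagger)}$, and then the known spectrum of the single-mode damped oscillator, whose gap $\tfrac12(\gamma_i^{\downarrow}-\gamma_i^{\uparrow})=2q(\epsilon_i)^2\sinh(\beta\epsilon_i/2)$ matches the rapidity. The two derivations are mathematically the same diagonalisation of a quadratic Liouvillian, but yours is self-contained and makes the detailed-balance structure (cooling beats heating iff $\epsilon_i>0$) transparent, whereas the paper's is a one-line citation of an established framework; the price you pay is having to justify the single-mode spectrum yourself (your Minkowski-sum and degree-filtration claims), which third quantisation packages for free. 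On the rigor point you raise at the end: the paper does not resolve the unbounded-operator subtleties either\,---\,it works at the same formal level as Prosen's construction\,---\,and, as you correctly note, the subsequent mixing-time bound only uses the decay of first and second moments of Gaussian states, for which the gap can be read off directly from the drift matrix $A$; so your fallback is in fact what the paper's later proposition implicitly relies on.
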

\begin{proof}
    Here we shall closely follow third quantisation from \cite{prosen2010quantizationbosonoperatorspaces}. From \cite[Equation (11)]{prosen2010quantizationbosonoperatorspaces}, we get $H = 0 = K$; while from \cite[Equations (12,15)]{prosen2010quantizationbosonoperatorspaces}, we get that $M = \hat{f}(-h)^2$, $N = \hat{f}(h)^2$, and $L = 0$. This gives the effective Hamiltonian \cite[Equation (18)]{prosen2010quantizationbosonoperatorspaces} as \begin{equation}
        X = \frac{1}{2} \left( \begin{matrix}
            -\hat{f}(h)^2 + \hat{f}(-h)^2 & 0 \\ 0 & -\hat{f}(h)^2 + \hat{f}(-h)^2
        \end{matrix} \right) =  I_2 \otimes q(h)^2 \cdot 
            \sinh\left(\frac{\beta}{2}h\right)
    \end{equation} 
    Hence the rapidities, i.e. the eigenvalues of $X$, are $\beta_{i,\pm} = q(\epsilon_i)^2 \sinh\left(\frac{\beta}{2}\epsilon_i\right)$ (here the $\pm$ represents that each of them is present twice). Hence we find the full spectrum of the Lindbladian as \begin{equation}
        \operatorname{spec}(\mathcal{L}_0^\dagger ) = \left\{ -2 \sum_{i=1}^n \beta_i \cdot (n^+_i + n^-_i) \right\}_{n_i^\pm \in \mathbb{N}_0}\,,
    \end{equation} and specifically we find the gap to be \begin{equation}
        \Delta_0 = 2 \cdot \min_i q(\epsilon_i)^2 \cdot \sinh\left(\frac{\beta}{2} \epsilon_i\right)\,.
    \end{equation}
\end{proof}

\begin{corollary}
    For free bosonic Hamiltonians, which have a bounded single particle Hamiltonian -- meaning $\|h^{-1}\| \leq \mathcal{O}(1)$ and $\|h\| \leq \mathcal{O}(1)$ -- the Lindbladian $\mathcal{L}_0^\dagger$ has a constant spectral gap $\Delta_0$.
\end{corollary}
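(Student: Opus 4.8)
The plan is to read the claim directly off the explicit gap formula $\Delta_0 = 2\min_i q(\epsilon_i)^2\sinh(\beta\epsilon_i/2)$ established in the preceding proposition, by showing that under the stated hypotheses every factor entering the minimum is bounded below by an $n$-independent positive constant. The two ingredients are: (i) the spectrum of $h$ is confined to a fixed compact interval bounded away from the origin, and (ii) the maps $\nu\mapsto q(\nu)^2$ and $\nu\mapsto\sinh(\beta\nu/2)$ are continuous and strictly positive on that interval.

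First I would translate the operator-norm hypotheses into bounds on the eigenvalues $\epsilon_i$ of $h$. Positive-definiteness of $h$ (required for the Gibbs state to exist) gives $\epsilon_i>0$, and since $h$ is a positive operator $\|h\|=\max_i\epsilon_i$ and $\|h^{-1}\|=1/\min_i\epsilon_i$. Hence $\|h\|\le\mathcal{O}(1)$ and $\|h^{-1}\|\le\mathcal{O}(1)$ say precisely that there exist constants $0<\epsilon_{\min}\le\epsilon_{\max}<\infty$, independent of $n$, with $\epsilon_i\in[\epsilon_{\min},\epsilon_{\max}]$ for all $i$. Then, by monotonicity of $\sinh$ on $[0,\infty)$, $\sinh(\beta\epsilon_i/2)\ge\sinh(\beta\epsilon_{\min}/2)>0$, and since $q$ is a fixed, size-independent function that is continuous and nonvanishing on the compact set $[\epsilon_{\min},\epsilon_{\max}]$, the quantity $c_q:=\min_{\nu\in[\epsilon_{\min},\epsilon_{\max}]}q(\nu)^2$ is a strictly positive constant. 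Combining, $\Delta_0\ge 2c_q\sinh(\beta\epsilon_{\min}/2)>0$, uniformly in the system size, which is the assertion.

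The argument has essentially no obstacle; the only point worth pinning down is that $q$ does not vanish on the relevant energy window. For the filters used here this is immediate from the closed form\,---\,for the Gaussian filter $\hat f(\nu)=e^{-(\beta\nu+1)^2/8+1/8}$ one computes $q(\nu)=\hat f(\nu)\,e^{\beta\nu/4}=e^{-\beta^2\nu^2/8}>0$\,---\,and for a general admissible filter it follows from $\hat f(\nu)=q(\nu)e^{-\beta\nu/4}$ being the Fourier transform of $f$ with $f(t+i\beta/4)$ real and positive, which forces $\hat f>0$ and hence $q>0$ on the whole real line. With that noted, the corollary follows.
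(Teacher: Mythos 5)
Your argument is correct and matches the paper's proof in essence: both read the claim off the explicit gap formula, use $\epsilon_i\in[1/\|h^{-1}\|,\|h\|]$ to confine the spectrum to a fixed compact interval away from zero, and observe that $2q(\nu)^2\sinh(\beta\nu/2)$ is bounded below by a positive constant there (the paper specialises to the Gaussian filter and evaluates the minimum at the endpoints; your compactness argument for general continuous nonvanishing $q$ is marginally more general). One small caveat on your closing aside: positivity of $f(t+i\beta/4)$ does not by itself force $q(\nu)=\int f(t+i\beta/4)e^{i\nu t}\,\dd t>0$ for all $\nu$ (Fourier transforms of positive functions can vanish), so for a non-Gaussian filter you would genuinely need to check nonvanishing of $q$ on the energy window\,---\,but this does not affect the Gaussian case the paper actually uses.
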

\begin{proof}
    Using the Gaussian filter function with $q(\nu) = e^{-\beta^2 \nu^2 /8}$, the spectral gap of the Lindbladian becomes \begin{align}
        \Delta_0 = \min\left\{2e^{-\|h^{-1}\|^{-2} /4} \sinh\left(\frac{\beta}{2}\|h^{-1}\|^{-1}\right),2e^{-\|h\|^2 /4} \sinh\left(\frac{\beta}{2}\|h\|\right)\right\}\,,
    \end{align} which is indeed lower bounded by a constant from the assumptions on $h$.
\end{proof}

\begin{remark}
    Unlike in the case of fermions, the chemical potential of the system, which is included in $h$, can qualitatively change the behaviour of the convergence. For chemical potentials such that $\exists_{\epsilon>0}:\forall_{n\in \mathbb{N}}: \epsilon_\text{min} > \epsilon$, where $\epsilon_\text{min}$ is the lowest eigenvalue of $h$, the spectral gap is constant. For chemical potentials such that $\epsilon_\text{min} < 0$, the Gibbs state is not well defined. Then, however, there exists also a critical value of the chemical potential, where $\epsilon_\text{min}>0$ for all system sizes $n$, however $\epsilon_\text{min} \to 0$ as $n\to \infty$. In this case, each system size $n$ has a well defined Gibbs state, but we can see the spectral gap of the Lindbladian closing polynomially with respect to $n$.
\end{remark}

\begin{prop}
     For free bosonic Hamiltonians, which have a bounded single particle Hamiltonian -- meaning $\|h^{-1}\| \leq \mathcal{O}(1)$ and $\|h\| \leq \mathcal{O}(1)$ -- when taking the initial state to be specifically the vacuum state $\rho = |\mathbf{0}\rangle \langle \mathbf{0}|$, the Lindbladian $\mathcal{L}^\dagger _0$ mixes rapidly, i.e.~in logarithmic time, with an upper bound \begin{equation}
         t_\textup{mix} \leq \frac{1}{2\Delta_0} \log\left[ \frac{1+\sqrt{3}}{4}\left(\coth\left(\frac{\beta}{2} \epsilon_\textup{min}\right) - \frac{1}{2}\right)\left(\coth\left(\frac{\beta}{2} \epsilon_\textup{min}\right) - 1\right) \cdot \frac{n}{\epsilon}\right]\,.
     \end{equation}
\end{prop}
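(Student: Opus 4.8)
The plan is to trade the (unavailable) oscillator‑norm technique for an exact Gaussian analysis, exploiting that the generator $\mathcal L_0^\dagger$ is quadratic — the jump operators $x_i,p_i$ are linear in $a_i,a_i^\dagger$ and, by the previous proposition, the coherent term vanishes — so it maps Gaussian states to Gaussian states, and the vacuum is Gaussian (zero displacement, identity covariance). First I would pass to normal modes: the orthogonal transformation $\mathbf a\mapsto U\mathbf a$ diagonalising $h$ simultaneously brings $H_0$ to $\sum_i\epsilon_i b_i^\dagger b_i$ and, since a Lindbladian is invariant under orthogonal mixing of its jump operators, replaces $\{x_i,p_i\}$ by $\{b_i+b_i^\dagger,-i(b_i-b_i^\dagger)\}$; the cross‑terms between the $x$‑ and $p$‑jumps cancel, so $\mathcal L_0^\dagger=\bigoplus_i\mathcal L_0^{(i)\dagger}$ is a direct sum of independent single‑mode damped‑oscillator generators, with cooling rate $\propto\hat f(-\epsilon_i)^2=q(\epsilon_i)^2e^{\beta\epsilon_i/2}$ and heating rate $\propto\hat f(\epsilon_i)^2=q(\epsilon_i)^2e^{-\beta\epsilon_i/2}$. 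The passive $U$ fixes the vacuum, so both $\rho(0)=\ket{\mathbf 0}\bra{\mathbf 0}$ and $\sigma_\beta$ factorise over the modes, the latter as $\bigotimes_i\tau_i(\bar n_i)$ with $\bar n_i=(e^{\beta\epsilon_i}-1)^{-1}$.

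Second, each single‑mode generator is $U(1)$‑symmetric, so started in the vacuum the state remains an undisplaced thermal state $\tau_i(n_i(t))$, with mean occupation solving a linear ODE and hence $n_i(t)=\bar n_i\bigl(1-e^{-\mu_i t}\bigr)$ for a damping rate $\mu_i$. The key observation is that $\mu_i$ equals twice the single‑mode Lindbladian gap — the gap is carried by the first‑moment eigenoperators $b_i,b_i^\dagger$ (eigenvalue $-\tfrac12\mu_i$), while the covariance, equivalently the occupation, relaxes at the doubled rate — so $\mu_i\ge 2\Delta_0$ with $\Delta_0$ as in the previous proposition; this is precisely what produces the $1/(2\Delta_0)$ prefactor. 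Consequently $|n_i(t)-\bar n_i|=\bar n_i e^{-\mu_i t}\le\bar n_i e^{-2\Delta_0 t}$.

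Third, $\tau_i(n_i(t))$ and $\tau_i(\bar n_i)$ are both diagonal in the Fock basis with geometric weights, so the per‑mode trace distance equals twice the total‑variation distance between two geometric distributions; I would bound this explicitly in terms of $|n_i(t)-\bar n_i|$ and $\bar n_i$ (equivalently, via the closed‑form single‑mode fidelity $F(\tau(n),\tau(m))=\bigl(\sqrt{(n+1)(m+1)}-\sqrt{nm}\bigr)^{-1}$ together with $\|\cdot\|_{\Tr}\le 2\sqrt{1-F^2}$), rewriting $\bar n_i$ through $\coth(\tfrac{\beta}{2}\epsilon_i)=1+2\bar n_i$, to get $\|\tau_i(n_i(t))-\tau_i(\bar n_i)\|_{\Tr}\le g\!\left(\coth(\tfrac{\beta}{2}\epsilon_i)\right)e^{-2\Delta_0 t}$ for an explicit increasing $g$. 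Finally, summing over the $n$ modes by the triangle inequality for $\|\cdot\|_{\Tr}$ under tensor products (a telescoping sum using multiplicativity across factors and contractivity) and bounding $\bar n_i\le(e^{\beta\epsilon_{\textup{min}}}-1)^{-1}$ with $\epsilon_{\textup{min}}>0$ the least eigenvalue of $h$ (finite by positive‑definiteness), I obtain $\|\rho(t)-\sigma_\beta\|_{\Tr}\le n\,g\!\left(\coth(\tfrac{\beta}{2}\epsilon_{\textup{min}})\right)e^{-2\Delta_0 t}$; setting the right‑hand side equal to $\epsilon$ and solving for $t$ yields the stated bound, with $\tfrac{1+\sqrt 3}{4}\bigl(\coth(\tfrac{\beta}{2}\epsilon_{\textup{min}})-\tfrac12\bigr)\bigl(\coth(\tfrac{\beta}{2}\epsilon_{\textup{min}})-1\bigr)$ the explicit value of $g$ coming out of the third step.

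\emph{Main obstacle.} The delicate part is the third step: extracting a bound on the total‑variation/trace distance between two geometric occupation distributions that carries both the correct exponential‑in‑$t$ rate and the precise prefactor — deciding whether to route through the Gaussian fidelity formula or to estimate the $\ell^1$ sum directly is what pins down the constant $\tfrac{1+\sqrt3}{4}(\cdots)(\cdots)$. A secondary point requiring care is the rigorous justification that the evolved state is exactly a product of thermal states (no squeezing or displacement is generated) and that the covariance relaxes at exactly twice the Lindbladian gap.
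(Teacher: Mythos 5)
Your overall strategy matches the paper's: exploit that the vacuum is Gaussian and the generator is quadratic with vanishing coherent term, track first and second moments, show the covariance relaxes exponentially at rate $2\Delta_0$ to that of $\sigma_\beta$, and convert moment convergence into a trace-distance bound. Your moment analysis is correct, including the mode decomposition (the $b\,O\,b$ cross-terms between the $x$- and $p$-jumps do cancel), the preservation of undisplaced thermal form starting from the vacuum, and the factor of two between the first-moment decay rate (the gap) and the occupation/covariance decay rate. Where you diverge is the conversion step: the paper does not estimate per-mode thermal-state distances at all. It keeps the full $2n\times 2n$ covariance matrix, solves $\Gamma(t)=\left(\tfrac{I}{2}-\tfrac{1}{2}\coth(h\beta/2)\right)e^{-4q(h)^2\sinh(h\beta/2)t}+\tfrac{1}{2}\coth(h\beta/2)$, and invokes a recent optimal trace-distance estimate for Gaussian states, $\|\rho_1-\rho_2\|_{\Tr}\le \tfrac{1+\sqrt{3}}{4}\max\{\|\Gamma_1\|,\|\Gamma_2\|\}\,\|\Gamma_1-\Gamma_2\|_{\Tr}+\cdots$, whose constant $\tfrac{1+\sqrt{3}}{4}$ is exactly where the prefactor in the statement originates; the factor $n$ comes from taking $\|\Gamma(t)-\Gamma_{\sigma_\beta}\|_{\Tr}$ over the $2n$ phase-space dimensions.

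This is precisely your acknowledged ``main obstacle,'' and as written it is a genuine gap: you assert that the constant $\tfrac{1+\sqrt{3}}{4}\left(\coth(\tfrac{\beta}{2}\epsilon_{\textup{min}})-\tfrac12\right)\left(\coth(\tfrac{\beta}{2}\epsilon_{\textup{min}})-1\right)$ ``comes out of the third step,'' but neither a direct total-variation computation for geometric occupation distributions nor the fidelity route via Fuchs--van de Graaf would produce that particular constant\,---\,it is an artifact of the covariance-matrix bound the paper cites, not of a mode-by-mode $\ell^1$ estimate. Your per-mode route is viable for establishing the qualitative claim (rapid mixing at rate $2\Delta_0$ with an explicit $n$-independent, $\epsilon_{\textup{min}}$-dependent prefactor: the thermal-state fidelity you quote is $1-\mathcal{O}(|n_i(t)-\bar n_i|^2)$, so Fuchs--van de Graaf gives a bound linear in $\bar n_i e^{-2\Delta_0 t}$, and the telescoping over modes is standard), but it does not derive the bound as stated. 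To recover the exact constant you would need to replace your third step by the cited global Gaussian trace-distance estimate, at which point the mode decomposition becomes unnecessary.
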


\begin{remark}
    The initial state can be straightforwardly generalised to any convex combination of Gaussian states.
\end{remark}

\begin{proof}
    First, we need to recognise that when we start with a Gaussian state $\rho = |\mathbf{0}\rangle \langle \mathbf{0}|$ and evolve it with a quadratic Lindbladian, we will stay within the subspace of Gaussian states. These can be uniquely characterised by their vector of first moments $\mathbf{m} = \left\langle \left( \begin{matrix}
        \mathbf{x}\\\mathbf{p}
    \end{matrix}\right) \right\rangle = \Tr \left(\left( \begin{matrix}
        \mathbf{x}\\\mathbf{p}
    \end{matrix}\right) \rho \right)$ and their covariance matrices $\Gamma_{ij} = \frac{1}{2} \Tr(\{\Delta o_i,\Delta o_j\}\rho)$, where
    \begin{align}
        o_i= \begin{cases}
        x_i\ \text{for }1\leq i \leq n\\p_{i-n}\ \text{for }n+1\leq i \leq 2n
    \end{cases},
    \end{align}
    and then $\Delta o_i = o_i - \langle o_i \rangle = o_i - m_i$. Following \cite{koga2012dissipation}, we find that $\frac{\dd \mathbf{m}}{\dd t} = A \cdot \mathbf{m}$ and $\frac{\dd \Gamma}{\dd t} = A \Gamma + \Gamma A^T + D$, where for our Lindbladian we have
    \begin{align}
        A = -2 \left( \begin{matrix}
        q(h)^2 \sinh(h\beta/2) & 0 \\ 0 & q(h)^2 \sinh(h\beta/2)
    \end{matrix}\right)\quad\text{and}\quad D = 2 \left( \begin{matrix}
        q(h)^2 \cosh(h\beta/2) & 0 \\ 0 & q(h)^2 \cosh(h\beta/2)
    \end{matrix}\right).
    \end{align}
    Now, note that the initial vector of first moments is simply $\mathbf{m}(0) = \mathbf{0}$, while the initial covariance matrix is $\Gamma(0) = \frac{I}{2}$. Since all the terms in the equation and the initial condition commute with $I_2 \otimes h$, we can expect that so will $\Gamma(t)$, and hence find that $\Gamma(t) = I_2 \otimes \Gamma^{++}(t)$, where \begin{equation}
        \Gamma^{++}(t) = \left(\frac{I}{2} - \frac{1}{2}\coth(h\beta/2)\right) \cdot e^{-4q(h)^2 \sinh(h\beta/2) \cdot t} + \frac{1}{2} \coth(h\beta /2)\,,
    \end{equation} as well as $\mathbf{m}(t) = 0$. Further, we may observe that the Gibbs state has covariance matrix $\Gamma_{\sigma_\beta} = I_2 \otimes \frac{1}{2} \coth(h\beta /2) = \Gamma(\infty)$ and first moments $\mathbf{m}_{\sigma_\beta} = \mathbf{0} = \mathbf{m}(\infty)$, and so the evolution indeed converges to the Gibbs state. 

    Finally, we can use optimal trace norm bounds obtained in \cite{bittel2025optimalestimatestracedistance} which tells us that \begin{equation}
        \|\rho_1 - \rho_2 \|_{\Tr} \leq \frac{1+\sqrt{3}}{4} \max\{\|\Gamma_1\|, \|\Gamma_2\| \} \|\Gamma_1 - \Gamma_2\|_{\Tr} + \sqrt{\frac{ \min\{\|\Gamma_1\|, \|\Gamma_2\| \}}{8}} \|\mathbf{m}_1 - \mathbf{m}_2\|_2\,,
    \end{equation}
    from which we obtain \begin{align}
        \|\rho(t)-\sigma_\beta \|_{\Tr} &\leq \frac{1+\sqrt{3}}{4} \max\{\|\Gamma(t)\|, \|\Gamma_{\sigma_\beta}\| \} \|\Gamma(t) - \Gamma_{\sigma_\beta}\|_{\Tr}\\
        &\leq \frac{1+\sqrt{3}}{4}\left(\coth\left(\frac{\beta}{2} \epsilon_\textup{min}\right) - \frac{1}{2}\right)\left(\coth\left(\frac{\beta}{2} \epsilon_\textup{min}\right) - 1\right) \cdot n \cdot e^{-4\min_i q(\epsilon_i)^2 \cdot \sinh\left(\beta\epsilon_i /2\right) \cdot t}\\
        &\overset{\text{set}}{\leq } \epsilon \,,
    \end{align} which we can readily solve for $t$ and hence deduce that \begin{equation}
         t_\textup{mix} \leq \frac{1}{4\min_i q(\epsilon_i)^2 \cdot \sinh\left(\beta\epsilon_i /2\right)} \log\left[ \frac{1+\sqrt{3}}{4}\left(\coth\left(\frac{\beta}{2} \epsilon_\textup{min}\right) - \frac{1}{2}\right)\left(\coth\left(\frac{\beta}{2} \epsilon_\textup{min}\right) - 1\right) \cdot \frac{n}{\epsilon}\right]\,.
     \end{equation}
\end{proof}

\begin{remark}
    Since the underlying Hilbert space for bosons is infinite dimensional, we cannot map them to qubits without truncation. Simulating bosons without truncation would hence require a continuous-variable (CV) quantum computer, like a photonic quantum computer \cite{O_Brien_2009}. The state of algorithmic primitives for CV systems is not as well developed as for qubit systems, and for example the LCU used for the simulation of the Lindbladian has only recently been analysed for infinite-dimensional systems \cite{lu2025infinitedimensionalextensionlinearcombination}, as the creation and annihilation operators appearing in the algorithm are unbounded. Because of this, we do not yet fully understand the simulation techniques necessary for implementing these algorithmic bosonic Lindbladians.
\end{remark}


\section{Outlook}\label{sec:Outlook}

\paragraph*{Practical quantum advantages.} It is of great interest for the quantum computing community to find practical applications with large end-to-end speed-ups compared to state-of-the-art classical methods. Quantum Gibbs sampling has the potential to be one such application. However, before the era of large-scale fault-tolerant quantum computers, understanding its complexity relies on theoretical results about the mixing times and our work makes a step in this direction. We provide the first proof of efficiency of quantum Gibbs state preparation for weakly-interacting, non-commuting, qudit Hamiltonians, as well as the first rapid mixing result for any bosonic Lindbladian. Nevertheless, it is often said that for a firm quantum advantage, we would need to understand the behaviour of more strongly-correlated regimes, such as the intermediate-strength coupling regime of the Fermi-Hubbard model \cite{Qin2022hubbardmodel}. Hence, it would be of great importance to bound the mixing time of some physically relevant model throughout its whole parameter regime. Such a result would most likely require a plethora of novel proof techniques, tailored to the model at hand, which would also need to be specific enough to avoid any known complexity-theoretic hardness results.

Understanding mixing times of quantum Gibbs samplers experimentally on more near-term hardware is further obstructed by the need of complex algorithmic primitives, like block encodings and linear combinations of unitaries. As such, a recent strand of research focused on developing simplified algorithms, which approximately follow dynamics generated by exactly detailed-balanced Lindbladians\,---\,all while restricting themselves to only readily-implementable primitives, like Hamiltonian simulation via Trotterization \cite{Hahn:2025via,Hahn:2025eqt,Lloyd:2025cvp,ding2025endtoendefficientquantumthermal}.

Furthermore, one could probe the efficiency of quantum Gibbs samplers for larger systems using classical simulations based on tensor network techniques \cite{biamonte2017tensornutshell,mortier2024tensornetworks,zhan2025rapidquantumgroundstate}. The iTEBD and iDMRG algorithms can be even used to understand the mixing times of translationally invariant systems in the thermodynamic limit at least in 1D.\\


\paragraph*{Stability of rapid mixing within gapped Lindbladian phases.} We have covered the case of perturbed non-hopping fermions, but that leaves out many important examples of weakly interacting fermionic systems, like the Fermi-Hubbard model. The techniques currently at hand are falling short of proving general rapid mixing for their corresponding Gibbs samplers. Showing this result for free fermions required first diagonalising the fermions, which violated the locality of the system, and hence couldn't be extended to the perturbed system. The Lindblad operators in the unperturbed Lindbladian are also not short-range (only quasi-local), which prevents term-wise diagonalisation techniques akin to the qudit case presented.

Nevertheless, from \cite{smid2025polynomial,Tong_2025Fast} we already know that the perturbed Lindbladian for weakly interacting fermionic systems remains gapped. Moreover, denoting the Hamiltonian with interaction strength $s\lambda$ by $H(s) = H_0 + s\lambda \cdot V$, and $\mathcal{L}(s)$ its corresponding Lindbladian, then $\{\mathcal{L}(s):s \in [0,1]\}$ represents a sufficiently smooth\,---\,say, with a derivative $\frac{\dd \mathcal{L}(s)}{\dd s}$ bounded uniformly in $s$\,---\,path of quasi-local and detailed-balanced Lindbladians connecting the non-interacting Lindbladian $\mathcal{L}_0 = \mathcal{L}(0)$ to the interacting Lindbladian $\mathcal{L} = \mathcal{L}(1)$, along which the Lindbladians remain uniformly gapped with spectral gaps lower bounded by $\Delta$, which is independent of $s$ and $n$. This then immediately begs the intriguing question:

\begin{tcolorbox}[colback=yellow!15,colframe=yellow!15, boxrule=0pt, left=2pt, right=2pt, top=2pt, bottom=2pt]
\noindent Can we use this knowledge together with rapid mixing of $\mathcal{L}_0$ to prove rapid mixing of $\mathcal{L}$? That is, can we argue that rapid mixing is stable within a {\it topological phase} of Lindbladians?
\end{tcolorbox}

This line of thought can be further motivated by \cite{Cubitt_2015}, where it was shown that local observables evolved under a rapidly mixing Lindbladian are stable, i.e.~that a bounded perturbation of the Lindbladian terms results in a bounded difference of the observables, independent of the evolution time $t$ and the system size $n$. (For global observables, this difference could be polynomially large in $n$.) Intuitively, such a result together with the result on the gap remaining open, hence ensuring that the perturbation does not cross any topological obstacles, should indicate that the mixing time of the perturbed Lindbladian does not change by more than some constant factors, and thus remains rapid. We mark this as an important direction for future research, which would provide a more general method for showing rapid mixing of Lindbladians.\\


\acknowledgements

We thank Toby Cubitt, Daniel Stilck França, Angelo Lucia, and Cambyse Rouzé for helpful discussions.
All authors acknowledge support from the EPSRC Grant number EP/W032643/1. MB acknowledges funding by the European Research Council (ERC Grant Agreement No.~948139) and the Excellence Cluster Matter and Light for Quantum Computing (ML4Q).


\newpage
\bibliographystyle{alphaurl}
\bibliography{bibliography}


\appendix

\newpage
\section{Locality and strength of the Lindbladian}

\begin{lemma}\label{lemma:bounds on delta L}
    Consider a system of $n$ qudits of dimension $d$.
    Let $\mathcal{L}^0$ be the algorithmic Lindbladian associated to the separable Hamiltonian $H^0=\sum_ih_i$ and $\mathcal{L}$ that associated to $H=H^0+\lambda V$. 
    Assume that 
    \begin{enumerate}
        \item $V=\sum_{r\ge 1}\sum_{C\in C(r)}W_C$, where $C(r)$ is the set of balls of radius $r$ and $\max_{C\in C(r)}\|W_C\|\le K \e^{-\nu r}$.
        \item The jump operators for $\mathcal{L},\mathcal{L}^0$ are
        $A_{i,\nu}$ for $i\in [n]$ and $\nu\in \mathcal{J}$ indexing a basis of operators at site $i$.
    \end{enumerate}
    Denote by $\mathcal{L}_i=\sum_{\nu}\mathcal{L}_{i,\nu}$ the Lindbladian associated with jump operators at site $i$ and by $\mathcal{L}^{(r)}_i$ the Lindbladian where $H$ is replaced by $H_{B_r(i)}$, its truncation to a ball of radius $r$ around $i$.
    Then there are constants $C,\chi,C'\ge 0$ such that:
    \begin{align}
        &\| \mathcal{L}^{(r)}_i - \mathcal{L}^{(r-1)}_i\|_{\infty\to\infty}\le C\e^{-\chi r}\equiv \zeta(r)
        \,,\quad 
        \| \mathcal{L}_i - \mathcal{L}^0_i\|_{\infty\to\infty}\le
        C'|\lambda|
        \equiv
        \xi(\lambda)
        \,.
\end{align}
\end{lemma}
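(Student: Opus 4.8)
The plan is to reduce both bounds to estimates on the Lindblad operators $L_{i,\nu}$ and the coherent generators $G_{i,\nu}$ themselves, and then to obtain those estimates from the operator Fourier transform representation together with Duhamel's formula and a Lieb--Robinson bound. Write $\mathcal{L}_{i,\nu}[O]=i[G_{i,\nu},O]+L_{i,\nu}^\dagger O L_{i,\nu}-\tfrac12\{L_{i,\nu}^\dagger L_{i,\nu},O\}$, so that $\mathcal{L}_i=\sum_\nu\mathcal{L}_{i,\nu}$ with $\nu$ ranging over a fixed, $n$-independent index set of size at most $d^2$. Since $X^\dagger OX-Y^\dagger OY$, $\tfrac12\{X^\dagger X-Y^\dagger Y,O\}$ and $i[G-G',O]$ all have $\infty\to\infty$ norm at most an absolute constant times $(\|X\|+\|Y\|)\|X-Y\|$ (resp.\ $\|G-G'\|$), it suffices to bound $\|L_{i,\nu}\|$, the increments $\|L_{i,\nu}^{(r)}-L_{i,\nu}^{(r-1)}\|$ and $\|G_{i,\nu}^{(r)}-G_{i,\nu}^{(r-1)}\|$, and the differences $\|L_{i,\nu}-L_{i,\nu}^0\|$, $\|G_{i,\nu}-G_{i,\nu}^0\|$. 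The uniform bound $\|L_{i,\nu}\|\le\|A_{i,\nu}\|\int|f(t)|\,dt$ is immediate, and $\|G_{i,\nu}\|$ is bounded uniformly in the same way.

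\emph{Perturbation bound.} The crucial point is that $H^0=\sum_jh_j$ is strictly non-interacting, so $e^{iH^0t}A_{i,\nu}e^{-iH^0t}=e^{ih_it}A_{i,\nu}e^{-ih_it}$ stays supported on site $i$ for all $t$. Duhamel's formula gives
\begin{align}
e^{iHt}A_{i,\nu}e^{-iHt}-e^{iH^0t}A_{i,\nu}e^{-iH^0t}
=i\lambda\int_0^t e^{iHs}\big[V,\,e^{iH^0(t-s)}A_{i,\nu}e^{-iH^0(t-s)}\big]e^{-iHs}\,ds\,,
\end{align}
and only the terms $W_C$ of $V$ with $C\ni i$ contribute, so the commutator has norm at most $2\|A_{i,\nu}\|\sum_{C\ni i}\|W_C\|\le 2\|A_{i,\nu}\|K\sum_{\rho\ge1}(2\rho+1)^D e^{-\nu\rho}=:2\|A_{i,\nu}\|KJ$, a constant independent of $n$. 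Hence the displayed difference is at most $2\|A_{i,\nu}\|KJ\,|\lambda|\,|t|$; integrating against $f$ and using $\int|t|\,|f(t)|\,dt<\infty$ for the Gaussian filter gives $\|L_{i,\nu}-L_{i,\nu}^0\|=O(|\lambda|)$ with an explicit $n$-independent constant. For the coherent term one repeats the argument with $(L_{i,\nu}^0)^\dagger L_{i,\nu}^0$ (still site-$i$ supported) in place of $A_{i,\nu}$, integrating against $g$ and using $\int|t|\,|g(t)|\,dt<\infty$, together with $\|L_{i,\nu}^\dagger L_{i,\nu}-(L_{i,\nu}^0)^\dagger L_{i,\nu}^0\|\le 2\max(\|L_{i,\nu}\|,\|L_{i,\nu}^0\|)\,\|L_{i,\nu}-L_{i,\nu}^0\|$, to get $\|G_{i,\nu}-G_{i,\nu}^0\|=O(|\lambda|)$. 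Combining via the algebraic reductions and summing the finitely many $\nu$ yields $\|\mathcal{L}_i-\mathcal{L}_i^0\|_{\infty\to\infty}\le C'|\lambda|$.

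\emph{Truncation bound.} Write $L_{i,\nu}^{(r)}-L_{i,\nu}^{(r-1)}=\int f(t)\big(e^{iH_{B_r(i)}t}A_{i,\nu}e^{-iH_{B_r(i)}t}-e^{iH_{B_{r-1}(i)}t}A_{i,\nu}e^{-iH_{B_{r-1}(i)}t}\big)\,dt$ and split the integral at $|t|=T$ with $T=\Theta(r)$. On $|t|>T$ the integrand has norm at most $2\|A_{i,\nu}\|$, so this part is at most $2\|A_{i,\nu}\|\int_{|t|>T}|f(t)|\,dt$, super-exponentially small in $r$ for the Gaussian filter. On $|t|\le T$, Duhamel gives the difference as $i\int_0^t e^{iH_{B_r(i)}s}\big[H_{B_r(i)}-H_{B_{r-1}(i)},\,A^{(r-1)}(t-s)\big]e^{-iH_{B_r(i)}s}\,ds$ with $A^{(r-1)}(u)=e^{iH_{B_{r-1}(i)}u}A_{i,\nu}e^{-iH_{B_{r-1}(i)}u}$, which is supported in $B_{r-1}(i)$ and, by the Lieb--Robinson bound for $H_{B_{r-1}(i)}$ (valid with an $n$-independent velocity $v$ since $\|W_C\|\le Ke^{-\nu\rho}$ and the $\|h_j\|$ are $O(1)$), concentrated in $B_{v|u|}(i)$. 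Now $H_{B_r(i)}-H_{B_{r-1}(i)}$ is a sum of terms on balls $C$ straddling the boundary of $B_{r-1}(i)$; such a $C$ of radius $\rho$ either satisfies $\rho\ge\tfrac12(r-v|u|)$, so $\|W_C\|\le Ke^{-\nu(r-v|u|)/2}$ directly, or lies outside $B_{v|u|}(i)$, in which case the Lieb--Robinson bound makes its commutator with $A^{(r-1)}(u)$ of order $e^{-\mu(d(i,C)-v|u|)}$ while the requirement that $C$ reaches distance $r$ forces $\rho\ge\tfrac12(r-d(i,C))$; in both cases the geometric sum over such $C$ and over $|u|\le T$, with the surface bound $(2\ell+1)^D$ as in Lemma \ref{lemma:eta}, is at most $\operatorname{poly}(r)\,e^{-\chi' r}$ once $T=r/(4v)$. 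Hence $\|L_{i,\nu}^{(r)}-L_{i,\nu}^{(r-1)}\|\le Ce^{-\chi r}$, and the same splitting — using that $g$ decays at least exponentially (being Gevrey-class by construction) and that $L_{i,\nu}^{(r)\dagger}L_{i,\nu}^{(r)}-L_{i,\nu}^{(r-1)\dagger}L_{i,\nu}^{(r-1)}$ is already controlled — gives $\|G_{i,\nu}^{(r)}-G_{i,\nu}^{(r-1)}\|\le Ce^{-\chi r}$. Propagating through the algebraic reductions and summing over $\nu$ gives the claim.

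\emph{Main obstacle and caveat.} The perturbation bound is essentially routine once one exploits the strict locality of $H^0$. The delicate part is the bulk ($|t|\le T$) term in the truncation estimate, where one must organise the Lieb--Robinson estimate so that the two competing mechanisms — large-radius boundary balls being small in operator norm, smaller far-away balls having exponentially small commutators with the spreading operator — combine into a single clean $e^{-\chi r}$ after optimising $T$, handling all lattice sums with the crude surface bound $(2\ell+1)^D$. A further technical point is that the Gevrey cutoff $\kappa$ in $\hat g$ formally depends on $\|H\|$ (or $\|H_{B_r(i)}\|$, which grows polynomially in $r$), so one should either fix a single $n$-independent $\kappa$ for the whole family or note that $\kappa$ only reshapes $\hat g$ outside the spectrum, so that the relevant decay of $g$ in $t$ is governed by the fixed Gevrey order and all constants remain $n$-independent.
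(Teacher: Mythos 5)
Your overall strategy is sound and, for the perturbation bound, essentially identical to the paper's: the paper also reduces $\|\mathcal{L}_i-\mathcal{L}_i^0\|_{\infty\to\infty}$ to bounds on $\|L_{i,\nu}-L^0_{i,\nu}\|$ and $\|G_{i,\nu}-G^0_{i,\nu}\|$, applies a Duhamel-type estimate (their Lemma A.1 of \cite{smid2025polynomial}) exploiting that $e^{ih_is}A_{i,\nu}e^{-ih_is}$ stays on site $i$, and counts the balls containing $i$ by $(2r+1)^D$ to get the constant $K\sum_{r\ge1}(2r+1)^D e^{-\nu r}$. For the truncation bound the paper simply imports Prop.~II.7 of \cite{smid2025polynomial}, which in turn invokes the Lieb--Robinson bound of \cite{Haah_2021} in the pre-packaged form $\|e^{iHt}A_{i,\nu}e^{-iHt}-e^{iH_{B_r(i)}t}A_{i,\nu}e^{-iH_{B_r(i)}t}\|\le\|A_{i,\nu}\|\min\{2,Je^{-\mu r}(e^{\mu v|t|}-1)\}$ and then integrates $|f(t)|e^{\mu v|t|}$ directly (finite for the Gaussian filter), with no need for an explicit time cutoff $T$. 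Your splitting at $|t|=T=\Theta(r)$ and the case analysis on boundary balls is a from-scratch re-derivation of that statement; it is more laborious but buys nothing extra here, since the tradeoff you optimise over $T$ is exactly what the $\min\{2,\cdot\}$ form of the cited bound already encodes.

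One technical point deserves attention. In your estimate of $\|G_{i,\nu}-G^0_{i,\nu}\|$ you split off a term proportional to $\int|g(t)|\,\|e^{iHt}(L_{i,\nu}^\dagger L_{i,\nu}-(L^0_{i,\nu})^\dagger L^0_{i,\nu})e^{-iHt}\|\,\dd t$, and you only record $\int|t|\,|g(t)|\,\dd t<\infty$, which controls the Duhamel piece but not this one. The function $g$ behaves like $-\tfrac{i}{\beta}\,1/\sinh(2\pi t/\beta)$ near $t=0$, so $\int|g(t)|\,\dd t$ diverges logarithmically and the factor of $|t|$ is genuinely needed to cancel the pole. The paper circumvents this for the first piece by conjugating with $e^{\beta H/4}$ (which leaves the operator norm invariant) to shift the contour, replacing $|g(t)|$ by the integrable $|g(t+i\beta/4)|=\tfrac{1}{\beta}\,1/\cosh(2\pi t/\beta)$. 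Your argument needs this (or an equivalent regularisation via the cutoff $\kappa$) to close; as written, that one integral is not finite. The rest of the proposal, including your remark that $\kappa$ only reshapes $\hat g$ off the spectrum so the constants stay $n$-independent, is consistent with what the paper does.
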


\begin{proof}    
Prop.~II.7 of \cite{smid2025polynomial} already contains the results on  locality. We review that here filling in some details.
Prop.~II.7 of \cite{smid2025polynomial} shows that 
\begin{align}
    &\|L_{i,\nu}^{(r)} - L_{i,\nu}^{(r-1)}\|
    \le 
    C_L \e^{-\mu_L r}
    \,,\quad 
    \mu_L=\mu
    \,,
    \quad 
    C_L
    =(1+\e^{\mu}) 
    J
    \int_{-\infty}^\infty |f(t)|   \e^{\mu v |t|}\ \dd t\,,
\end{align}
where constants $\mu,v,J$ are those of the Lieb-Robinson bound
\cite[Lemma 5]{Haah_2021} for exponentially decaying Hamiltonian interactions
\begin{equation}
    \left\| e^{iHt} A_{i,\nu} e^{-iHt} - e^{iH_{B_r(i)}t} A_{i,\nu}  e^{-iH_{B_r(i)}t}\right\| \leq \|A_{i,\nu}\| \min\left\{ 2, J e^{-\mu r} (e^{\mu v |t|}-1)\right\}\,.
\end{equation}
If we take the Gaussian filter function, $f(t)=\sqrt{\frac{2}{\pi\beta^2}}\e^{-\frac{2}{\beta^2}(t-i\frac{\beta}{4})^2}$, we have:
\begin{align}
    C_L
    &=
    (1+\e^{\mu}) 
    J
    c\,,\\
    c&=
    \int_{-\infty}^\infty    
    |f(t)|
    \e^{\mu v |t|}\ \dd t
    \\
    &=
    \sqrt{\pi}
    \e^{\frac{1}{8}((\beta\mu\nu)^2+1)}
    \left(\text{erf}
    \left(
    \frac{\beta\mu\nu}{2\sqrt{2}}
    \right)
     +1\right)
   \,.
\end{align}

Similarly, from Prop.~II.7 of \cite{smid2025polynomial} we have the following bound (note that in this reference the result is for $\tilde{G}_{i,\nu}=\sigma^{-1/4}G_{i,\nu}\sigma^{1/4}$, but it also applies to $G_{i,\nu}$ since the latter differs only by the presence of $1/\sinh(2\pi |t|/\beta)$ in the integrand, which can be bound like $1/\cosh(2\pi t/\beta)$ by $2\e^{-2\pi |t|/\beta}$):
\begin{align}
    \|G_{i,\nu} - G_{i,\nu}^{(r)}\|
    &\le 
    \frac{8}{\pi}\
    e^{1/8} \e^{-2\pi(\tilde c + r/v)/\beta} + 
    \frac{4 J}{\pi} e^{-\mu r+1/8} \left(e^{-2\pi(\tilde c + r/v)/\beta} -1\right) \\
    &\qquad- 
    \frac{8 c J e^{1/8} }{2\pi - \beta\mu v} \left(e^{-2\pi r /(v\beta) + \tilde c (\mu v - 2\pi /\beta)}-e^{-\mu r}\right)
    \\
    &\le 
    16
    \max\left(\frac{2}{\pi},\frac{ J}{\pi}, \frac{2 c  J}{|2\pi - \beta\mu v|} \right)\e^{\tilde{c}\mu v+1/8}
    \e^{-\min(\mu, \frac{2\pi}{\beta v}) r}
    \equiv 
    \tilde{C}_G \e^{-\mu_G r}\,,
\end{align}
where $\tilde{c}\mu v=
\log(2/(cJ))$
so that
\begin{align}
    \|G_{i,\nu}^{(r)} - G_{i,\nu}^{(r-1)}\|
    \le 
    \tilde{C}_G (1+\e^{\mu_G})\e^{-\mu_G r}
    \equiv 
    C_G \e^{-\mu_G r}
    \,.
\end{align}
Putting these together 
\begin{align}
    \| \mathcal{L}^{(r)}_i - \mathcal{L}^{(r-1)}_i\|_{\infty\to\infty}
    &\le
    \sum_{\nu\in\mathcal{J}}
    \Big(
    \| -i[G^{(r)}_{i,\nu}-G^{(r-1)}_{i,\nu},\cdot] \|_{\infty\to\infty}\\
    &\quad +
    \| (L^{(r)}_{i,\nu})^\dagger(\cdot)L^{(r)}_{i,\nu} - 
    (L^{(r-1)}_{i,\nu})^\dagger(\cdot)L^{(r-1)}_{i,\nu} \|_{\infty\to\infty}
    \\
    &\quad +
    \frac{1}{2}
    \| L^{(r)}_{i,\nu}(L^{(r)}_{i,\nu})^\dagger  (\cdot) - 
    L^{(r-1)}_{i,\nu}(L^{(r-1)}_{i,\nu})^\dagger  (\cdot) 
    \|_{\infty\to\infty}
    \\
    &\quad +
    \frac{1}{2}
    \| 
    (\cdot)L^{(r)}_{i,\nu}(L^{(r)}_{i,\nu})^\dagger   - 
    (\cdot) L^{(r-1)}_{i,\nu}(L^{(r-1)}_{i,\nu})^\dagger  \|_{\infty\to\infty}
    \Big)
    \\
    &\le 
    |\mathcal{J}|
    \left(
    2\| G^{(r)}_{i,\nu} - G^{(r-1)}_{i,\nu} \|
    +
    2\| L^{(r)}_{i,\nu} - L^{(r-1)}_{i,\nu} \|
    +
    2\| L^{(r)}_{i,\nu} - L^{(r-1)}_{i,\nu} \|
    \right)
    \\
    &\le
    2|\mathcal{J}|C_G\e^{-\mu_G r}
    +
    4|\mathcal{J}|C_L\e^{-\mu_L r}
    \\
    &\le 
    2|\mathcal{J}|
    \max(C_G,2C_L) \e^{-\min(\mu_G,\mu_L) r}
    \\
    &=
    2|\mathcal{J}|
    \max(C_G,2C_L) \e^{-\min(\mu,\frac{2\pi}{\beta v}) r}
    \equiv
    C\e^{-\chi r}
    \,.    
\end{align}
For bounding $\| \mathcal{L}_i-\mathcal{L}^0_i\|_{\infty\to\infty}$, similarly, we need to 
bound $\| L_{i,\nu} - L^0_{i,\nu} \|$ and $\| G_{i,\nu} - G^0_{i,\nu} \|$.
Note that $L^0_{i,\nu}$ and $G^0_{i,\nu}$ are supported at site $i$ only.
We have
\begin{align}
    \| L_{i,\nu} - L^0_{i,\nu} \|
    \le 
    \int_{-\infty}^\infty |f(t)|  
    \| \e^{iHt}A_{i,\nu}\e^{-iHt}-\e^{ih_it}A_{i,\nu}\e^{-ih_it}\|
    \ \dd t
\end{align}
and using Lemma A.1 of \cite{smid2025polynomial}:
\begin{align}
    \| L_{i,\nu} - L^0_{i,\nu} \|
    \le 
    |\lambda|
    \int_{-\infty}^\infty |f(t)|  
    |t|
    \max_{s\in[0,t]}
    \|[V,\e^{ih_is}A_{i,\nu}\e^{-ih_is}]\|
    \ \dd t
    \,.
\end{align}
Now by assumption, $V=\sum_{r\ge 1}\sum_{C\in C(r)}W_C$, where $C(r)$ is the set of balls of radius $r$ and $\max_{C\in C(r)}\|W_C\|\le K \e^{-\nu r}$. The number of terms $C\in C(r)$ that contain the site $i$ and thus have non-zero commutator equals $|B_i(r)|\le (2r+1)^D$, since each point in this ball can be seen as the centre of the support of a term $C$ that contains $i$. Then
\begin{align}
    \max_{s\in[0,t]}
    \|[V,\e^{ih_is}A_{i,\nu}\e^{-ih_is}]\|
    &\le 
    \max_{s\in[0,t]}
    \sum_{r\ge 1}(2r+1)^D
    \max_{C\in C(r)}
    \|[W_C,\e^{ish_i}A_{i,\nu}\e^{-ish_i}]\|\\
    &\le 
    2
    \sum_{r\ge 1}(2r+1)^D
    \max_{C\in C(r)}
    \|W_C \|
    \\
    &\le 
    2 K
    \sum_{r\ge 1}(2r+1)^D
    \e^{-\nu r}
\end{align}
and so
\begin{align}
    \| L_{i,\nu} - L^0_{i,\nu} \|
    &\le 
    2 K
    |\lambda|
    \int_{-\infty}^\infty |f(t)|  
    |t|
    \sum_{r\ge 1}(2r+1)^D
    \e^{-\nu r}
    \ \dd t
    \equiv 
    C_L'
    |\lambda| 
    \\
    C'_L
    &=
    2 K
    \int_{-\infty}^\infty |f(t)|  
    |t|
    \ \dd t
    \sum_{r\ge 1}(2r+1)^D
    \e^{-\nu r}\\
    &=
    2 K
    \int_{-\infty}^\infty 
    \sqrt{\frac{2}{\pi\beta^2}}
    \e^{1/4}
    \e^{-\frac{4}{\beta^2}t^2}
    |t|
    \ \dd t
    \sum_{r\ge 1}(2r+1)^D
    \e^{-\nu r}\\
    &=
    \frac{K }{\sqrt{2\pi}}
    \Phi_{D}(\e^{-\nu})
    \beta
    \,,\quad 
    \Phi_D(x)
    =
    \sum_{r\ge 1}(2r+1)^D
    x^{r}
\end{align}
Then, we look at $\| G_{i,\nu} - G^0_{i,\nu} \|$ and proceed similarly to Lemma III.8 in \cite{smid2025polynomial}. We have
\begin{align}
    G_{i,\nu} - G^0_{i,\nu}
    &=
    \int_{-\infty}^\infty 
    \dd t \,
    g(t) \Big( 
    e^{iHt} 
    \Big(
    L_{i,\nu}^\dagger L_{i,\nu}
    -
    (L_{i,\nu}^0)^\dagger L^0_{i,\nu}
    \Big) e^{-iHt}  \\
    &\qquad\qquad\qquad +
    e^{iHt} (L_{i,\nu}^0)^\dagger L_{i,\nu}^0 e^{-iHt}
    -
    e^{ih_it} (L_{i,\nu}^0)^\dagger L_{i,\nu}^0 e^{-ih_it}
    \Big)\,,
\end{align}
and so
\begin{align}
    \|G_{i,\nu}-G_{i,\nu}^0\|
    &\le 
    \int_{-\infty}^\infty 
    \dd t \, |g(t)| \Big( 
    \|e^{iHt} 
    \left(
    L_{i,\nu}^\dagger L_{i,\nu}
    -
    (L_{i,\nu}^0)^\dagger L^0_{i,\nu}
    \right)
    e^{-iHt}\|
    +
    \\
    &\qquad\qquad\qquad +
    \|e^{iHt} (L_{i,\nu}^0)^\dagger L_{i,\nu}^0 e^{-iHt}
    -
    e^{ih_it} (L_{i,\nu}^0)^\dagger L_{i,\nu}^0 e^{-ih_it}
    \|\,.
\end{align}
For the first term we have, using invariance of the operator norm by conjugation with $\e^{\beta/4 H}$, that
\begin{align}
    \int_{-\infty}^\infty 
    \dd t \, |g(t)| \Big( 
    \|e^{iHt} 
    \left(
    L_{i,\nu}^\dagger L_{i,\nu}
    -
    (L_{i,\nu}^0)^\dagger L^0_{i,\nu}
    \right)
    e^{-iHt}\|
    =
    \int_{-\infty}^\infty 
    \dd t \, |g(t+i\beta/4)| 
    \|
    L_{i,\nu}^\dagger L_{i,\nu}
    -
    (L_{i,\nu}^0)^\dagger L^0_{i,\nu}
    \|\,.
\end{align}
Now $g(t+i\beta/4)=-\frac{i}{\beta}\frac{1}{\cosh(2\pi/\beta t)}$ so that the integral converges and we can upper bound the first term by
\begin{align}
    2
    \int_{-\infty}^\infty 
    \dd t \, |g(t+i\beta/4)| 
    \|
    L_{i,\nu}
    -
    L^0_{i,\nu}
    \|
    \le 
    2
    C_L'|\lambda|
    \int_{-\infty}^\infty 
    \dd t \, |g(t+i\beta/4)| 
    \,.
\end{align}
Using again Lemma A.1 of \cite{smid2025polynomial} the second term is, denoting $B_{i,\nu}=(L_{i,\nu}^0)^\dagger L_{i,\nu}^0$:
\begin{align}
    \int_{-\infty}^\infty 
    \dd t \, |g(t)| 
    \|e^{iHt} B_{i,\nu} e^{-iHt}
    -
    e^{ih_it} B_{i,\nu}
    e^{-ih_it}
    \|
    &\le 
    |\lambda|
    \int_{-\infty}^\infty\dd t
    |g(t)t|
    \max_{s\in[0,t]}
    \|[V,\e^{ih_is}B_{i,\nu}
    \e^{-ih_is}]\|\\
    &\le 
    4K
    |\lambda| 
    \sum_{r\ge 1}(2r+1)^D
    \e^{-\nu r}
    \int_{-\infty}^\infty |g(t)t| 
    \,.
\end{align}
Note that the integral converges since the presence of $t$ in the integrand cancels the pole at $t=0$ of $g(t)$.
Finally,
\begin{align}
    \|G_{i,\nu}-G_{i,\nu}^0\|
    &\le 
    \frac{2}{\sqrt{2\pi}}
    K\Phi_D(\e^{-\nu})\beta
    |\lambda|
    \int_{-\infty}^\infty 
    \dd t \, |g(t+i\beta/4)| 
    +
    4K
    \Phi_{D}(\e^{-\nu})
    |\lambda|
    \int_{-\infty}^\infty\dd t 
    |g(t)t| 
    \\
    &=
    \frac{2}{\sqrt{2\pi}}
    K\Phi_D(\e^{-\nu})\beta
    |\lambda|
    \frac{1}{2}
    +
    4K
    \Phi_{D}(\e^{-\nu})
    |\lambda|
    \frac{\beta}{16}
    \\
    &=
    K\Phi_{D}(\e^{-\nu})\beta\left(
    \frac{1}{\sqrt{2\pi}}
    +\frac{1}{4}\right)|\lambda|
    \equiv 
    C_G' |\lambda|
    \,.
\end{align}
Then
\begin{align}
    \|\mathcal{L}_i-\mathcal{L}_i^0\|_{\infty\to\infty}
    \le 
    \sum_{\nu\in \mathcal{J}}
    2\| G_{i,\nu} - G^0_{i,\nu} \|
    +4\| L_{i,\nu} - L^0_{i,\nu} \|
    \le 
    |\mathcal{J}|
    (2C_G'+4C_L')|\lambda|\equiv C'|\lambda|
    \,.
\end{align}
\end{proof}

\newpage
\section{Lieb-Robinson bounds and localisation for odd fermionic operators}
\label{sec: Localising odd fermionic operators}

For our proofs and the explicit calculations of Section \ref{sec: Evaluating explicit parameters for FH model}, we require localising odd fermionic operators, e.g.~bounds of the form \begin{align}
    \left\|c_i(t) - c^{(r)}_i(t)\right\| \leq J e^{-\mu r} (e^{\mu v t}-1)\,,\label{eqn: Lieb-Robinson we want to show}
\end{align} where $c_i^{(r)}(t)$ represents a truncation of $c_i(t)$ to $B_r(i)$. However, previous literature studying explicit Lieb-Robinson bounds for fermionic systems, like the Fermi-Hubbard model \cite{Wang2020}, provides only bounds on the anticommutators of the form $\|\{c_i(t),c_j\}\| \leq \epsilon(d_{ij},t)$. To the best of our knowledge, subsequent rigorous treatments connecting bounds on commutators/anticommutators with those on locality are restricted to only even fermionic operators \cite{nachtergaele2018lieb}. In this section, we expand upon this work by showing how to localise odd fermionic operators, and we provide an explicit bound of the form \eqref{eqn: Lieb-Robinson we want to show} for the Fermi-Hubbard model on a $D$-dimensional cubic lattice.

In order to map a fermionic operator to its local approximation, we require the notion of a conditional expectation. As was shown in \cite{araki2003} and discussed in \cite{nachtergaele2018lieb}, if we require a treatment of both odd and even operators, then for any subset $X \subseteq \Lambda$ of the lattice, there exists a unique conditional expectation $\mathbb{F}^\Lambda_X:\mathcal{A}_\Lambda \to \mathcal{A}_X$, where $\mathcal{A}_B$ is the CAR algebra generated by the creation and annihilation operators on the set $B$, which leaves the tracial state invariant, i.e.~satisfies \begin{equation}\Tr\left( \mathbb{F}_X^\Lambda (A B) \right) = \Tr\left( \mathbb{F}_X^\Lambda (A) B \right)\end{equation} for all $A \in \mathcal{A}_\Lambda$ and $B \in \mathcal{A}_X$. This conditional expectation can then be expressed in its Krauss form as \begin{align}
    \mathbb{F}^\Lambda_X(A) = \frac{1}{4^{|\Lambda\backslash X|}} \sum\limits_{\alpha \in I_{\Lambda\backslash X}} \tilde{u}(\alpha)^\dagger A \tilde{u}(\alpha)\,.
\end{align}
Here, for each site $i\in \Lambda$, we define 4 unitary generators of the algebra \begin{align}
    u_i^{(0)} = I,\quad u_i^{(1)} = c_i^\dagger + c_i,\quad u_i^{(2)}=c_i^\dagger - c_i,\quad u_i^{(3)}=I-2c_i^\dagger c_i\,.
\end{align} Then for any subset $X \subseteq \Lambda$, we consider a multi-index $\alpha \in I_{\Lambda\backslash X} = \{0,1,2,3\}^{|\Lambda\backslash X|}$ and a fixed ordering of the sites $\Lambda\backslash X = \{i_1,\dots,i_{|\Lambda\backslash X|}\}$, and define the unitary operators $u(\alpha) \in \mathcal{A}_{\Lambda\backslash X}$ by \begin{align}
    u(\alpha) = u_{i_1}^{(\alpha(i_1))} \cdots u_{i_{|\Lambda\backslash X|}}^{(\alpha(i_{|\Lambda\backslash X|}))}\,.
\end{align} Finally, depending on the parity of $u(\alpha)$, we define \begin{align}
    \tilde{u}(\alpha) = \begin{cases}
        u(\alpha) & \text{if } \alpha \in I^+_{\Lambda\backslash X}\,,\\
        (-1)^{N_X} \cdot u(\alpha) & \text{if } \alpha \in I^-_{\Lambda\backslash X}\,,
    \end{cases}
\end{align} where $N_X = \sum\limits_{i\in X} c_i^\dagger c_i$ is the number operator on the subset $X$, and $I^\pm_{\Lambda\backslash X}$ denotes the set of multi-indices $\alpha$ corresponding to even or odd operators $u(\alpha)$ respectively.

With this setup, we can prove the following lemma:
\begin{lemma}\label{lemma: localising odd fermionic operators}
    Let $A\in \mathcal{A}_{\Lambda}^-$ be an odd fermionic operator, $X \subseteq \Lambda$ a subset of the lattice, and $\epsilon > 0$. If we have $\|[A,B]\| \leq \epsilon \|B\|$ for all even operators $B\in \mathcal{A}^+_{\Lambda \backslash X}$ and $\|\{A,B\}\| \leq \epsilon \|B\|$ for all odd operators $B\in \mathcal{A}^-_{\Lambda \backslash X}$, then there exists $A' = \mathbb{F}_X^\Lambda(A) \in \mathcal{A}^-_X$ such that $\|A-A'\| \leq \epsilon$.
\end{lemma}
\begin{proof}
    Consider \begin{align}
        A-\mathbb{F}_X^\Lambda(A) &= \frac{1}{4^{|\Lambda\backslash X|}} \sum\limits_{\alpha \in I_{\Lambda\backslash X}} \left(A- \tilde{u}(\alpha)^\dagger A \tilde{u}(\alpha)\right)\\
        &= \frac{1}{4^{|\Lambda\backslash X|}} \sum\limits_{\alpha \in I^+_{\Lambda\backslash X}} \left(A- u(\alpha)^\dagger A u(\alpha)\right) + \frac{1}{4^{|\Lambda\backslash X|}} \sum\limits_{\alpha \in I^-_{\Lambda\backslash X}} \left(A- u(\alpha)^\dagger (-1)^{N_X} A (-1)^{N_X} u(\alpha)\right)\\
        &= \frac{1}{4^{|\Lambda\backslash X|}} \sum\limits_{\alpha \in I^+_{\Lambda\backslash X}} u(\alpha)^\dagger [u(\alpha),A] + \frac{1}{4^{|\Lambda\backslash X|}} \sum\limits_{\alpha \in I^-_{\Lambda\backslash X}} \left(A- u(\alpha)^\dagger (-1)^{2\cdot N_{\Lambda\backslash X}} (-1)^{N_X} A (-1)^{N_X} (-1)^{2\cdot N_{\Lambda\backslash X}} u(\alpha)\right)\\
        &= \frac{1}{4^{|\Lambda\backslash X|}} \sum\limits_{\alpha \in I^+_{\Lambda\backslash X}} u(\alpha)^\dagger [u(\alpha),A] + \frac{1}{4^{|\Lambda\backslash X|}} \sum\limits_{\alpha \in I^-_{\Lambda\backslash X}} \left(A- u(\alpha)^\dagger (-1)^{\cdot N_{\Lambda\backslash X}} (-1)^{N_\Lambda} A (-1)^{N_\Lambda} (-1)^{\cdot N_{\Lambda\backslash X}} u(\alpha)\right)\\
        &= \frac{1}{4^{|\Lambda\backslash X|}} \sum\limits_{\alpha \in I^+_{\Lambda\backslash X}} u(\alpha)^\dagger [u(\alpha),A] + \frac{1}{4^{|\Lambda\backslash X|}} \sum\limits_{\alpha \in I^-_{\Lambda\backslash X}} \left(A+ u(\alpha)^\dagger (-1)^{\cdot N_{\Lambda\backslash X}} A (-1)^{\cdot N_{\Lambda\backslash X}} u(\alpha)\right)\\
        &= \frac{1}{4^{|\Lambda\backslash X|}} \sum\limits_{\alpha \in I^+_{\Lambda\backslash X}} u(\alpha)^\dagger [u(\alpha),A] + \frac{1}{4^{|\Lambda\backslash X|}} \sum\limits_{\alpha \in I^-_{\Lambda\backslash X}} u(\alpha)^\dagger (-1)^{\cdot N_{\Lambda\backslash X}} \{ (-1)^{\cdot N_{\Lambda\backslash X}} u(\alpha),A\}\,,
    \end{align} where the second-to-last equality follows as $A\in \mathcal{A}_\Lambda ^-$ is odd.
    Next, by taking the norm, and using triangle inequality together with the unitarity of $u(\alpha)$'s, we obtain \begin{align}
        \|A-\mathbb{F}_X^\Lambda(A)\| &\leq \frac{1}{4^{|\Lambda\backslash X|}} \sum\limits_{\alpha \in I^+_{\Lambda\backslash X}} \| [u(\alpha),A]\| + \frac{1}{4^{|\Lambda\backslash X|}} \sum\limits_{\alpha \in I^-_{\Lambda\backslash X}} \| \{ (-1)^{\cdot N_{\Lambda\backslash X}} u(\alpha),A\}\|\\
        &\leq \epsilon\,,
    \end{align} where the final inequality follows from the assumptions of the lemma.
\end{proof}

We may further decompose this upper bound to only involve the anticommutators with single-site creation and annihilation operators supported outside of the truncation subset. First observe that $
    \{BC,A\} = BCA + ABC = B \{C,A\} + [A,B]C\,,
$ and hence get that \begin{equation}
    \| \{ (-1)^{\cdot N_{\Lambda\backslash X}} u(\alpha),A\}\| \leq \|\{u(\alpha),A\}\| + \|[(-1)^{\cdot N_{\Lambda\backslash X}}, A]\|\,,
\end{equation} so that \begin{align}
        \|A-\mathbb{F}_X^\Lambda(A)\| &\leq \frac{1}{4^{|\Lambda\backslash X|}} \sum\limits_{\alpha \in I^+_{\Lambda\backslash X}} \| [u(\alpha),A]\| + \frac{1}{4^{|\Lambda\backslash X|}} \sum\limits_{\alpha \in I^-_{\Lambda\backslash X}} \| \{ u(\alpha),A\}\| + \frac{1}{2}\|[(-1)^{\cdot N_{\Lambda\backslash X}}, A]\|\,.
    \end{align}
Next, by writing $(-1)^{\cdot N_{\Lambda\backslash X}} = e^{i\pi N_{\Lambda\backslash X}}$ and using the formula $
    [A,e^B] = \int\limits_0^1 e^{(1-s)B} [A,B]e^{sB}\ \dd s $, which holds for any operators $A$ and $B$, we can find that \begin{equation}
        \|[(-1)^{\cdot N_{\Lambda\backslash X}}, A]\| \leq \pi \cdot \|[A,N_{\Lambda\backslash X}]\| \leq \pi \cdot \sum_{i \in \Lambda\backslash X} \|[A,c_i^\dagger c_i]\| \leq \pi \cdot \sum_{i \in \Lambda\backslash X} \left(\|\{A,c_i^\dagger\}\| +  \|\{A,c_i\}\|\right)\,.
    \end{equation}
Regarding the $\|[A,u(\alpha)]_\pm\|$ terms, observe that for each $u(\alpha) = u_{i_1}^{(\alpha(i_1))} \cdots u_{i_{|\Lambda\backslash X|}}^{(\alpha(i_{|\Lambda\backslash X|}))} \in \mathcal{A^\pm}_{\Lambda\backslash X}$, we can decompose \begin{equation}
    \|[A,u(\alpha)]_\mp\| \leq \sum_{i \in \Lambda\backslash X} \|[A,u_i^{(\alpha(i))}]_{\overline{\sigma}(\alpha(i)))}\|\,,
\end{equation} where now each $u_i$ term is supported only at the site $i$ and the commutator/anticommutator is taken appropriately according to its parity. Finally, we have that \begin{equation}
    \|[A,u_i^{(0)}]\| = 0,\quad \|\{A,u_i^{(1,2)}\}\| \leq \|\{A,c_i^\dagger\}|| + \|\{A,c_i\}||,\quad \|[A,u_i^{(3)}]\| \leq 2\|\{A,c_i^\dagger\}|| + 2\|\{A,c_i\}||\,,
\end{equation} and so we can get the bound \begin{align}
    \|A-\mathbb{F}_X^\Lambda(A)\| &\leq \left(2+\frac{\pi}{2}\right) \cdot\sum_{i \in \Lambda\backslash X} \left(\|\{A,c_i^\dagger\}\| +  \|\{A,c_i\}\|\right)\,.
\end{align}

To conclude, we wish to obtain an explicit bound of the form \eqref{eqn: Lieb-Robinson we want to show} for the Fermi-Hubbard model on a $D$-dimensional cubic lattice, given by \begin{equation}
    H_\text{FH} = -t\sum_{\langle i,j\rangle, \sigma} \left( c^\dagger _{i,\sigma} c_{j,\sigma} + c^\dagger_{j,\sigma}c_{i,\sigma} \right) + U\sum_{i=1}^n N_{i,\uparrow} N_{i,\downarrow}\,.
\end{equation} Following \cite[Theorem 1]{hastings2010localityquantumsystems}, the only terms appearing in the Hamiltonian have diameter $0$ (the on-site interaction between different spins) and $1$ (the hopping of fermions between nearest neighbours). Hence, whenever we have \begin{align}
    |U| + 8D|t|e^\mu \leq s\,,
\end{align} we find that \begin{align}
    \| [c_i^{(\dagger)}(\tau),B]\|\leq 2e^{-\mu r} (e^{2s|\tau|}-1) \|B\|
\end{align} for any even operator $B \in \mathcal{A}^+_{\Lambda\backslash B_r(i)}$, and \begin{align}
    \| \{c_i^{(\dagger)}(\tau),B\}\|\leq 2e^{-\mu r} (e^{2s|\tau|}-1) \|B\|
\end{align} for any odd operator $B \in \mathcal{A}^-_{\Lambda\backslash B_r(i)}$. Using Lemma \ref{lemma: localising odd fermionic operators}, we then find the following bound on locality of creation/annihilation operators: \begin{align}
    \left\|c_i^{(\dagger)}(\tau) - c^{(\dagger)(r)}_i(\tau)\right\| \leq 2e^{-\mu r} (e^{2(|U| + 8D|t|e^\mu)|\tau|}-1)\,. \label{eqn: explicit bound on locality in FH model}
\end{align}
Note that this bound holds for any $\mu>0$. This degree of freedom comes from the fact that we are working with a short-range model, not a quasi-local one, and as such, its Lieb-Robinson bounds are actually decaying super-exponentially. However, we are only using the exponentially decaying bound, and so have the freedom to choose the coefficient in the exponent.

\newpage
\section{Strongly-interacting Fermi-Hubbard model:~eigenvectors $F_i^{(15,16)}$}\label{appendix: eigenvectors}

\rotatebox{90}{%
\parbox{1.5\textwidth}{
\tiny{\begin{align}
    F_i^{(15)} &\propto \frac{2 \left(2 f_+^2 f_-^4+\left(f_+^4+\left(\sqrt{4 f_-^4+4 \left(f_+^2-3\right) f_-^2+f_+^4-2 f_+^2+9}-5\right) f_+^2+6\right) f_-^2-2 f_+^4+2 \sqrt{4 f_-^4+4 \left(f_+^2-3\right) f_-^2+f_+^4-2 f_+^2+9} f_+^2+7 f_+^2-3 \sqrt{4 f_-^4+4 \left(f_+^2-3\right) f_-^2+f_+^4-2 f_+^2+9}-9\right)}{\left(f_-^2+f_+^2-2\right) \left(2 f_-^2+f_+^2+\sqrt{4 f_-^4+4 \left(f_+^2-3\right) f_-^2+f_+^4-2 f_+^2+9}+3\right) \left(2 f_-^4+\left(f_+^2+\sqrt{4 f_-^4+4 \left(f_+^2-3\right) f_-^2+f_+^4-2 f_+^2+9}-1\right) f_-^2-f_+^2+\sqrt{4 f_-^4+4 \left(f_+^2-3\right) f_-^2+f_+^4-2 f_+^2+9}+3\right)}\cdot I\\&-\frac{2 f_+^2 f_-^4+\left(f_+^4+\left(\sqrt{4 f_-^4+4 \left(f_+^2-3\right) f_-^2+f_+^4-2 f_+^2+9}-5\right) f_+^2+6\right) f_-^2-2 f_+^4+2 \sqrt{4 f_-^4+4 \left(f_+^2-3\right) f_-^2+f_+^4-2 f_+^2+9} f_+^2+7 f_+^2-3 \sqrt{4 f_-^4+4 \left(f_+^2-3\right) f_-^2+f_+^4-2 f_+^2+9}-9}{2 \left(f_-^2+f_+^2-2\right) \left(2 f_-^4+\left(f_+^2+\sqrt{4 f_-^4+4 \left(f_+^2-3\right) f_-^2+f_+^4-2 f_+^2+9}-1\right) f_-^2-f_+^2+\sqrt{4 f_-^4+4 \left(f_+^2-3\right) f_-^2+f_+^4-2 f_+^2+9}+3\right)}\cdot N_{i,\downarrow}\\&-\frac{2 f_+^2 f_-^4+\left(f_+^4+\left(\sqrt{4 f_-^4+4 \left(f_+^2-3\right) f_-^2+f_+^4-2 f_+^2+9}-5\right) f_+^2+6\right) f_-^2-2 f_+^4+2 \sqrt{4 f_-^4+4 \left(f_+^2-3\right) f_-^2+f_+^4-2 f_+^2+9} f_+^2+7 f_+^2-3 \sqrt{4 f_-^4+4 \left(f_+^2-3\right) f_-^2+f_+^4-2 f_+^2+9}-9}{2 \left(f_-^2+f_+^2-2\right) \left(2 f_-^4+\left(f_+^2+\sqrt{4 f_-^4+4 \left(f_+^2-3\right) f_-^2+f_+^4-2 f_+^2+9}-1\right) f_-^2-f_+^2+\sqrt{4 f_-^4+4 \left(f_+^2-3\right) f_-^2+f_+^4-2 f_+^2+9}+3\right)}\cdot N_{i,\uparrow}\\&+N_{i,\downarrow} N_{i,\uparrow}
\end{align}
\bigskip\bigskip
\begin{align}
    F_i^{(16)} &\propto -\frac{2 \left(-2 f_+^2 f_-^4+\left(-f_+^4+\left(\sqrt{4 f_-^4+4 \left(f_+^2-3\right) f_-^2+f_+^4-2 f_+^2+9}+5\right) f_+^2-6\right) f_-^2+2 f_+^4+2 \sqrt{4 f_-^4+4 \left(f_+^2-3\right) f_-^2+f_+^4-2 f_+^2+9} f_+^2-7 f_+^2-3 \sqrt{4 f_-^4+4 \left(f_+^2-3\right) f_-^2+f_+^4-2 f_+^2+9}+9\right)}{\left(f_-^2+f_+^2-2\right) \left(-2 f_-^2-f_+^2+\sqrt{4 f_-^4+4 \left(f_+^2-3\right) f_-^2+f_+^4-2 f_+^2+9}-3\right) \left(-2 f_-^4+\left(-f_+^2+\sqrt{4 f_-^4+4 \left(f_+^2-3\right) f_-^2+f_+^4-2 f_+^2+9}+1\right) f_-^2+f_+^2+\sqrt{4 f_-^4+4 \left(f_+^2-3\right) f_-^2+f_+^4-2 f_+^2+9}-3\right)}\cdot I\\
    &-\frac{-2 f_+^2 f_-^4+\left(-f_+^4+\left(\sqrt{4 f_-^4+4 \left(f_+^2-3\right) f_-^2+f_+^4-2 f_+^2+9}+5\right) f_+^2-6\right) f_-^2+2 f_+^4+2 \sqrt{4 f_-^4+4 \left(f_+^2-3\right) f_-^2+f_+^4-2 f_+^2+9} f_+^2-7 f_+^2-3 \sqrt{4 f_-^4+4 \left(f_+^2-3\right) f_-^2+f_+^4-2 f_+^2+9}+9}{2 \left(f_-^2+f_+^2-2\right) \left(-2 f_-^4+\left(-f_+^2+\sqrt{4 f_-^4+4 \left(f_+^2-3\right) f_-^2+f_+^4-2 f_+^2+9}+1\right) f_-^2+f_+^2+\sqrt{4 f_-^4+4 \left(f_+^2-3\right) f_-^2+f_+^4-2 f_+^2+9}-3\right)}\cdot N_{i,\downarrow}\\
    &-\frac{-2 f_+^2 f_-^4+\left(-f_+^4+\left(\sqrt{4 f_-^4+4 \left(f_+^2-3\right) f_-^2+f_+^4-2 f_+^2+9}+5\right) f_+^2-6\right) f_-^2+2 f_+^4+2 \sqrt{4 f_-^4+4 \left(f_+^2-3\right) f_-^2+f_+^4-2 f_+^2+9} f_+^2-7 f_+^2-3 \sqrt{4 f_-^4+4 \left(f_+^2-3\right) f_-^2+f_+^4-2 f_+^2+9}+9}{2 \left(f_-^2+f_+^2-2\right) \left(-2 f_-^4+\left(-f_+^2+\sqrt{4 f_-^4+4 \left(f_+^2-3\right) f_-^2+f_+^4-2 f_+^2+9}+1\right) f_-^2+f_+^2+\sqrt{4 f_-^4+4 \left(f_+^2-3\right) f_-^2+f_+^4-2 f_+^2+9}-3\right)}\cdot N_{i,\uparrow}\\
    &+N_{i,\downarrow} N_{i,\uparrow}
\end{align}

}}}%

\end{document}